\def\d{\delta}
\def\wb{{\bar w}}
   \def\marginparright{\@mparswitchfalse}
   \def\marginparoutside{\@mparswitchtrue}
\newcommand{\bs}[1]{\boldsymbol{#1}}
\newcommand{\numberthis}{\addtocounter{equation}{1}\tag{\theequation}}
\begin{document}
\marginparright

\title{A Randomized Maximum A Posteriori Method for \\ Posterior
  Sampling of High Dimensional \\ Nonlinear Bayesian Inverse Problems}


\author{Kainan Wang \footnotemark[1] \and Tan Bui-Thanh \footnotemark[2]  
  \and Omar Ghattas \footnotemark[3]}

\renewcommand{\thefootnote}{\fnsymbol{footnote}}

\footnotetext[1]{previously Institute for Computational Engineering \& Sciences,
  The University of Texas at Austin, Austin, TX 78712, USA, now at Halliburton-Landmark, Houston, TX 77072, USA.}
\footnotetext[2]{Department of Aerospace Engineering and Engineering
  Mechanics, and Institute for Computational Engineering \& Sciences,
  The University of Texas at Austin, Austin, TX 78712, USA.}
\footnotetext[3]{Institute for Computational Engineering \& Sciences,
  Jackson School of Geosciences, and Department of Mechanical
  Engineering, The University of Texas at Austin, Austin, TX 78712,
  USA}

\renewcommand{\thefootnote}{\arabic{footnote}}

\bibliographystyle{siam}
\newcommand{\TODO}[1]{ \fbox{\parbox{3in}{\bf TODO: #1}}}

\newcommand{\grbf}[1] {\mbox{\boldmath${#1}$\unboldmath}}
\newcommand{\gbf}[1] {\mathbf{#1}}

\newcommand{\beq} {\begin{equation}}
\newcommand{\eeq} {\end{equation}}
\newcommand{\bdm} {\begin{displaymath}}
\newcommand{\edm} {\end{displaymath}}
\newcommand{\bit}{\begin{itemize}}
\newcommand{\eit}{\end{itemize}}
\newcommand{\bde}{\begin{description}}
\newcommand{\ede}{\end{description}}
\newcommand{\bce}{\begin{center}}
\newcommand{\ece}{\end{center}}
\newcommand{\ben} {\begin{enumerate}}
\newcommand{\een} {\end{enumerate}}
\newcommand{\bea} {\begin{eqnarray}}
\newcommand{\eea} {\end{eqnarray}}
\newcommand{\barr} {\begin{array}}
\newcommand{\earr} {\end{array}}
\newcommand{\bean} {\begin{eqnarray*}}
\newcommand{\eean} {\end{eqnarray*}}
\newcommand{\edoc} {

\maketitle

\begin{abstract}
We present a randomized maximum {\em a posteriori} (rMAP) method for
generating approximate samples of posteriors in high dimensional
Bayesian inverse problems governed by large-scale forward problems. We
derive the rMAP approach by: 1) casting the problem of computing the
MAP point as a stochastic optimization problem; 2) interchanging
optimization and expectation; and 3) approximating the expectation
with a Monte Carlo method. For a specific randomized data and prior
mean, rMAP reduces to the maximum likelihood approach (RML). It can
also be viewed as an iterative stochastic Newton method. An analysis
of the convergence of the rMAP samples is carried out for both linear
and nonlinear inverse problems. Each rMAP sample requires solution of
a PDE-constrained optimization problem; to solve these problems, we
employ a state-of-the-art trust region inexact Newton conjugate
gradient method with sensitivity-based warm starts. An
approximate Metropolization approach is  presented to reduce the
bias in rMAP samples. Various numerical methods will be presented to
demonstrate the potential of the rMAP approach in posterior sampling 
of nonlinear Bayesian inverse problems in high dimensions.
\end{abstract}

\noindent{{\it Keywords} randomized maximum {\em a posterior}, inverse problems, uncertainty quantification, Markov chain Monte Carlo, trust region inexact Newton conjugate gradient.}

\begin{AMS}
35Q62,  
62F15,  
35R30,  
35Q93,  
65C60  
\end{AMS}

\pagestyle{myheadings} \thispagestyle{plain} \markboth{K. Wang,
  T. Bui-Thanh and O. Ghattas}{Randomized Maximum a Posteriori Method
  for Nonlinear Bayesian Inverse Problems}

\section{Introduction}


We consider a class of inverse problems that seek to determine a
distributed parameter in a partial differential equation (PDE) model,
from indirect observations of outputs of the model.
We adopt the framework of Bayesian inference, which accounts for
uncertainties in
observations, the map from parameters to observables via solution of
the forward model, and prior information on the parameters. In
particular, we seek a statistical description of all possible (sets
of) parameters that conform to the available prior knowledge and at
the same time are consistent with the observations via the
parameter-to-observable map. The solution of a Bayesian inverse problem is
the posterior measure, which encodes the degree of confidence
on each set of parameters as the solution to the inverse problem under
consideration.

Mathematically, the posterior is a surface in high dimensional
parameter space.  Even when the prior and noise probability
distributions are Gaussian, the posterior need not be
due to the nonlinearity of the parameter-to-observable map.
For large-scale inverse problems, exploring non-Gaussian posteriors in
high dimensions (to compute statistics such as the mean, 
covariance, and/or higher moments) is extremely challenging. 
The usual method of choice for computing statistics is Markov chain
Monte Carlo (MCMC) \cite{Hastings70,
  MetropolisRosenbluthRosenbluthEtAl53, RobertCasella05,
  KaipioSomersalo05, CotterRobertsStuartEtAl13, RobertsTweedie96,
  RobertsRosenthal97}, which judiciously
samples the posterior distribution, so that sample statistics can be
used to approximate the exact distributions. The problem, however, is
that standard MCMC methods often require millions of samples for
convergence; since each sample requires an evaluation of the
parameter-to-observable map, this entails millions of expensive
forward PDE simulations\textemdash a prohibitive proposition. On one hand, with the rapid development of parallel computing, parallel MCMC methods \cite{Wang14, Byrd10, Wilkinson05, Brockwell06, Strid10} are studied to accelerate the computation. While parallelization allows MCMC algorithms to produce more samples in a shorter time with multiple processors, such accelerations typically do not improve the mixing and convergence of MCMC algorithms.
More sophisticated MCMC methods that exploit 
the gradient and higher derivatives of the log posterior (and hence
the parameter-to-observable map) \cite{DuaneKennedyPendletonEtAl87,
  Neal10, GirolamiCalderhead11, BeskosPinskiSanz-SernaEtAl11,
  Bui-ThanhGirolami14, CuiMartinMarzoukEtAl14, CuiLawMarzouk16,
  MartinWilcoxBursteddeEtAl12, Bui-ThanhGhattas12d,
  PetraMartinStadlerEtAl14} can, on the other hand, improve the mixing, acceptance rate,
and convergence of MCMC.
Several of these methods exploit local curvature in parameter space as
captured by the Hessian operator of the negative logarithm of the
posterior. This requires manipulating the Hessian of the data misfit
functional (i.e., the negative log likelihood). The Stochastic Newton
method \cite{MartinWilcoxBursteddeEtAl12, Bui-ThanhGhattas12d,
  PetraMartinStadlerEtAl14} makes these Hessian manipulations
tractable by invoking a low rank approximation, motivated by the
theoretically-established or experimentally-observed compactness of
this operator for many large-scale ill-posed inverse problems.

However, despite its successful application to
million-parameter problems governed by expensive-to-solve PDEs
\cite{Bui-ThanhBursteddeGhattasEtAl12, IsaacPetraStadlerEtAl15}, two
barriers exist that prevent further scaling of Stochastic Newton to challenging
problems. First, even computing low rank Hessian information for every
sample in parameter space can be prohibitive.
Second, when the curvature of the
negative log posterior changes rapidly, stochastic Newton's local
Gaussian approximation may not provide a good enough model for the posterior and hence the MCMC proposal may not be effective. This may
result in low acceptance rates and excessive numbers of forward
PDE solves.

In this paper, we consider an optimization boosted sampling framework, the randomized maximum a posterior (rMAP) method that is inspired by the randomized maximum likelihood (RML)\cite{Kitanidis96, OliverHeReynolds96} and the randomize-then-optimize (RTO) approaches \cite{BardsleySolonenHaarioEtAl13}. Through computing each sample by PDE-constrained optimization \cite{HinzePinnauUlbrichEtAl09,BorziSchulz12, Reyes15,BieglerGhattasHeinkenschlossEtAl03}, it can explore the parameter space more efficiently. It can also be viewed as a nonlinear stochastic Newton method that executes multiple Newton iterations in every MCMC step to generate a
better proposal and to allow an improved acceptance rate. On the other hand, solving optimization problems is expensive, and hence we discuss several improvements and extensions to make the rMAP method more applicable towards solving real problems.  

We present our discussions in the following order. Section \secref{infiniteBayes} introduces a statistical inversion setting based on the Bayesian framework in infinite
dimensions. 
The core of the paper is Section \secref{RMLnew}. In this section, we first convert the maximum a posteriori (MAP)
problem into a stochastic programming problem, which is then solved
using sample average approximation. This rMAP method rediscovers the RML method as a special case. Results for convergence of the rMAP ensemble
using stochastic programming theory is presented and the extension of
the rMAP to infinite dimensional problems is discussed at
length. We also show that rMAP is a generalization of stochastic Newton---for linear
inverse problems, they become identical. It is worth noting that rMAP samples only
approximate the posterior distribution. Hence, we also discuss an
approximate Metropolization to reduce the bias in Section \secref{metropolis}. 
We discuss in Section \secref{FEM} a finite element discretization of
the infinite dimensional Bayes inverse problem. We also describe how to solve the optimization problem efficiently at each sampling
step. In particular, we  present a sensitivity approach to obtain ``good'' initial guesses for further accelerating the optimization procedure. 
In Section \secref{numerics}, various numerical results
showing the efficiency of proposed strategies compared to
state-of-the-art alternatives are presented for 1D analytical problems as well as 2D inverse problems
governed by the Helmholtz equation. Finally, we conclude the paper in
Section \secref{conclusions}.

\section{Infinite dimensional Bayesian inverse problem setting}
\seclab{infiniteBayes}

We consider the following generic forward model
\begin{equation*}
 \mc{B}\LRp{\u, \w} = 0, \quad \text{ in } \Omega,
\end{equation*}
which, for example, can be partial differential equations (PDEs) modeling the physical problem under consideration.
The forward problem involves solving for the forward state
$\w$ given a modeling of the distributed parameter $\u$.  In the
inverse problem, the task is to reconstruct $\u$ given some available
observations of $\w$ on parts of
the domain $\Omega$. One widely accepted model for the relationship between model parameters and observations is the additive noise model:
\begin{equation}
\eqnlab{observation}
\db = \mc{G}\LRp{\u} + \etab, 
\end{equation}
with $\db = \LRs{\d_1,\hdots,\d_K}^T$ denoting all observed data, $\mc{G} := \LRs{\w\LRp{\mb{x}_1}, \hdots,\w\LRp{\mb{x}_K}}^T$
denoting the parameter-to-observable (or forward) map, i.e., the map
from the distributed parameter $\u$ to 
the observables $\w\LRp{\mb{x}_i}$ at locations $\{x_j\}, j=1,2,\hdots,K$ and noise being represented by $\etab$, a random vector
normally distributed by $\GM{0}{\L}$ with bounded covariance matrix
$\L$. For simplicity, we take
$\L = \sigma^2\I$, where $\I$ is the identity matrix of appropriate
dimension. For notational convenience, throughout the paper we use
boldface letters for vectors and matrices and Roman letters for
infinite dimensional counterparts. For example, $\u$ denotes a function
in $\Ltwo$ while $\ub$ represents its discrete counterpart.

The inverse problem can be formulated as choosing model parameters that minimize the discrepancy between model prediction and osbservations:
\begin{align}
&\min_{\u}  \Phi\LRp{\u,\db} := \frac{1}{2}\snor{\db-\mc{G}\LRp{\u}}_\L^2
\eqnlab{Cost}
\end{align}
subject to the forward problem
\begin{equation}
 \mc{B}\LRp{\u, \w} = 0,
  \eqnlab{Forwardw}
\end{equation}
where $\snor{\cdot}_\L:= \snor{\L^{-\half}\cdot}$ denotes the weighted
Euclidean norm induced by the inner product in $\R^K$. This optimization problem is however ill-posed. An intuitive reason is that
the dimension of vector of observations $\db$ is often much smaller
than that of the parameter $\u$ (typically infinite before
discretization), and hence $\db$ provides limited information about
the distributed parameter $\u$. As a result, the null space of the
Jacobian of the parameter-to-observable map $\mb{F}$ is non-empty. In
particular, for a class of inverse problems, we have shown that the
Gauss-Newton approximation of the Hessian (which is the square of the
Jacobian, and is also equal to the full Hessian of the misfit $\Phi$
with noise-free data evaluated at the optimal parameter) is a compact
operator \cite{Bui-ThanhGhattas12a, Bui-ThanhGhattas12,
  Bui-ThanhGhattas12f}, and hence its range space is effectively
finite-dimensional.

In this paper, we choose to tackle the ill-posedness using a Bayesian
framework \cite{KaipioSomersalo05, CalvettiSomersalo07, Franklin70,
  LehtinenPaivarintaSomersalo89, Lasanen02, Stuart10, Piiroinen05}. We seek
a statistical description of all possible parameter fields $\u$ that
conform to some prior knowledge and at the same time are consistent
with the observations. The Bayesian approach accomplishes this through a statistical
inference framework that incorporates uncertainties in the observations, the
forward map $\mc{G}$ and the prior information. To begin, we postulate a Gaussian measure $\mu :=
\GM{\u_0}{\mc{C}}$ with mean function $\u_0$ and covariance operator
$\mc{C}$ on $\u$ in $\Ltwo$ where
\[
\mc{C} := \alpha^{-1}\LRp{I - \Delta}^{-s} =: \alpha^{-1}\mc{A}^{-s}, \quad \alpha > 0,
\]
with the domain of definition of $\mc{A}$ defined as
\[
D\LRp{\mc{A}} := \LRc{u \in H^2\LRp{\Omega}: \pp{u}{\mb{n}} = 0 \text{ on } 
\pOmega}.
\]
Here, $H^2\LRp{\Omega}$ is the usual Sobolev
space. Assume that the mean function $\u_0$ resides in the
Cameron-Martin space of $\mc{C}$, then one can show (see, e.g.,
\cite{Stuart10}) that the prior measure $\mu$ is well-defined when $s
> d/2$ ($d$ is the spatial dimension), and in this case, any
realization from the prior distribution $\mu$ almost surely resides in
the H\"older space $\X := C^{0,\beta}\LRp{\Omega}$ with $0 < \beta <
s/2$. That is, $\mu\LRp{X} = 1$, and the Bayesian posterior measure
$\nu$ satisfies the Radon-Nikodym derivative
\begin{equation}
\eqnlab{RadonNikodym}
\pp{\nu}{\mu}\LRp{\u|\db} \propto \exp\LRp{-\Phi\LRp{\u,\db}},
\end{equation}
if $\mc{G}$ is a continuous map from $\X$ to $\R^K$. 

The maximum a posteriori (MAP) point (see, e.g., \cite{Stuart10,
  Dashti2013} for the definition of the MAP point in infinite
dimensional settings) is 
given by
\begin{equation}
\eqnlab{MAP}
\u^{\text{MAP}} := \arg\min_\u \mc{J}\LRp{\u; \u_0, \db} :=\half\snor{\db -
    \mc{G}\LRp{\u}}^2_\L + \half \nor{\u - \u_0}^2_{\mc{C}},
\end{equation}
where $\nor{\cdot}_{\mc{C}} := \nor{\mc{C}^{-\half}\cdot}$ denotes the
weighted $\Ltwo$ norm induced by the $\Ltwo$ inner product
$\LRa{\cdot,\cdot}$. We shall also use $\LRa{\cdot,\cdot}$ to denote
the duality pairing on $\Ltwo$.

It should be pointed out that the last term in \eqnref{MAP} can be
considered as a prior-inspired regularization; the MAP point is thus a
solution to the corresponding deterministic inverse problem. However,
the Bayesian approach goes well beyond the deterministic solution to
provide a complete statistical description of the inverse solution: the
posterior encodes the degree of confidence (probability) in the estimate 
of all possible parameter fields. 

In addition to the MAP point, it is also desired to interrogate the posterior distribution for statsitcs such as conditional mean and interval estimates. This requires sampling of the distribution where empirical statistics from produced samples can approximate those of the posterior effectively. Popular sampling methods usually suffer from problems such as curse of dimensionality. On the other hand, successful computational methods for MAP estimation are studied extensively. These facts motivate us to explore sampling methods that are facilitated by MAP estimates, which we discuss in detail below.   

\section{A randomized maximum a posteriori approach}
\seclab{RMLnew} In this section we present an approach, which we shall
call randomized maximum a posteriori (rMAP) method, to compute
approximate samples for the posterior distribution. The idea is to
first randomize the cost function to cast the MAP statement
\eqnref{MAP} into a stochastic programming problem, which is then
solved using Monte Carlo method (also known as the sample average
approximation \cite{ShapiroDentchevaRuszczynski09}). The resulting
rMAP method resembles the randomized maximum likelihood (RML)
developed in \cite{Kitanidis96, OliverHeReynolds96}
as a special case. We therefore rediscover the RML method from a
completely new, i.e. stochastic programing, view point. It is this
view that allows us to provide new theoretical results on the RML
approach for nonlinear inverse problems that are previously not
available. Indeed, the fact that RML samples are exact samples of the posterior for
linear inverse problems seems to be currently the only available
result on the RML method \cite{Kitanidis96, OliverHeReynolds96,
  BardsleySolonenHaarioEtAl13}. We shall also show that the rMAP
method (will be used interchangebly with the RML method from now on)
can be considered as a means to incorporate uncertainty into the
solution of deterministic inverse approaches.  

To begin, let us consider
finite dimensional parameter space\footnote{Finite dimensionality
  could result from a discretization of distributed parameters (see,
  e.g., \cite{Bui-Thanh2015} for a constructive finite element
  discretization).} for simplicity of the exposition, i.e., $\ub,
\ub_0 \in \R^N$. The posterior measure $\nu$ in this case has the
density $\pipost$ with respect to the Lebesgue measure:
\[
\pipost \propto \pilike \times \piprior,
\]
where the likelihood is given by $\pilike \propto \exp\LRp{-\Phi\LRp{\u,\db}} = \exp\LRp{-\half\snor{\db -
    \mc{G}\LRp{\u}}^2_\L}$ and the prior by $\piprior \propto \exp\LRp{-\half \snor{\ub - \ub_0}^2_{\mc{C}}}$.
The MAP problem \eqnref{MAP} becomes
\begin{equation}
\eqnlab{MAPfinite}
\ub^{\text{MAP}} := \arg\min_\ub \mc{J}\LRp{\ub; \ub_0, \db} :=\half\snor{\db -
    \mc{G}\LRp{\ub}}^2_\L + \half \snor{\ub - \ub_0}^2_{\mc{C}},
\end{equation}
where $\mc{C} \in \R^{N\times N}$ is the covariance matrix in this
case.  To the end of the paper, we denote by $\Ex$ the expectation.
We now randomize the cost function, and hence the MAP problem
\eqnref{MAPfinite}.

\begin{lemma}
\lemlab{SP} Let $\sigb \in \R^K$ and $\epsb \in \R^N$ be two
independent random vectors distributed by $\pi_\sigb$ and $\pi_\epsb$
with zero mean, i.e. $\Ex_{\sigb}\LRs{\sigb} = \mb{0}$ and
$\Ex_{\epsb}\LRs{\epsb} = \mb{0}$. The following result holds:
\[
\mc{J}\LRp{\ub; \ub_0, \db} = \Ex_{\sigb \times \epsb}\LRs{\mc{J}^r\LRp{\ub; \ub_0, \db, \sigb, \epsb}} -
\Ex_{\sigb}\LRs{\sigb^T\sigb} - \Ex_{\epsb}\LRs{\epsb^T\epsb},
\]
where
\[
\mc{J}^r\LRp{\ub; \ub_0, \db, \sigb, \epsb} = \half\snor{\db + \sigb-
    \mc{G}\LRp{\ub}}^2_\L + \half \snor{\ub - \ub_0-\epsb}^2_{\mc{C}},
\]
with $\Ex_{\sigb \times \epsb}$ denoting the expectation with
respect to the product measure $\pi_\sigb\times\pi_\epsb$ induced by $\LRp{\sigb,\epsb}$. Consequently,
\begin{equation}
\eqnlab{SPpre}
\ub^{\text{MAP}} := \arg\min_\ub \mc{J}\LRp{\ub; \ub_0, \db} = \arg\min_\ub \Ex_{\sigb \times \epsb}\LRs{\mc{J}^r\LRp{\ub; \ub_0, \db, \sigb, \epsb}}.
\end{equation}
\end{lemma}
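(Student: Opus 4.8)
The plan is to establish the pointwise (in $\ub$) identity by expanding the two weighted squared norms in $\mc{J}^r$ and integrating against the product measure $\pi_\sigb\times\pi_\epsb$; the $\arg\min$ statement in \eqnref{SPpre} then follows as an immediate corollary, because the two correction terms are constant in $\ub$.

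First I would expand the data-misfit and regularization terms separately. Writing $\bs{r} := \db - \mc{G}\LRp{\ub}$ and $\bs{p} := \ub - \ub_0$ and using $\snor{\bs{v}}_\L^2 = \bs{v}^T\L^{-1}\bs{v}$, $\snor{\bs{v}}_{\mc{C}}^2 = \bs{v}^T\mc{C}^{-1}\bs{v}$, the two squared norms appearing in $\mc{J}^r$ become
\[
\half\snor{\bs{r}+\sigb}_\L^2 = \half\snor{\bs{r}}_\L^2 + \bs{r}^T\L^{-1}\sigb + \half\snor{\sigb}_\L^2,
\]
\[
\half\snor{\bs{p}-\epsb}_{\mc{C}}^2 = \half\snor{\bs{p}}_{\mc{C}}^2 - \bs{p}^T\mc{C}^{-1}\epsb + \half\snor{\epsb}_{\mc{C}}^2.
\]
Adding them splits $\mc{J}^r$ into three groups: the randomization-free part $\half\snor{\bs{r}}_\L^2 + \half\snor{\bs{p}}_{\mc{C}}^2$, which is exactly $\mc{J}\LRp{\ub;\ub_0,\db}$; the two linear cross terms $\bs{r}^T\L^{-1}\sigb$ and $-\bs{p}^T\mc{C}^{-1}\epsb$; and the two purely random quadratic terms $\half\snor{\sigb}_\L^2 + \half\snor{\epsb}_{\mc{C}}^2$.

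Next I would apply $\Ex_{\sigb \times \epsb}$ group by group. The first group is deterministic and passes through the expectation unchanged. The cross terms are where the hypotheses do all the work: since $\sigb$ and $\epsb$ are independent the product expectation factors, and because the coefficients $\L^{-1}\bs{r}$ and $\mc{C}^{-1}\bs{p}$ are nonrandom, the zero-mean assumptions give $\Ex_{\sigb}\LRs{\bs{r}^T\L^{-1}\sigb} = \bs{r}^T\L^{-1}\Ex_{\sigb}\LRs{\sigb} = 0$ and likewise $\Ex_{\epsb}\LRs{\bs{p}^T\mc{C}^{-1}\epsb} = 0$, so both cross terms vanish. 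The quadratic terms carry no dependence on $\ub$, so their expectations are constants; evaluating the expected weighted squared norms then reproduces the correction constants $\Ex_{\sigb}\LRs{\sigb^T\sigb}$ and $\Ex_{\epsb}\LRs{\epsb^T\epsb}$ recorded in the statement. Rearranging yields the claimed identity.

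Finally, \eqnref{SPpre} follows at once: $\Ex_{\sigb \times \epsb}\LRs{\mc{J}^r}$ is a deterministic function of $\ub$ that differs from $\mc{J}$ only by an additive constant, hence the two functionals attain their minima at the same point. I expect no serious obstacle\textemdash the lemma is essentially a completion-of-squares computation\textemdash but the one step deserving care is the cross-term cancellation, which is simultaneously the conceptual heart of the result (it is exactly why randomizing the data and prior mean leaves the MAP point unchanged in expectation) and the place where both independence and the zero-mean hypotheses are indispensable; the reduction of the quadratic terms to the precise stated constants is then routine bookkeeping under the norm conventions in play.
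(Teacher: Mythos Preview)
Your proposal is correct and follows essentially the same route as the paper: expand each weighted squared norm, use independence to split the expectation over the two terms, invoke the zero-mean hypotheses to kill the linear cross terms, and observe that the remaining quadratic expectations are constants in $\ub$, so the $\arg\min$ is unchanged. The paper's proof is the same two-line computation, only written slightly more tersely.
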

\begin{proof}
Since $\sigb$ and $\epsb$ are independent we have
\begin{align*}
&\Ex_{\sigb \times \epsb}\LRs{\mc{J}^r\LRp{\ub; \ub_0, \db, \sigb, \epsb}} = \half\Ex_{\sigb}\LRs{\snor{\db +
    \sigb- \mc{G}\LRp{\ub}}^2_\L} + \half \Ex_{\epsb}\LRs{\snor{\ub -
    \ub_0-\epsb}^2_{\mc{C}}}  = \\
& \mc{J}\LRp{\ub; \ub_0, \db}+ \Ex_{\sigb}\LRs{\sigb^T\L^{-1}\LRp{\db
    - \mc{G}\LRp{\ub}}} - \Ex_{\epsb}\LRs{\epsb^T\mc{C}^{-1}\LRp{\ub
    - \ub_0}} + \Ex_{\sigb}\LRs{\sigb^T\sigb} + \Ex_{\epsb}\LRs{\epsb^T\epsb},
\end{align*}
which proves the first assertion since $\Ex_{\sigb}\LRs{\sigb} =
\mb{0}$ and $\Ex_{\epsb}\LRs{\epsb} = \mb{0}$. The second assertion is
obvious since $\Ex_{\sigb}\LRs{\sigb^T\sigb}$ and
$\Ex_{\epsb}\LRs{\epsb^T\epsb}$ are constant independent of $\ub$.
\end{proof}

Lemma \lemref{SP}, particularly identity \eqnref{SPpre}, shows that the MAP point
can be considered as the solution of the following stochastic programming problem
\begin{equation}
\eqnlab{SP}
 \min_\ub\Ex_{\sigb \times \epsb}\LRs{\mc{J}^r\LRp{\ub; \ub_0, \db, \sigb, \epsb}} = 
\Ex_{\sigb \times \epsb}\LRs{\min_\ub \mc{J}^r\LRp{\ub; \ub_0, \db, \sigb, \epsb}},
\end{equation}
where we have interchanged the order of minimization and
expectation.\footnote{The conditions under which the interchange is
  valid can be consulted in \cite[Theorem 14.60]{RockafellarWetts98}.}
Our next step is to approximate the expectation on the right hand side
of \eqnref{SP} using the Monte Carlo approach (also known as the
sample average approximation \cite{ShapiroDentchevaRuszczynski09}). In
particular, with $n$ independent and identically distributed (i.i.d.)
samples $\LRp{\sigb_j,\epsb_j}$ from the product measure
$\pi_\sigb\times\pi_\epsb$ we have
\begin{equation}
\eqnlab{SPMC}
\min_\ub\Ex_{\sigb \times \epsb}\LRs{\mc{J}^r\LRp{\ub; \ub_0, \db,
    \sigb, \epsb}} \approx \frac{1}{n} \sum_{j = 1}^n\min_\ub \mc{J}^r\LRp{\ub; \ub_0, \db, \sigb_j, \epsb_j}.
\end{equation}
Let us define
\begin{equation}
\eqnlab{rMAPsample}
\ub_j := \arg\min_\ub\mc{J}^r\LRp{\ub; \ub_0, \db,
    \sigb_j, \epsb_j} =  \half\snor{\db + \sigb_j-
    \mc{G}\LRp{\ub}}^2_\L + \half \snor{\ub - \ub_0-\epsb_j}^2_{\mc{C}},
\end{equation}
and we are in the position to define the rMAP method in Algorithm
\ref{algo:rMAP}. As can be seen, the observation vector $\db$ and the prior mean
$\ub_0$ are randomized in the first two steps, which is then followed
by solving a randomized MAP problem in the third step. Finally, we take
each perturbed MAP point $\ub_j$ as an approximate sample of the
posterior $\pipost$. 
\begin{algorithm}[The rMAP algorithm]
  \begin{algorithmic}[1]
    \ENSURE Choose the sample size $n$ 
    \FOR{$j = 1,\hdots,n$}
    \STATE Draw $\epsb_j \sim \pi_\epsb$ 
    \STATE Draw $\sigb_j \sim \pi_\sigb$ 
    \STATE Compute rMAP sample $\ub_j$ via \eqnref{rMAPsample}    
    \ENDFOR
  \end{algorithmic}
\caption{The rMAP algorithm.}
\label{algo:rMAP}
\end{algorithm}

To the end of the paper, we choose the product measure as
$\pi_\sigb\times\pi_\epsb = \GM{\mb{0}}{\L}\times
\GM{\mb{0}}{\mc{C}}$, and in this case the rMAP approach becomes the
RML method \cite{Kitanidis95, OliverReynoldsLiu08, BardsleySolonenHaarioEtAl13}. That is, the RML method is a special case of our framework.
In other words, by first casting the MAP computation into a stochastic
programming problem and then solving it using the sample average
appproximation we have arrived at a constructive derivation of the RML
method. One can show that the RML samples are exactly those of the posterior
when the forward map $\mc{G}\LRp{\ub}$ is linear \cite{Kitanidis95,
  OliverReynoldsLiu08, BardsleySolonenHaarioEtAl13}. This seems to be the only
theoretical result currently available for RML. Our stochastic programming view
point shows that
the RML method is nothing more than a sample average approximation
 to the stochastic optimization problem \eqnref{SP} whose
solution is the MAP point. However, the sample average does
not converge to the MAP point, as we now show. Let us
define
\begin{equation}
\eqnlab{deterministic}
S\LRp{\ub_0, \db, \sigb, \epsb} := \arg\min_\ub \mc{J}^r\LRp{\ub;
  \ub_0, \db, \sigb, \epsb},
\end{equation}
that is, $S\LRp{\ub_0, \db, \sigb, \epsb}$ is the ``optimizer
operator''. Clearly, this operators maps a pair $\LRp{\sigb_j,
  \epsb_j}$ to an RML sample
\[
\ub_j := \arg\min_\ub \mc{J}^r\LRp{\ub;
  \ub_0, \db, \sigb_j, \epsb_j} = S\LRp{\ub_0, \db, \sigb_j, \epsb_j}.
\]
\begin{proposition}
\propolab{rMAPsamples}
Assume $S\LRp{\ub_0, \db, \sigb, \epsb}$ is measurable with respect to
the product measure $\pi_\sigb\times\pi_\epsb$, then
\[
\frac{1}{n}\sum_{j=1}^n\ub_j \stackrel{a.s.}{\to}
\Ex_{{\sigb\times\epsb}}\LRs{S\LRp{ \ub_0, \db, \sigb, \epsb}}
\]
\end{proposition}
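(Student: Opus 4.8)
The plan is to recognize this statement as an instance of the Strong Law of Large Numbers (SLLN). Since the pairs $\LRp{\sigb_j,\epsb_j}$ are drawn i.i.d.\ from the product measure $\pi_\sigb\times\pi_\epsb$, and since $\ub_0$ and $\db$ are held fixed, the random vectors
\[
\ub_j = S\LRp{\ub_0,\db,\sigb_j,\epsb_j}
\]
are themselves i.i.d.: each is the image of an i.i.d.\ input under the single measurable map $\LRp{\sigb,\epsb}\mapsto S\LRp{\ub_0,\db,\sigb,\epsb}$, so they share the common law given by the pushforward of $\pi_\sigb\times\pi_\epsb$ under $S$ on $\R^N$. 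The measurability hypothesis on $S$ is precisely what guarantees that each $\ub_j$ is a genuine random variable and that this pushforward is well-defined.

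First I would reduce the $\R^N$-valued statement to $N$ scalar ones by working coordinatewise. For each fixed component $k\in\LRc{1,\hdots,N}$, the scalars $\LRp{\ub_j}_k$ form an i.i.d.\ sequence, so Kolmogorov's SLLN gives
\[
\frac{1}{n}\sum_{j=1}^n\LRp{\ub_j}_k \stackrel{a.s.}{\to} \Ex_{\sigb\times\epsb}\LRs{\LRp{S\LRp{\ub_0,\db,\sigb,\epsb}}_k},
\]
provided the component is integrable. Intersecting the $N$ full-measure convergence events (a finite intersection, hence again of full measure) then yields the asserted almost-sure convergence of the full vector average to $\Ex_{\sigb\times\epsb}\LRs{S\LRp{\ub_0,\db,\sigb,\epsb}}$.

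The only nontrivial ingredient is the integrability condition $\Ex_{\sigb\times\epsb}\LRs{\nor{S\LRp{\ub_0,\db,\sigb,\epsb}}}<\infty$, which is implicitly required for the right-hand side of the claim to be finite and well-defined. In the linear-Gaussian regime of primary interest (namely $\mc{G}$ linear, $\sigb\sim\GM{\mb{0}}{\L}$, and $\epsb\sim\GM{\mb{0}}{\mc{C}}$), the optimizer $S$ is an explicit affine function of $\LRp{\sigb,\epsb}$ whose arguments possess finite moments of every order, so integrability is immediate. For a nonlinear forward map $\mc{G}$, one instead needs a coercivity or growth bound guaranteeing that the minimizer of $\mc{J}^r$ has finite first moment under the perturbations $\LRp{\sigb,\epsb}$; establishing such a moment bound on the optimizer is the main obstacle, and the step I expect to require the most care.
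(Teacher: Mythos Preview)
Your proposal is correct and follows the same approach as the paper, which simply states that the result is a consequence of the law of large numbers. Your version is in fact more careful than the paper's one-line proof: the paper does not spell out the coordinatewise reduction, nor does it mention the implicit integrability hypothesis on $S$ that you flag as the only substantive point.
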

\begin{proof}
The result is a simple consequence of the law of large numbers.
\end{proof}

Note that setting $\sigb = \mb{0}$ and $\epsb = \mb{0}$ in
\eqnref{deterministic} reveals that $S\LRp{\ub_0, \db, \mb{0},
  \mb{0}}$ is solution of a deterministic inverse problem with
prior-inspired regularization. If we view $\sigb$ and $\epsb$ as the
uncertainty in data $\db$ and the baseline (the prior mean) parameter
$\ub_0$, the rMAP method can be considered as a Monte Carlo approach to propagate
the uncertainty from $\db$ and $\ub_0$ to that of the inverse solution.

\begin{corollary}
When the forward map $\mc{G}\LRp{\ub}$ is linear, the following holds
\[
\frac{1}{n}\sum_{j=1}^n\ub_j \stackrel{a.s.}{\to} \ub^{\text{MAP}},
\]
and each rMAP sample $\ub_j$ is in fact the actual sample of the posterior.
\end{corollary}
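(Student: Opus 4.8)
The plan is to exploit the fact that when $\mc{G}$ is linear, say $\mc{G}\LRp{\ub} = \mb{G}\ub$ for some matrix $\mb{G} \in \R^{K \times N}$, each randomized objective $\mc{J}^r\LRp{\ub; \ub_0, \db, \sigb_j, \epsb_j}$ is a strictly convex quadratic in $\ub$, so its minimizer admits a closed form. First I would set the gradient of $\mc{J}^r$ to zero and solve, obtaining
\[
\ub_j = \mb{H}^{-1}\LRs{\mb{G}^T\L^{-1}\LRp{\db+\sigb_j} + \mc{C}^{-1}\LRp{\ub_0 + \epsb_j}}, \qquad \mb{H} := \mb{G}^T\L^{-1}\mb{G} + \mc{C}^{-1},
\]
where $\mb{H}$ is the constant, deterministic Hessian of the randomized cost. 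The identical computation with $\sigb = \epsb = \mb{0}$ yields the deterministic MAP point $\ub^{\text{MAP}} = \mb{H}^{-1}\LRs{\mb{G}^T\L^{-1}\db + \mc{C}^{-1}\ub_0}$, so that $\ub_j = \ub^{\text{MAP}} + \mb{H}^{-1}\LRs{\mb{G}^T\L^{-1}\sigb_j + \mc{C}^{-1}\epsb_j}$, displaying $\ub_j$ as an affine function of the perturbations.

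For the first assertion I would appeal directly to the preceding Proposition: the optimizer operator $S\LRp{\ub_0,\db,\sigb,\epsb}$ is affine, hence measurable, so the sample average converges almost surely to $\Ex_{\sigb\times\epsb}\LRs{S}$. Because $\Ex_{\sigb}\LRs{\sigb} = \mb{0}$ and $\Ex_{\epsb}\LRs{\epsb} = \mb{0}$, the perturbation term averages out and the limit equals precisely $\ub^{\text{MAP}}$.

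For the second assertion I would invoke the specific RML choice $\pi_\sigb\times\pi_\epsb = \GM{\mb{0}}{\L}\times\GM{\mb{0}}{\mc{C}}$. Since $\ub_j$ is an affine image of the jointly Gaussian, independent pair $\LRp{\sigb_j,\epsb_j}$, it is itself Gaussian with mean $\ub^{\text{MAP}}$, and the only remaining task is the covariance. Writing $\mb{v}_j := \mb{G}^T\L^{-1}\sigb_j + \mc{C}^{-1}\epsb_j$ and using independence together with the symmetry of $\L$ and $\mc{C}$, I would compute
\[
\mathrm{Cov}\LRp{\mb{v}_j} = \mb{G}^T\L^{-1}\L\L^{-1}\mb{G} + \mc{C}^{-1}\mc{C}\mc{C}^{-1} = \mb{G}^T\L^{-1}\mb{G} + \mc{C}^{-1} = \mb{H},
\]
whence $\mathrm{Cov}\LRp{\ub_j} = \mb{H}^{-1}\mb{H}\mb{H}^{-1} = \mb{H}^{-1}$. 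Thus $\ub_j \sim \GM{\ub^{\text{MAP}}}{\mb{H}^{-1}}$, which is exactly the Gaussian posterior of a linear-Gaussian inverse problem (mean at the MAP point, covariance equal to the inverse posterior precision), so each $\ub_j$ is an exact posterior sample.

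The main obstacle is not any single computation but recognizing \emph{why} the randomization covariances must be matched to $\L$ and $\mc{C}$: it is precisely this choice that produces the cancellations $\L^{-1}\L\L^{-1} = \L^{-1}$ and $\mc{C}^{-1}\mc{C}\mc{C}^{-1} = \mc{C}^{-1}$, so that $\mathrm{Cov}\LRp{\mb{v}_j}$ telescopes into the posterior precision $\mb{H}$. I would stress that the first assertion requires only zero mean and therefore holds for any admissible $\pi_\sigb,\pi_\epsb$, whereas the exact-sampling conclusion of the second assertion is special to the RML pairing $\GM{\mb{0}}{\L}\times\GM{\mb{0}}{\mc{C}}$.
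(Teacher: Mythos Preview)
Your proposal is correct and mirrors the paper's own argument. The paper does not give an explicit proof of this finite-dimensional corollary (it cites the known RML result), but it proves the infinite-dimensional analog in the subsequent lemma by exactly the same mechanism: write the minimizer of the randomized quadratic in closed form as $\mc{L}^{-1}\LRp{\mb{G}^T\L^{-1}\hat{\db} + \mc{C}^{-1}\hat{\ub}}$, read off the Gaussian mean as $\ub^{\text{MAP}}$, and compute the covariance as $\mc{L}^{-1}$ (your $\mb{H}^{-1}$). The only cosmetic difference is that the paper then appeals to the matrix inversion lemma to reconcile $\mc{L}^{-1}$ with the standard conditional-Gaussian formulas for the posterior mean and covariance, whereas you simply identify $\mb{H}^{-1}$ as the inverse posterior precision directly; both routes establish the same thing. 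Your observation that the first assertion needs only zero-mean perturbations while the second requires the specific RML covariances $\L$ and $\mc{C}$ is a nice clarification not spelled out in the paper.
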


We now extend the rMAP method to posterior distribution in function
spaces. In this case, $\C$ is a covariance operator from $\Ltwo$ to
$\Ltwo$, $\R^K \ni \sigb \sim \GM{\bs{0}}{\L}$, and $\Ltwo \ni \eps \sim
\GM{0}{\C}$. For notational convenience, let us define
\[
\hat{\db} := \db + \sigb, \quad \text{ and } \hat{\u} := \u_0 + \eps.
\]
The randomized MAP problem is now defined as
\begin{equation}
\eqnlab{umapEqn} \hat{\u}^{\text{MAP}} := \arg\min_\u \mc{J}^r\LRp{\u;
  \hat{\u}, \hat{\db}} := \half\snor{\hat{\db} - \mc{G}\LRp{\u}}^2_\L
+ \half \nor{\u}^2_{\mc{C}} + \LRa{\u,
  \hat{\u}}_{\mc{C}}.
\end{equation}
Note that the last two terms in \eqnref{umapEqn} is not the same as
the last term in \eqnref{MAP}. The reason is that the Cameron-Martin
space of $\mc{C}$ has zero measure \cite{Hairer09, PratoZabczyk92},
and hence $\hat{u}$ almost surely does not belong to this space. As a
result, the term $\half \nor{\u}^2_{\mc{C}}$ is almost surely
infinite, which should be removed as done in \eqnref{umapEqn}. On the other
hand, a solution to \eqnref{MAP} or \eqnref{umapEqn} is necessary in
the Cameron-Martin space since, otherwise, the term
$\nor{\u}^2_{\mc{C}}$ is infinite. The existence of such a solution
has been shown in \cite{Stuart10}, and hence 
\eqnref{umapEqn} is meaningful. Furthermore, the last term $\LRa{\u,
  \hat{\u}}_{\mc{C}}$ should be understood in the limit sense since
$\hat{\u} \in \Ltwo$ and the Cameron-Martin space is dense in $\Ltwo$.
Now, we are in the position to analyze the rMAP samples in function spaces.
\begin{lemma}
\lemlab{linearRML}
If the forward map $\mc{G}\LRp{\u}$ is linear in $\u$, then
$\hat{\u}^{\text{MAP}}$ is distributed by the posterior measure
\eqnref{RadonNikodym}.
\end{lemma}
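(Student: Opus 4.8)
The plan is to exploit the fact that, when $\mc{G}$ is linear, the randomized objective $\mc{J}^r\LRp{\u;\hat{\u},\hat{\db}}$ in \eqnref{umapEqn} is a strictly convex quadratic functional of $\u$ (the strict convexity coming from the $\half\nor{\u}^2_{\mc{C}}$ prior term), so it has a unique minimizer that can be written explicitly as an affine function of the Gaussian inputs $\hat{\db}=\db+\sigb$ and $\hat{\u}=\u_0+\eps$. Since an affine image of a Gaussian is again Gaussian, and since for a linear-Gaussian problem the posterior \eqnref{RadonNikodym} is itself Gaussian, it then suffices to compute the mean and covariance operator of $\hat{\u}^{\text{MAP}}$ and match them against the posterior mean and covariance.

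Writing $\mc{G}$ for the linear forward operator and $\mc{G}^*$ for its adjoint, I would first set the first variation of $\mc{J}^r$ to zero to obtain the normal equations $\LRp{\mc{G}^*\L^{-1}\mc{G}+\mc{C}^{-1}}\u = \mc{G}^*\L^{-1}\hat{\db}+\mc{C}^{-1}\hat{\u}$, which solve to
\[
\hat{\u}^{\text{MAP}} = \mc{C}_{\text{post}}\LRp{\mc{G}^*\L^{-1}\hat{\db}+\mc{C}^{-1}\hat{\u}}, \qquad \mc{C}_{\text{post}} := \LRp{\mc{G}^*\L^{-1}\mc{G}+\mc{C}^{-1}}^{-1}.
\]
Substituting $\hat{\db}=\db+\sigb$, $\hat{\u}=\u_0+\eps$ and separating the deterministic and random parts gives
\[
\hat{\u}^{\text{MAP}} = \u^{\text{post}} + \mc{C}_{\text{post}}\LRp{\mc{G}^*\L^{-1}\sigb+\mc{C}^{-1}\eps}, \qquad \u^{\text{post}} := \mc{C}_{\text{post}}\LRp{\mc{G}^*\L^{-1}\db+\mc{C}^{-1}\u_0},
\]
where $\u^{\text{post}}$ is exactly the posterior mean (the MAP point). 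Taking expectations and using $\Ex_{\sigb}\LRs{\sigb}=\mb{0}$ and $\Ex_{\epsb}\LRs{\eps}=0$ shows $\Ex\LRs{\hat{\u}^{\text{MAP}}}=\u^{\text{post}}$.

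For the covariance I would use the independence of $\sigb$ and $\eps$ (so the two cross terms vanish) together with $\Ex\LRs{\sigb\sigb^T}=\L$ and the covariance operator of $\eps$ being $\mc{C}$. The two surviving contributions combine through the identity $\mc{C}_{\text{post}}\LRp{\mc{G}^*\L^{-1}\mc{G}+\mc{C}^{-1}}\mc{C}_{\text{post}}=\mc{C}_{\text{post}}$, so the covariance operator of $\hat{\u}^{\text{MAP}}$ is precisely $\mc{C}_{\text{post}}$. Since the linear-Gaussian posterior \eqnref{RadonNikodym} is the Gaussian $\GM{\u^{\text{post}}}{\mc{C}_{\text{post}}}$ (see \cite{Stuart10}), and a Gaussian measure is determined by its mean and covariance, matching these two quantities identifies the law of $\hat{\u}^{\text{MAP}}$ with the posterior.

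The main obstacle is rigor in the function-space setting, where $\mc{C}^{-1}$ is unbounded and $\eps$ almost surely leaves the Cameron--Martin space, so that neither $\mc{C}^{-1}\eps$ nor the cross term $\LRa{\u,\hat{\u}}_{\mc{C}}$ is individually meaningful. The remedy is to never isolate $\mc{C}^{-1}\eps$, but to retain the composite operator $\mc{C}_{\text{post}}\mc{C}^{-1}=\LRp{I+\mc{C}\,\mc{G}^*\L^{-1}\mc{G}}^{-1}$, which is bounded (indeed a compact perturbation of the identity under the usual compactness of $\mc{G}^*\L^{-1}\mc{G}$). This makes the affine map $\LRp{\sigb,\eps}\mapsto\hat{\u}^{\text{MAP}}$ continuous, hence measurable, and legitimizes the ``limit sense'' interpretation of the last term of \eqnref{umapEqn}; the moment computation above then holds verbatim as an identity of bounded covariance operators, with $\mc{C}_{\text{post}}\LRp{\mc{G}^*\L^{-1}\mc{G}+\mc{C}^{-1}}=I$ read on the Cameron--Martin space into which $\mc{C}_{\text{post}}$ smooths. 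I would also invoke the existence and uniqueness of the minimizer in the Cameron--Martin space, already established via \cite{Stuart10}, to guarantee that $\hat{\u}^{\text{MAP}}$ is well defined before computing its law.
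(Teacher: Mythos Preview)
Your proposal is correct and follows essentially the same approach as the paper: derive the explicit affine formula for $\hat{\u}^{\text{MAP}}$ from the first-order optimality condition, then match the mean and covariance of this Gaussian against those of the linear-Gaussian posterior. The only cosmetic difference is that the paper writes the posterior mean and covariance in the Woodbury/update form and then invokes the matrix inversion lemma to identify them with your information-form expressions (the paper's $\mc{L}^{-1}$ is your $\mc{C}_{\text{post}}$); your discussion of the Cameron--Martin subtleties is in fact more careful than the paper's own treatment.
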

\begin{proof}
To begin, assume $\mc{G}\LRp{\u}
= \B\u$. Taking the first variation of $\mc{J}\LRp{\u; \hat{\u}, \hat{\db}}$ 
with
respect to $\u$ in the direction $\tilde{\u}$ gives
\[
\LRa{\Grad\mc{J}\LRp{\u; \hat{\u}, \hat{\db}},\tilde{\u}} = \LRa{\mc{L}\u - 
\B^*\L^{-1}\hat{\db} - 
\mc{C}^{-1}\hat{\u},\tilde{\u}},
\]
where $\B^*: \R^K \to \Ltwo$ is the adjoint of $\B$ and we have defined
\[
\mc{L} := \B^*\L^{-1}\B + \mc{C}^{-1}.
\]
By definition, $\hat{\u}^{\text{MAP}}$ is a solution of
$\LRa{\Grad\mc{J}\LRp{\u; \hat{\u}, \hat{\db}},\tilde{\u}} = 0, \forall 
\tilde{\u}$. Consequently,
we have
\begin{equation}
\eqnlab{uhatMAP}
\hat{\u}^{\text{MAP}} = \mc{L}^{-1}\LRp{\B^*\L^{-1}\hat{\db} +
\mc{C}^{-1}\hat{\u}}.
\end{equation}
Since both $\hat{\u}$ and $\hat{\db}$ are Gaussian, $\hat{\u}^{\text{MAP}}$
is also a Gaussian random functions. Assume that $\hat{\db}$ and
$\hat{\u}$ are independent, after some simple algebra and manipulation
the mean of $\hat{\u}^{\text{MAP}}$ can be written as
\begin{equation}
\eqnlab{meanPostt}
\mathbb{E}\LRs{\hat{\u}^{\text{MAP}}} = \mc{L}^{-1}\LRp{\B^*\L^{-1}\db +
\mc{C}^{-1}\u_0},
\end{equation}
which is exact the MAP point in \eqnref{MAP}. Furthermore, the
covariance operator of $\hat{\u}^{\text{MAP}}$ reads
\begin{equation}
\eqnlab{covariancePostt}
\mathbb{E}\LRs{\LRp{\hat{\u}^{\text{MAP}}-\u^{\text{MAP}}}\otimes
  \LRp{\hat{\u}^{\text{MAP}}-\u^{\text{MAP}}}} = \mc{L}^{-1}.
\end{equation}

On the other hand, using conditional Gaussian measures
\cite{Stuart10}, one can show that the posterior measure $\nu$ is a
Gaussian with mean function
\begin{equation}
\eqnlab{meanPost}
\bar{\u} = \u_0 + \mc{C} \B^*\LRp{\L + \B \mc{C}\B^*}^{-1}\LRp{\db -
  \B\u_0},
\end{equation}
and covariance operator
\begin{equation}
\eqnlab{covariancePost}
\mc{C}_{post} = \mc{C} - \mc{C}\B^*\LRp{\L + \B \mc{C}\B^*}\B\mc{C}.
\end{equation}
The fact that \eqnref{meanPostt} and \eqnref{covariancePostt} are
identical to \eqnref{meanPost} and \eqnref{covariancePost},
respectively, follows directly from the ``matrix'' inversion lemma \cite{GolubVan96}.
\end{proof}

\subsection{rMAP as the stochastic Newton method for linear inverse problems}
\seclab{RMLuSN} We begin by extending the finite dimensional
stochastic Newton (SN) method in \cite{MartinWilcoxBursteddeEtAl12}
to infinite dimensions. To that end, we define the SN
proposal in function space as
\begin{equation}
\v_{SN} = \u - \LRs{\Grad^2\mc{J}\LRp{\u; \u_0, \db}}^{-1} \Grad\mc{J}\LRp{\u; 
\u_0, \db}+
\GM{0}{\LRs{\Grad^2\mc{J}\LRp{\u; \u_0, \db}}^{-1}},
\end{equation}
where, from the definition of $\mc{J}$ in \eqnref{MAP}, we define
\begin{subequations}
\begin{align}
\Grad\mc{J}\LRp{\u; \u_0, \db} &= \Grad\mc{G}^*\LRp{\u} \L^{-1} 
\LRs{\mc{G}\LRp{\u} - \db} +
\mc{C}^{-1}\LRp{\u-\u_0}, \eqnlab{gradient} \\
\Grad^2\mc{J}\LRp{u; \u_0, \db} &= \Grad\LRs{\Grad\mc{G}^*\LRp{\u}} \L^{-1} 
\LRs{\mc{G}\LRp{\u} - \db} + \Grad\mc{G}^*\LRp{\u} \L^{-1} \Grad\mc{G}\LRp{\u}+
\mc{C}^{-1}.\eqnlab{fullHessian}
\end{align}
\end{subequations}
Clearly, the infinite dimensional SN proposal reduces to that proposed
in \cite{MartinWilcoxBursteddeEtAl12} for finite dimensional
problems. Here comes the relation between rMAP and stochastic Newton methods.

\begin{lemma}
The rMAP approach is identical to the SN method for linear inverse problems.
\end{lemma}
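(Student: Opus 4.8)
The plan is to show that, in the linear case $\mc{G}\LRp{\u} = \B\u$, the stochastic Newton proposal $\v_{SN}$ and the rMAP sample $\hat{\u}^{\text{MAP}}$ are Gaussian random functions with one and the same law, namely $\GM{\u^{\text{MAP}}}{\mc{L}^{-1}}$ with $\mc{L} = \B^*\L^{-1}\B + \mc{C}^{-1}$ as in Lemma \lemref{linearRML}. First I would specialize the gradient \eqnref{gradient} and the Hessian \eqnref{fullHessian} to the linear map: since $\Grad\mc{G}^*\LRp{\u} = \B^*$ and $\Grad\mc{G}\LRp{\u} = \B$ are constant and $\Grad\LRs{\Grad\mc{G}^*\LRp{\u}} = 0$, the full Hessian collapses to the constant operator $\Grad^2\mc{J} = \B^*\L^{-1}\B + \mc{C}^{-1} = \mc{L}$, while the gradient becomes $\Grad\mc{J}\LRp{\u;\u_0,\db} = \B^*\L^{-1}\LRp{\B\u - \db} + \mc{C}^{-1}\LRp{\u - \u_0}$.

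Next I would substitute these into the SN proposal. Writing the deterministic Newton step as $\u - \mc{L}^{-1}\Grad\mc{J}$ and using $\mc{L}\u = \B^*\L^{-1}\B\u + \mc{C}^{-1}\u$, the $\u$-dependence cancels and the step lands exactly on $\mc{L}^{-1}\LRp{\B^*\L^{-1}\db + \mc{C}^{-1}\u_0} = \u^{\text{MAP}}$ from \eqnref{meanPostt}, for every current state $\u$. Hence $\v_{SN} = \u^{\text{MAP}} + \GM{0}{\mc{L}^{-1}}$, i.e. $\v_{SN} \sim \GM{\u^{\text{MAP}}}{\mc{L}^{-1}}$. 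The conceptual point worth stating explicitly here is that, the objective being quadratic, Newton converges in a single step independently of $\u$, so the SN proposal is already an exact, independent draw from the posterior and the Metropolis--Hastings acceptance probability is identically one; the SN chain therefore degenerates to i.i.d. posterior sampling.

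For the rMAP side I would start from the closed form \eqnref{uhatMAP} and substitute $\hat{\db} = \db + \sigb$, $\hat{\u} = \u_0 + \eps$ to split $\hat{\u}^{\text{MAP}} = \u^{\text{MAP}} + \mc{L}^{-1}\LRp{\B^*\L^{-1}\sigb + \mc{C}^{-1}\eps}$. The fluctuation term is a linear image of the jointly Gaussian pair $\LRp{\sigb,\eps}$, hence centered Gaussian; using the independence of $\sigb \sim \GM{\bs{0}}{\L}$ and $\eps \sim \GM{0}{\C}$, the cross terms drop and its covariance reassembles as $\mc{L}^{-1}\LRp{\B^*\L^{-1}\B + \mc{C}^{-1}}\mc{L}^{-1} = \mc{L}^{-1}\mc{L}\,\mc{L}^{-1} = \mc{L}^{-1}$. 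Therefore $\hat{\u}^{\text{MAP}} \sim \GM{\u^{\text{MAP}}}{\mc{L}^{-1}}$, matching the law of $\v_{SN}$, which is the claim.

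The algebra above is routine; the only step needing care is the covariance computation for the rMAP fluctuation, where it must be verified that the two diagonal pieces $\B^*\L^{-1}\B$ and $\mc{C}^{-1}$ reconstitute exactly $\mc{L}$ between the two factors of $\mc{L}^{-1}$. In the infinite-dimensional setting one should also note that $\mc{L}^{-1}$ is trace class, so that $\GM{0}{\mc{L}^{-1}}$ is a genuine Gaussian measure on $\Ltwo$; this is already implicit in Lemma \lemref{linearRML}, which identifies $\mc{L}^{-1}$ with the posterior covariance $\mc{C}_{post}$ via the ``matrix'' inversion lemma.
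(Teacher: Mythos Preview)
Your proposal is correct and follows essentially the same approach as the paper: specialize the gradient and Hessian to the linear map to obtain $\v_{SN}=\u^{\text{MAP}}+\GM{0}{\mc{L}^{-1}}$, and then identify this with the law of $\hat{\u}^{\text{MAP}}$ from \eqnref{uhatMAP}. The only cosmetic difference is that you recompute the covariance $\mc{L}^{-1}\LRp{\B^*\L^{-1}\B+\mc{C}^{-1}}\mc{L}^{-1}=\mc{L}^{-1}$ explicitly, whereas the paper simply cites the mean and covariance already established in \eqnref{meanPostt}--\eqnref{covariancePostt} of Lemma~\lemref{linearRML}.
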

\begin{proof}
Since the forward map is linear, i.e. $\mc{G}\LRp{\u} = \B\u$, the
posterior is a Gaussian measure as discussed above. A simple
manipulation gives
\begin{align*}
\Grad\mc{J}\LRp{\u; \u_0, \db} = \mc{L}\u - \B^*\L^{-1}{\db} - 
\mc{C}^{-1}{\u_0}, \text{ and }
\Grad^2\mc{J}\LRp{u; \u_0, \db} = \B^* \L^{-1} \B+
\mc{C}^{-1}.
\end{align*}
Consequently,
\begin{equation}
\v_{SN} = \u^{\text{MAP}} + \GM{0}{\mc{L}^{-1}},
\end{equation}
where $\u^{\text{MAP}} = \mc{L}^{-1}\LRp{\B^*\L^{-1}\db + \mc{C}^{-1}\u_0}$
as in the proof of Lemma \lemref{linearRML}. Due to the linearity of
$\mc{G}$, we only need to use one Newton iteration to obtain
$\hat{\u}^{\text{MAP}}$ and it is exactly given by \eqnref{uhatMAP}.

In order to show the equivalence between rMAP and SN, we need to prove
that $\v_{SN}$ and $\hat{\u}^{\text{MAP}}$ come from the same
distribution. But this is obvious by inspection: the mean
function and the covariance function of $\v_{SN}$ are exactly given by
\eqnref{meanPostt} and \eqnref{covariancePostt}, i.e., the mean and the
covariance of $\hat{\u}^{\text{MAP}}$.
\end{proof}

\subsection{rMAP as an iterative stochastic Newton method for nonlinear inverse 
problems}\seclab{rmapSNconnection} For nonlinear forward map, rMAP is no longer the same as the
stochastic Newton SN method. Instead, as we now show, it can be
considered as an iterative SN method (iSN) when the full Hessian is
approximated by the Gauss-Newton Hessian. To begin, we note that the
rMAP sample $\hat{\u}^{\text{MAP}}$ is a solution of the following equation
\begin{equation}
\eqnlab{RMLequation}
\Grad\mc{J}\LRp{\u; \hat{\u}, \hat{\db}} = 0,
\end{equation}
which can be solved using Newton method. Each Newton
iteration reads
\[
\u^{k+1} = \u^k - \LRs{\Grad^{2}\mc{J}\LRp{\u^k; \hat{\u}, \hat{\db}}}^{-1}
\Grad\mc{J}\LRp{\u^k; \hat{\u}, \hat{\db}}, \quad k = 1, \hdots.
\]
Now, the Gauss-Newton part of the full
Hessian \eqnref{fullHessian} is given by
\[
\Grad^2\mc{J}_{g}\LRp{u} = \Grad\mc{G}^*\LRp{\u} \L^{-1} \Grad\mc{G}\LRp{\u}+
\mc{C}^{-1},
\]
which is independent of $\u_0$ and $\db$.
The SN proposal in this case can be written as
\[
\v_{SN} = \u - \LRs{\Grad^2\mc{J}_g\LRp{\u}}^{-1} \Grad\mc{J}\LRp{\u; \u_0, 
\db}+
\GM{0}{\LRs{\Grad^2\mc{J}_g\LRp{\u}}^{-1}},
\]
with $\u$ denoting the current state of the SN Markov chain under consideration.
On the other hand, the rMAP method with Gauss-Newton Hessian and initial guess $\u^1
= \u$ can be written as
\[
\u^{k+1} = \u^k - \LRs{\Grad^{2}{\mc{J}}_g\LRp{\u^k}}^{-1}
\Grad{\mc{J}}\LRp{\u^k; \hat{\u}, \hat{\db}}, \quad k = 1, \hdots.
\]
In particular,
\begin{equation}
\eqnlab{u2}
\u^{2} = \u - \LRs{\Grad^{2}{\mc{J}}_g\LRp{\u}}^{-1}
\Grad{\mc{J}}\LRp{\u; \hat{\u}, \hat{\db}}.
\end{equation}

Now, by definition of $\hat{\u}$ and $\hat{\db}$, there exist
$\tilde{\u}$ and $\tilde{\db}$ such that
\[
\hat{\u} = \u_0 + \tilde{\u}, \text{ and } \hat{\db} = \db + \tilde{\db},
\]
where 
\[
\tilde{\u} \sim \GM{0}{\mc{C}}, \text{ and } \tilde{\db} \sim
\GM{\mb{0}}{\L}.
\]
Consequently, by linearity of $\Grad{\mc{J}}\LRp{\u; \cdot, \cdot}$ with respect to the last two arguments (see \eqnref{gradient}) we have
\[
\Grad{\mc{J}}\LRp{\u; \hat{\u}, \hat{\db}} = \Grad{\mc{J}}\LRp{\u; \u_0, {\db}} 
- \Grad\mc{G}^*\LRp{\u} \L^{-1} 
\tilde{\db} -
\mc{C}^{-1}\tilde{\u}
\]
and \eqnref{u2} becomes
\[
\u^{2} = \u - \LRs{\Grad^{2}{\mc{J}}_g\LRp{\u}}^{-1}
\Grad{\mc{J}}\LRp{\u; {\u}_0, {\db}} - 
\underbrace{\LRs{\Grad^{2}{\mc{J}}_g\LRp{\u}}^{-1}
\LRp{\Grad\mc{G}^*\LRp{\u} \L^{-1} 
\tilde{\db} +
\mc{C}^{-1}\tilde{\u}}}_{\u^\dagger}. 
\]

Next, the proof of Lemma \lemref{linearRML} shows that $\u^\dagger$ is
distributed by
$\GM{0}{\LRs{\Grad^2\mc{J}_g\LRp{\u}}^{-1}}$.
Therefore, $\u^2$ and
$\v_{SN}$ are identically distributed. The difference between the
rMAP method and SN is now clear: the SN method uses $\u^2$
as the MCMC proposal while the rMAP first continues to iterate until
\eqnref{RMLequation} is (approximately) satisfied and then takes the
last $\u_k$ as the proposal. In this sense, rMAP can be viewed as an
iterative SN method.

\subsection{Relation between rMAP and the randomize-then-optimize approach}
\seclab{rMAPandRTO} This section draws a connection between the rMAP
method and the randomize-then-optimize (RTO) approach
\cite{BardsleySolonenHaarioEtAl13}. We shall show that they are
identical for linear forward map (linear inverse problems), but they
are different if the forward map is nonlinear. We also propose a
modification for the RTO method. 

The difference between RML and RTO is best demonstrated for finite
dimensional parameter space. In this case, the $j$th
rMAP can be computed as
\begin{equation}
\eqnlab{rMAPsampleN}
\ub_j^{rMAP} :=\arg\min_\ub \half\snor{\L^{-\half}\LRp{\db + \sigb_j-
    \mc{G}\LRp{\ub}}}^2 + \half \snor{\mc{C}^{-\half}\LRp{\ub - \ub_0-\epsb_j}}^2,
\end{equation}
while the $j$th RTO sample \cite{BardsleySolonenHaarioEtAl13} can be written as
\begin{equation}
\eqnlab{RTOsample}
\ub_j^{RTO} := \arg\min_\ub  \half\nor{
\Qb^T\LRs{\begin{array}{l}
\L^{-\half}\LRp{
    \B\ub -\db - \sigb_j} \\
\mc{C}^{-\half}\LRp{\ub - \ub_0-\epsb_j}
\end{array}}
}^2,
\end{equation}
where $\Qb$ is the first factor in the ``thin''
QR factorization of
\begin{equation}
\eqnlab{QR} \overline{\Gb} := \Gb\LRp{\ub^{\text{MAP}}} :=
\LRs{\L^{-\half}\Grad\mc{G}\LRp{\ub^{\text{MAP}}},\mc{C}^{-\half}}^T = \Qb
\mb{R}
\end{equation}
evaluated at the MAP point. Due to the presence of $\mc{C}^{-1}$,
$\Gb$ has full column rank, and hence $\mb{R}$ is
invertible. {\em Clearly, rMAP samples $\ub_j^{rMAP}$ are not the same
  as RTO ones $\ub_j^{RTO}$ since they are extrema of different cost
  functions in general}.

Now, let us assume that the forward is linear, i.e. $\mc{G}\LRp{\ub} =
\B\ub$. Setting the derivative, with respect to $\ub$, of the cost
function in \eqnref{rMAPsampleN} to zero yields equation for the $j$th
rMAP sample $\ub_j^{rMAP}$:
\[
\overline{\Gb}^T \LRs{
\begin{array}{l}
\L^{-\half}\LRp{
    \B\ub -\db - \sigb_j} \\
\mc{C}^{-\half}\LRp{\ub - \ub_0-\epsb_j}
\end{array}
} = 0.
\]
Using \eqnref{QR} and the fact that $\Qb$ is orthonormal, we arrive at
\[
\overline{\Gb}^T\Qb\Qb^T \LRs{
\begin{array}{l}
\L^{-\half}\LRp{
    \B\ub -\db - \sigb_j} \\
\mc{C}^{-\half}\LRp{\ub - \ub_0-\epsb_j}
\end{array}
} = 0
\]
which is exact the equation for the $j$th RTO sample $\ub_j^{RTO}$ if
one sets the derivative, with respect to $\ub$, of the cost function
in \eqnref{RTOsample} to zero. In other words, we have shown that {\em RTO
is identical to rMAP for linear inverse problems}.

Up to this point we observe that RTO method requires a QR
factorization of $\overline{\Gb}$ which could be computationally
intractable for large-scale inverse problems in high dimensional parameter
spaces. We propose to use $\overline{\Gb}$ in place of $\Qb$. For
general forward map, the modified RTO problem reads (compared to \eqnref{RTOsample})
\begin{equation}
\eqnlab{mRTOsample}
\ub_j^{RTO} := \arg\min_\ub  \half\nor{
\overline{\Gb}^T\LRs{\begin{array}{l}
\L^{-\half}\LRp{
    \B\ub -\db - \sigb_j} \\
\mc{C}^{-\half}\LRp{\ub - \ub_0-\epsb_j}
\end{array}}
}^2,
\end{equation}
and hence
RTO samples now satisfy the following equation
\begin{equation}
\eqnlab{mRTO}
\Gb^T\LRp{\ub}\overline{\Gb}\,\overline{\Gb}^T \LRs{
\begin{array}{l}
\L^{-\half}\LRp{
    \B\ub -\db - \sigb_j} \\
\mc{C}^{-\half}\LRp{\ub - \ub_0-\epsb_j}
\end{array}
} = 0.
\end{equation}

The modified approach has a couple of advantages: 1) QR-factorization of
(possibly large-scale) $\overline{\Gb}$ is no longer needed; and 2) There
is no need to construct $\overline{\Gb}$ since all we need is its
action, which can be computed efficiently using adjoint technique.
The determinant of $\overline{\Gb}$ is necessary if the RTO density is
needed, but this is already available in the MAP computation.

\section{Metropolis-adjusted rMAP method}
\seclab{metropolis}
Recall from Lemma \lemref{linearRML} that, for linear inverse
problems, rMAP sample is exactly distributed by the posterior measure
$\nu$. When the forward map is nonlinear, Proposition
\proporef{rMAPsamples} shows that this is no longer true. In this
case, rMAP samples have bias which should be removed via, for example,
the standard Metropolization \cite{RobertCasella05}. The work in
\cite{OliverHeReynolds96} shows that, for some nonlinear test
problems, the acceptance rate is above $90\%$ and the authors proposed
to accept all rMAP samples. This simple strategy has been shown to
work well in many cases (see, e.g., \cite{OliverReynoldsLiu08,
  IglesiasLawStuart12}), though the resulting Markov chain can
over/under-estimate the actual posterior. We shall show that this is
the case for our inverse problem, and a de-biasing procedure is
necessary. An exact Metropolization has been proposed in
\cite{OliverHeReynolds96}, but it is intractable except for problems
with (very) small parameter dimension. We therefore propose an
approximate Metropolized step, and this is best illustrated using
finite dimensional framework. To that end, we replace $\hat{u}$ by
finite dimensional vector $\bs{\u}$, e.g., vector of finite element
nodal values.

Following \cite{Oliver14}, we begin by defining
\begin{equation}
\eqnlab{RMLdeltadef}
\delta = \mc{G}\LRp{\hat{\ub}^{\text{MAP}}}-\hat{\db}.
\end{equation}
Note that $\hat{\ub}^{\text{MAP}}$ also satisfies \eqnref{RMLequation}, which
for finite dimensional setting becomes
\begin{equation}
\eqnlab{RMLequationFinite}
\Grad\mc{G}^*\LRp{\hat{\ub}^{\text{MAP}}} \L^{-1} \LRs{\mc{G}\LRp{\hat{\ub}^{\text{MAP}}} - 
\hat{\db}} +
\mc{C}^{-1}\LRp{\hat{\ub}^{\text{MAP}}-\hat{\ub}} = 0.
\end{equation}

We can view the equations \eqnref{RMLdeltadef} and
\eqnref{RMLequationFinite} as definition of a map
$\mc{T}:(\hat{\ub},\hat{\db})\to(\hat{\ub}^{\text{MAP}}, \delta)$, and we
assume that this map needs to be locally invertible. This allows us to
explicitly write $\mc{T}^{-1}$ by
\begin{equation}
\eqnlab{RMLinversetransform} 
 \LRs{\begin{matrix}
  \hat{\ub}\\
  \hat{\db}
 \end{matrix}}
 = \mc{T}^{-1}(\hat{\ub}^{\text{MAP}}, \delta) = 
 \LRs{\begin{matrix}
       \hat{\ub}^{\text{MAP}} + \mc{C}\Grad\mc{G}^*\LRp{\hat{\ub}^{\text{MAP}}} \L^{-1}\delta \\
       \mc{G}\LRp{\hat{\ub}^{\text{MAP}}}-\delta
      \end{matrix}
 }.
\end{equation}
After dropping higher order terms, the corresponding Jacobian matrix $J$ is then
\begin{equation*}
J:= \frac{\partial \LRp{\hat{\ub}, \hat{\db}}}{\partial\LRp{\hat{\ub}^{\text{MAP}}, \delta}}\approx 
 \LRs{\begin{matrix}
\bs{I} & \mc{C}\Grad\mc{G}^*\LRp{\hat{\ub}^{\text{MAP}}}\L^{-1} \\
\Grad\mc{G}\LRp{\hat{\ub}^{\text{MAP}}} & -\bs{I}
\end{matrix}},
\end{equation*}
whose determinant can be written as
\begin{align*}
\snor{J} &\approx\snor{\det\LRp{\bs{I} + 
\mc{C}\Grad\mc{G}^*\LRp{\hat{\ub}^{\text{MAP}}}\L^{-1}\Grad\mc{G}\LRp{\hat{\ub}^{\text{MAP}}}}}\\
& = \snor{\det\LRp{\bs{I} + 
\mc{C}^{\half}\Grad\mc{G}^*\LRp{\hat{\ub}^{\text{MAP}}}\L^{-1}\Grad\mc{G}\LRp{\hat{\ub}^{\text{MAP}}}\mc{C}^{\half}}} \numberthis \label{JacobianDeterminant},
\end{align*}
Note that the Gauss-Newton approximation in the last equation can be readily computed by adjoint methods. Let us denote by $h\LRp{\hat{\ub}^{\text{MAP}},\delta}$ the density of proposing the pair $\LRp{\hat{\ub}^{\text{MAP}},\delta}$ with the above algorithm. It is then also the density for the probability $\mc{T}_{\#}\pi(\hat{\ub}, \hat{\db})$, i.e., the push-forward of the probability to propose the pair $\LRp{\hat{\ub}, \hat{\db}}$. By the measure preservation property and the change of variables formula we have
\begin{equation*}
 h\LRp{\hat{\ub}^{\text{MAP}},\delta} = f\LRp{\mc{T}^{-1}\LRp{\hat{\ub}^{\text{MAP}},\delta}}\snor{J},
\end{equation*}
where $f$ is defined as 
\begin{equation}
\eqnlab{RMLjointprobability}
f\LRp{\hat{\ub},\hat{\db}} \sim \exp\LRs{-\half\LRp{\hat{\ub} - 
\ub_0}^T\mc{C}^{-1}\LRp{\hat{\ub} - \ub_0} -\half \LRp{\hat{\db} - 
\db}^T\L^{-1}\LRp{\hat{\db} - \db}}.
\end{equation}
 With equations \eqnref{RMLinversetransform}, it is not hard to see that 
\begin{align*}
 f\LRp{\mc{T}^{-1}\LRp{\hat{\ub}^{\text{MAP}},\delta}} &= p\LRp{\hat{\ub}^{\text{MAP}}}\zeta\LRp{\delta}\eta\LRp{\hat{\ub}^{\text{MAP}}},
\end{align*}
where
\begin{equation*}
 p\LRp{\hat{\ub}^{\text{MAP}}} = \exp\LRp{-\half\snor{\hat{\ub}^{\text{MAP}}-\ub_0}_{\mc{C}}-\half\snor{\mc{G}\LRp{\hat{\ub}^{\text{MAP}}}-\db_0}_{\L}}
\end{equation*}
is proportional to the posterior distribution, 
\begin{equation*}
 \zeta\LRp{\delta} = \exp\LRp{-\half\LRp{\delta-\mc{H}\mc{K}}^T\mc{H}^{-1}\LRp{\delta-\mc{H}\mc{K}}},
\end{equation*}
and 
\begin{equation*}
 \eta\LRp{\hat{\ub}^{\text{MAP}}} = \exp\LRp{\half\mc{K}^T\mc{H}\mc{K}},
\end{equation*}
where $\mc{H}$ and $\mc{K}$ are given by
\begin{equation*}
 \mc{H}^{-1} = \L^{-1} + \L^{-1}\Grad\mc{G}\LRp{\hat{\ub}^{\text{MAP}}}\mc{C}\Grad\mc{G}^*(\hat{\ub}^{\text{MAP}})\L^{-1}
\end{equation*}
and
\begin{equation*}
 \mc{K} = \L^{-1}\LRp{\LRp{\mc{G}\LRp{\hat{\ub}^{\text{MAP}}}-\db_0} +\Grad\mc{G}\LRp{\hat{\ub}^{\text{MAP}}}\LRp{\hat{\ub}^{\text{MAP}}-\ub_0}}.
\end{equation*}

Since the terms including $\delta$ consititute a Gaussian kernel, such a decomposition allows us to marginalize $\delta$ and obtain the probability of proposing $\hat{\ub}^{\text{MAP}}$:
\begin{equation*}
 q\LRp{\hat{\ub}^{\text{MAP}}} =\int h\LRp{\hat{\ub}^{\text{MAP}},\delta} d\delta
 = p\LRp{\hat{\ub}^{\text{MAP}}}\eta\LRp{\hat{\ub}^{\text{MAP}}}\omega\LRp{\hat{\ub}^{\text{MAP}}}\snor{J},
\end{equation*}
where $\omega\LRp{\hat{\ub}^{\text{MAP}}}$ is from integrating with respect to $\delta$ and it possesses an explicit form:
\begin{equation*}
 \omega\LRp{\hat{\ub}^{\text{MAP}}}\propto\snor{\mc{H}}^{\half}=\snor{\L}^{-\half}\snor{J}^{-\half}.
\end{equation*}
Substituting these formulas into the decomposition of $q\LRp{\hat{\ub}^{\text{MAP}}}$, we obtain the ratio of posterior distribution over proposal distribution to be:
\begin{equation*}
 \theta\LRp{\hat{\ub}^{\text{MAP}}} = \frac{p\LRp{\hat{\ub}^{\text{MAP}}}}{q\LRp{\hat{\ub}^{\text{MAP}}}}\propto \exp\LRp{-\half\mc{K}^T\mc{H}\mc{K}}\snor{\L}^{\half}\snor{J}^{-\half}.
\end{equation*}
With this ratio, we are able to compute the acceptance ratio between a newly proposed state $\hat{\ub}^{\text{MAP}}_*$ and a current state $\hat{\ub}^{\text{MAP}}_k$. One computational consideration in practice would be that directly computing the gradient of the forward map, $\Grad\mc{G}\LRp{\hat{\ub}^{\text{MAP}}}$, can be expensive when the number of measurements is high. A further practical simplification would be approximating $\alpha$ with only the $\snor{\L}^{\half}\snor{J}^{-\half}$. Thus, the acceptance ratio we adopt has the form
\begin{equation*}
 \tilde{\alpha}\LRp{\hat{\ub}^{\text{MAP}}_*, \hat{\ub}^{\text{MAP}}_k} = \frac{\theta\LRp{\hat{\ub}^{\text{MAP}}_*}}{\theta\LRp{\hat{\ub}^{\text{MAP}}_k}}\\
 \approx\frac{\snor{J(\hat{\ub}^{\text{MAP}}_k)}^{\half}}{\snor{J(\hat{\ub}^{\text{MAP}}_*)}^{\half}} \numberthis 
\end{equation*}
This simplification appears to be reasonable as shown in the numerical results.

It should be pointed out that we have recently shown that the misfit
(Gauss-Newton) Hessian is a compact operator
\cite{Bui-ThanhGhattas12,Bui-ThanhGhattas12a}. Moreover,
$\mc{C}^\half$ is also a compact operator by definition of Gaussian
measure. It follows that
$\mc{C}^\half\Grad^2\Phi_g\LRp{\hat{\ub}^{\text{MAP}},\hat{\db}}\mc{C}^\half$
is compact and admits low rank approximation. This is in fact one of
the key points that is exploited to construct scalable and
mesh-independent method in our previous work on extreme scale Bayesian
inversion \cite{Bui-ThanhBursteddeGhattasEtAl12,
  Bui-ThanhGhattasMartinEtAl13}. Thus, computing $\snor{J}$ can be
done in a scalable manner independent of the mesh size using the
randomized SVD technique \cite{HalkoMartinssonTropp11}, for example.


\section{Finite element discretization and optimization}
\seclab{FEM} For the practical problems we consider we assume the
spatial dimension to be at least two, therefore we choose $s > 1$ so
that the infinite dimensional framework is well-defined as discussed
in Section \secref{infiniteBayes}. As a result, evaluating the prior
and/or generating a prior sample requires to discretize and/or solve a
fractional partial differential equation.  Similar to
\cite{Bui-Thanh2015} (and references therein) we combine the finite
element method (FEM) \cite{Ciarlet78} and the matrix transfer technique (see, e.g.
\cite{IlicLiuTurnerEtAl05} to discretize (truncated) Karhunen-Lo\`eve
(KL) expansion of the prior. For the discretization of the forward
equation, and hence the likelihood, we also use the same finite element method.

Using finite element approximation, the MAP problem \eqnref{MAP}
becomes a (possibly) high dimensional and nonlinear optimization
problem. It is thus necessary to use the state-of-the-art
scalable optimization solver to minimize the cost. Here we choose the
trust region inexact Newton conjugate gradient (CG) method (TRINCG),
for which some of the main idea can be found, e.g., in
\cite{NocedalWright06, ColemanLi96, BranchColemanLi99,
  Bui-Thanh07}). The method combines the rapid locally-quadratic
convergence rate properties of Newton method, the effectiveness of
trust region globalization for treating ill-conditioned problems, and
the Eisenstat--Walker idea of preventing oversolving. In the numerical results section, we demonstrate the efficiency of this trust region method over popular Levenberg-Marquardt techniques. As we shall see that, in some difficult examples, choosing TRINCG becomes critical in controlling computation time for rMAP sampling.

\subsection{Good initial guess for the rMAP algorithm}
\seclab{goodInit}
One of the most important aspects of numerical optimization,
particularly with Newton method, is how to choose a good initial
guess. The closer the initial guess is to the basin of attraction of a
local minimum, the faster the convergence. This is clearly important
since we desire to minimize the cost of computing rMAP proposals. One
way to achieve this is through using sensitivity analysis, which we now
describe. To begin, we distinguish $\Grad$, the derivative with
respect to $\u$, with derivatives with other variables: for example,
$\Grad_{\hat{\u}_i}$ and $\Grad_{\hat{\db}_i}$ denote derivatives with
respect to $ \hat{\u}_i$ and $\hat{\db}_i$, respectively. Consider two
consecutive rMAP samples $\hat{\u}^{\text{MAP}}_i$ and $\hat{\u}^{\text{MAP}}_{i+1}$ that satisfy
\begin{align}
\eqnlab{iSample}
F\LRp{\hat{\u}^{\text{MAP}}_i; \hat{\u}_i,\hat{\db}_i} &:= 
\Grad\mc{J}\LRp{\hat{\u}^{\text{MAP}}_i; \hat{\u}_i, \hat{\db}_i} = 0, \\
\eqnlab{i1Sample}
F\LRp{\hat{\u}^{\text{MAP}}_{i+1}; \hat{\u}_{i+1},\hat{\db}_{i+1}} &:= 
\Grad\mc{J}\LRp{\hat{\u}^{\text{MAP}}_{i+1}; \hat{\u}_{i+1}, \hat{\db}_{i+1}} = 0.
\end{align}
Now, let us define
\[
\tilde{\u} = \hat{\u}_{i+1} - \hat{\u}_i, \text{ and } \tilde{\db} =
\hat{\db}_{i+1} - \hat{\db}_i.
\]
Assuming that $\hat{\u}^{\text{MAP}}_i$ is already computed from
\eqnref{iSample}, we now construct an initial guess for
 solving \eqnref{i1Sample} using Newton method:
\begin{equation}
\eqnlab{RMLinitial}
\u^{init} = \hat{\u}^{\text{MAP}}_i +
\underbrace{\LRp{\Grad_{\hat{\db}_i}\hat{\u}^{\text{MAP}}_i, \tilde{\db}} +
\LRa{\Grad_{\hat{\u}_i} \hat{\u}^{\text{MAP}}_i, \tilde{\u}}}_{T},
\end{equation}
which is simply the first order Taylor approximation of
$\hat{\u}^{\text{MAP}}_{i+1}$ around $\LRp{\hat{\u}_i, \hat{\db}_i}$.

What remains is to compute $T$ in \eqnref{RMLinitial}. To
this end, we expand the gradient in \eqnref{i1Sample} using the
first order Taylor expansion to obtain the following equation for $T$
\begin{equation}
\eqnlab{Teqn}
\Grad^2\mc{J}\LRp{\hat{\u}^{\text{MAP}}_i; \hat{\u}_i, \hat{\db}_i}T \approx
\Grad\mc{G}^*\LRp{\hat{\u}^{\text{MAP}}_i} \L^{-1}\tilde{\db} +
\mc{C}^{-1}\tilde{\u}.
\end{equation}
Solving \eqnref{Teqn} requires an adjoint solve to evaluate the right
hand side, and the inverse of $\Grad^2\mc{J}\LRp{\hat{\u}^{\text{MAP}}_i;
  \hat{\u}_i, \hat{\db}_i}$ (the Hessian evaluated at the $i$th rMAP
sample). If $\snor{\hat{\u}^{\text{MAP}}_i - \hat{\u}^{\text{MAP}}_{i+1}}$ is small,
$\u^{init}$ is a very good approximation of
$\hat{\u}^{\text{MAP}}_{i+1}$. Thus, solving \eqnref{i1Sample} with
$\u^{init}$ as the initial guess helps reduce the number of
optimization iterations (and hence the number of forward PDE solves)
substantially. In practice, we linearize around the MAP point
\eqnref{MAP} and this approach further cuts down the number of
PDE
solves since $\Grad^2\mc{J}\LRp{\u^{\text{MAP}}; \u_0, \db}$ is fixed and can
be well approximated using low rank approximation
\cite{Bui-ThanhBursteddeGhattasEtAl12,
  Bui-ThanhGhattasMartinEtAl13}.

\section{Numerical results}
\seclab{numerics}
In this section, we present sampling results using several test cases. In Section \secref{numericAnalytical}, we use two analytical functions to compare the sampling efficiency between the rMAP and the RTO method, and between the stochastic Newton method  described above. In Section \secref{numericHelmholtz}, we use the rMAP method to sample a Bayesian inverse problem on a two dimensional Helmholtz forward model. Therein, we compare the computational efficiency between the popular Levenberg-Marquardt method (see, e.g., \cite{OliverReynoldsLiu08}) and  TRINCG method for each rMAP sample, as well as the effectiveness of using a good initial guess as is discussed in Section \secref{FEM}. In order to examine statistical convergence of rMAP methods, we also compare rMAP samples with those from the delayed rejection adaptive Metropolis (DRAM) sampler \cite{HaarioLaineMiraveteEtAl06}.  
\subsection{Analytical function example}\seclab{numericAnalytical}Let us start by numerically demonstrating how rMAP
and RTO cost functions in \eqnref{rMAPsampleN} and \eqnref{RTOsample},
respectively, change the original cost function in \eqnref{MAP}. To
that end, we consider two analytical cost functions (negative log posterior)
\begin{subequations}
\eqnlab{costFunctional}
\begin{align}
\eqnlab{costFunctional1}
\mc{J}_1 &:= \frac{1}{2}\LRp{\ub - 0.8}^2 + \frac{1}{2\times0.2^2}\LRp{\ub^2-1}^2,\\
\mc{J}_2 &:= \frac{1}{2}\LRp{\ub - 1}^2 + \frac{1}{2\times0.2^2}\LRp{\ub^3-0.8}^2.
\end{align}
\end{subequations}

\subsubsection{Comparing rMAP and RTO methods}
In Figure \figref{rMAPandRTOrandomization} are the original cost
functionals $\mc{J}_1, \mc{J}_2$ and their randomization with rMAP and
RTO methods. (Note that both the original RTO and our modified version
give identical results for all analytical results, and hence we do not
distinguish them) Here, we use the same $\sigb$ and $\epsb$ for both
rMAP and RTO. As can be seen, both randomized costs preserve the
characteristics, e.g. multi-modality and skewness, of the original
one. However, they differ from the original cost function as well as from each other, which agrees with our findings
in Section \secref{rMAPandRTO}.
\begin{figure}[h!t!b!]
\begin{center}
\subfigure[$\mc{J}_1$, and its rMAP and RTO]{
\includegraphics[trim=1cm 6.0cm 2cm 7.1cm,clip=true, width=0.4\columnwidth]{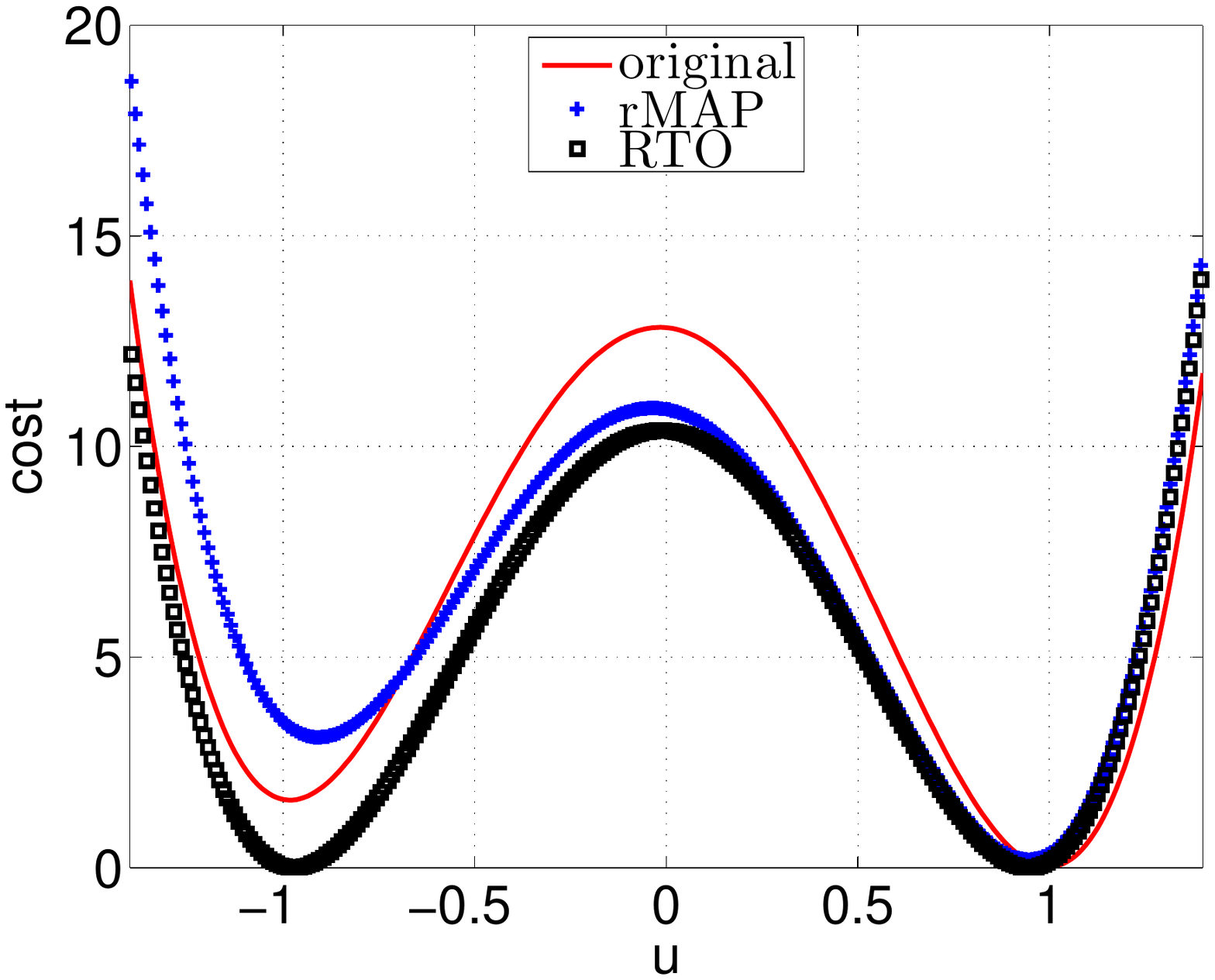}
\figlab{rMAPandRTO1}
 }
\subfigure[$\mc{J}_2$, and its rMAP and RTO]{
\includegraphics[trim=1cm 6.0cm 2cm 7.1cm,clip=true, width=0.4\columnwidth]{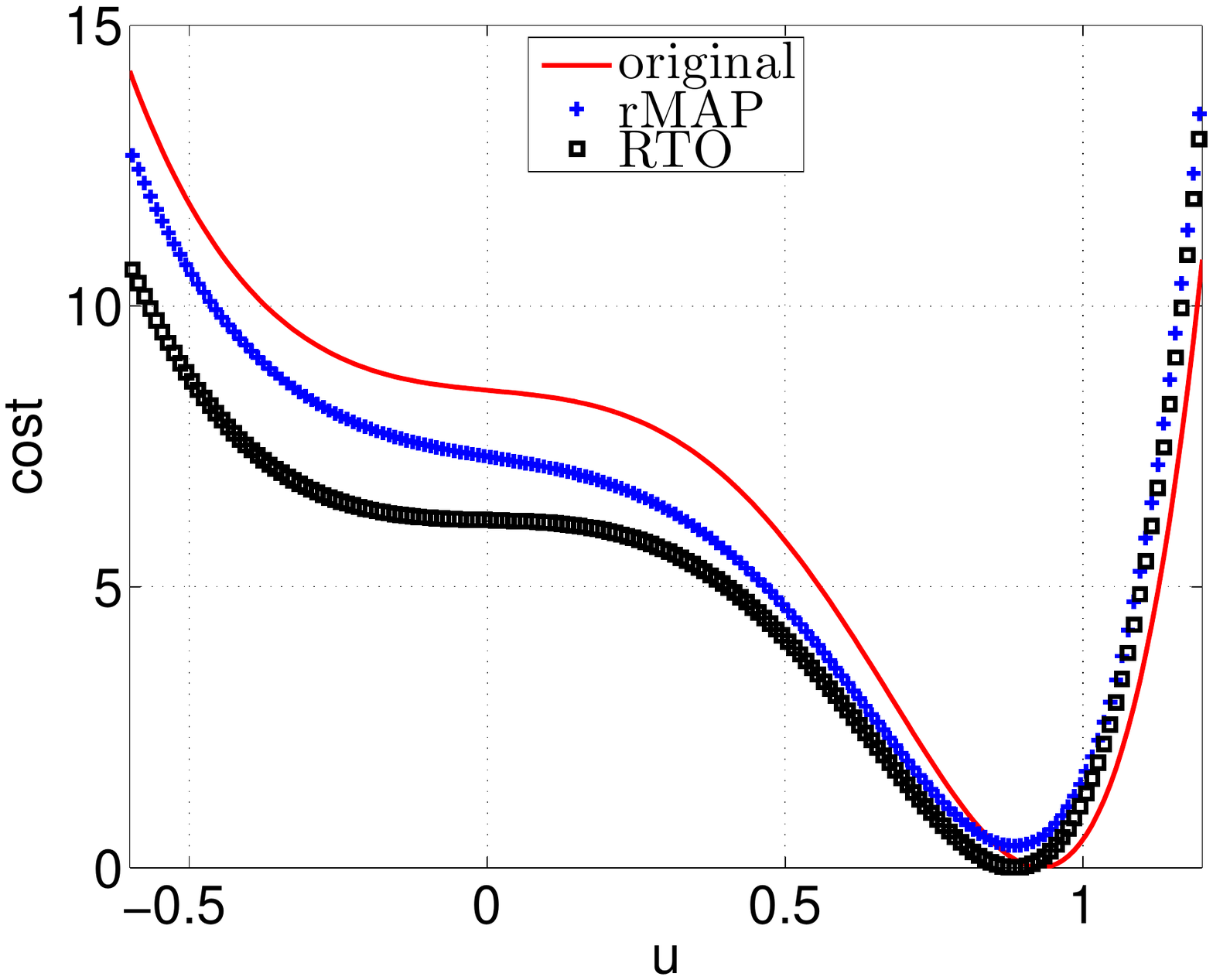}
\figlab{rMAPandRTO2}
 }

 \caption{Randomization of the cost functionals in
   \eqnref{costFunctional} with rMAP and RTO methods.}
 \figlab{rMAPandRTOrandomization}
 \end{center}
\end{figure}

We next examine the sensitivity of both rMAP and RTO with
multi-modality and optimization solver. To that end, we first use
Matlab {\em fminunc}, the unconstrained optimization solver, and use the MAP point as initial guesses to
compute rMAP and RTO samples for $\mc{J}_1$ cost functional. As can be
seen in Figures \figref{rMAPmap1} and \figref{RTOmap1}, both methods
are stuck in a mode. Instead, if we use $\hat{\ub}_j := \ub_0 +
\epsb_j$ as initial guess for computing the $j$th sample we obtain the
results in Figures \figref{rMAP1} and \figref{RTO1},
respectively. Clearly, both methods explore both modes well. Thus,
{\em for rMAP and RTO to work with local optimization solver, it is
  important that initial guesses are well distributed in the parameter
  space}. In fact, good initial guesses also help significantly reduce
the number of forward solves as we will show in the following subsection. 

As a comparison, we employ Matlab's constrained optimization solver {\em fminbnd} with prescribed bound $-100\le \ub \le
100$ to more than sufficient to cover the modes. This optimization
solver computes initial guesses using the {\em golden section
  rule}. The results for rMAP and RTO are shown in \figref{rMAP3} and
\figref{RTO3}: rMAP still works well in this case while RTO is stuck
in the left mode. Thus, rMAP seems to be more robust with optimization
solvers. From numerical experiments we observe that rMAP tends to
displace the original function more than RTO does, and this may
partially explain the robustness of the former. However, rMAP also
seems to ignite ``silent'' mode in the original function as we now
show in Figure \figref{rMAPandRTOhist2} for $\mc{J}_2$. Note that the
original cost function $\mc{J}_2$ has only one mode, but it can become
multi-modal for a range of $\epsb$ and $\sigb$. As can be observed in
Figures \figref{rMAP4} and \figref{RTO4}, rMAP puts a lot of samples
in an artificial mode that was not in the original function, while RTO
does not seem to see the same thing. With the square root Jacobian
correction in Section \secref{metropolis}, we can, in Figure
\figref{rMAP4corrected}, both remove that artificial mode and improve
the histogram for the actual mode. We can also improve the RTO samples
by first taking the RTO density as important sampling density and then
using the important weights to correct RTO samples. The result in
Figure \figref{RTO4corrected} shows that this strategy indeed provides
better histogram as well.

\begin{figure}[h!t!b!]
\begin{center}
\subfigure[rMAP: MAP initial guess]{
\includegraphics[trim=1cm 6.0cm 2cm 7.1cm,clip=true, width=0.3\columnwidth]{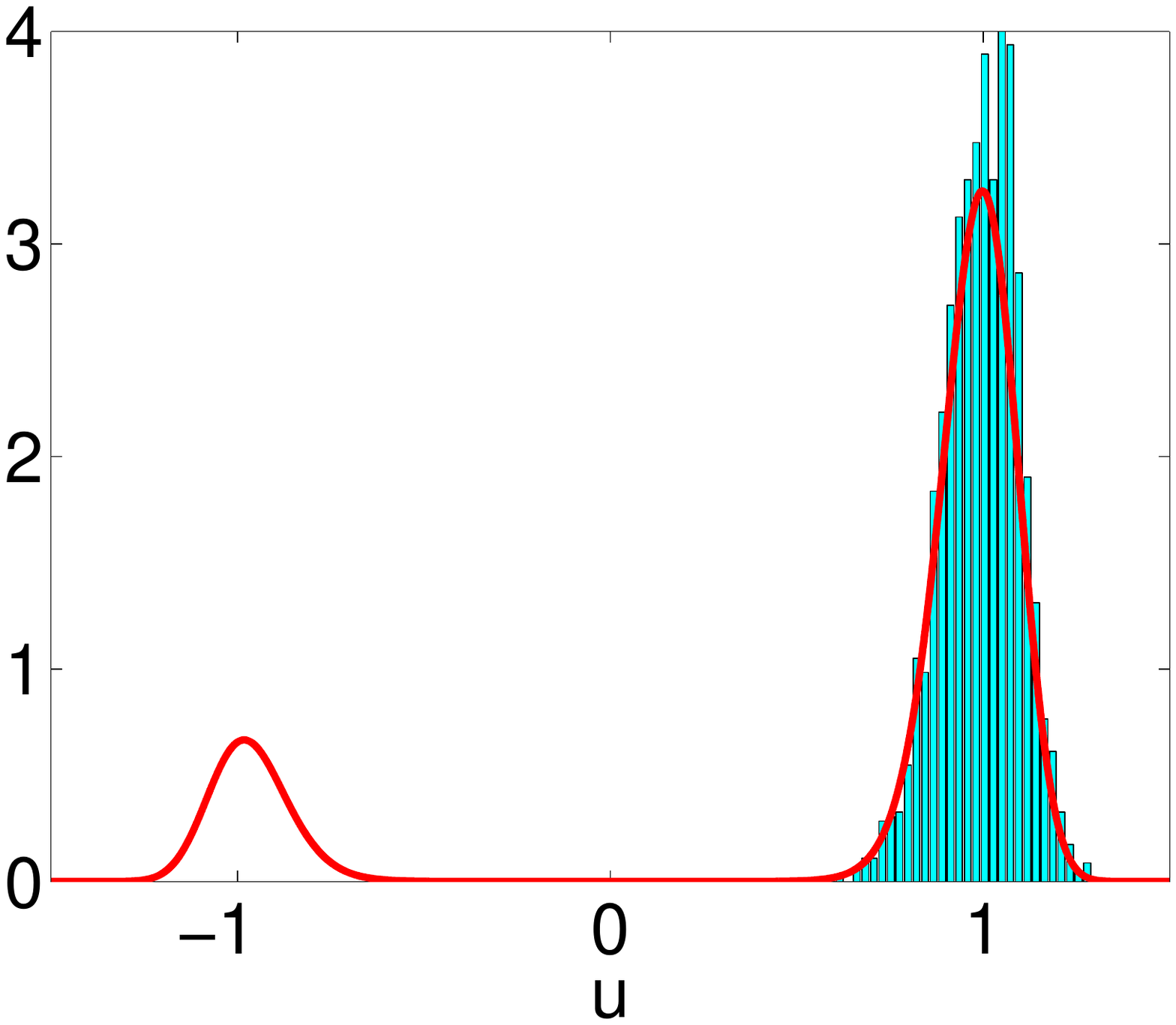}
\figlab{rMAPmap1}
 }
\subfigure[rMAP: random initial guess]{
\includegraphics[trim=1cm 6.0cm 2cm 7.1cm,clip=true, width=0.3\columnwidth]{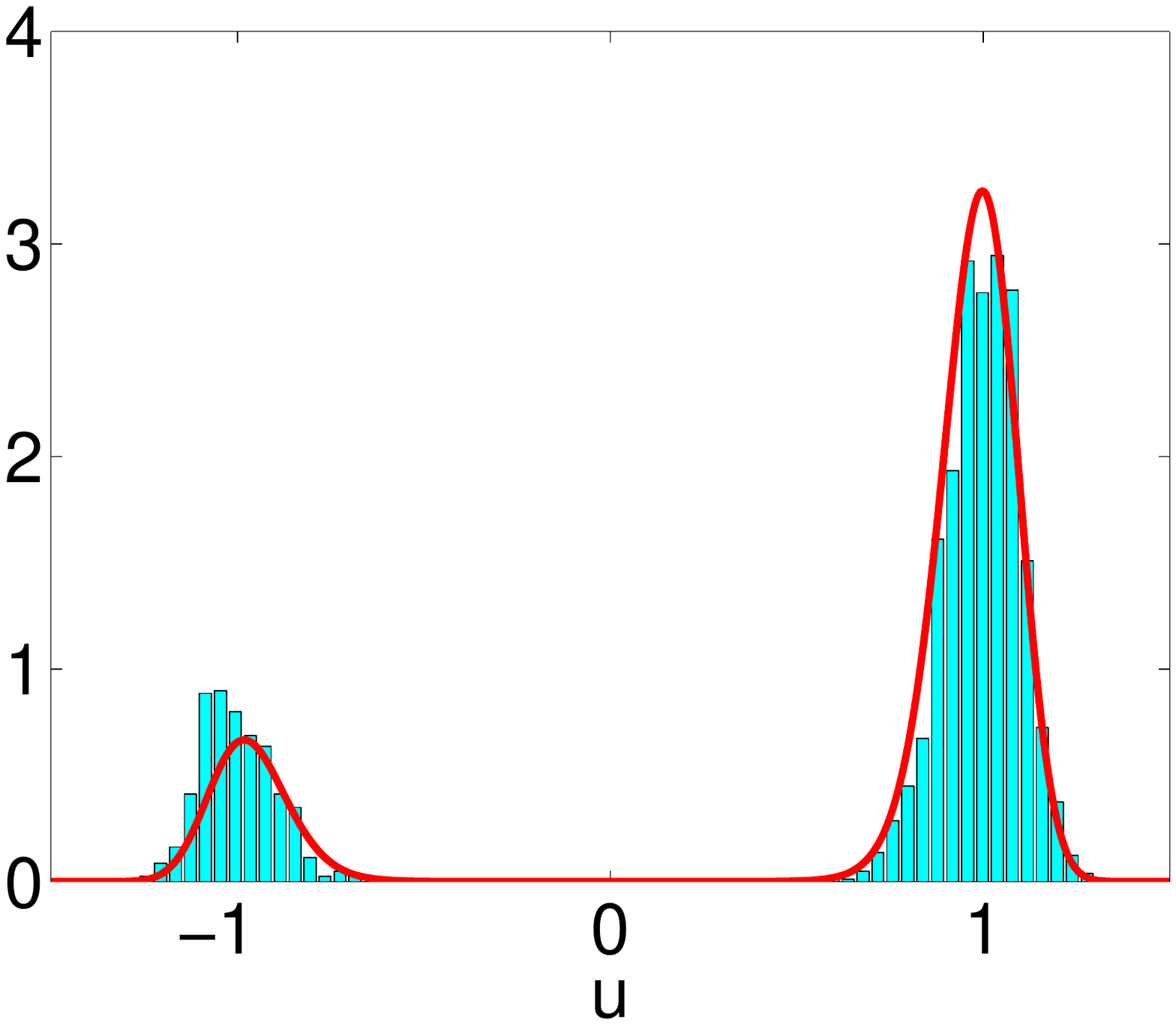}
\figlab{rMAP1}
 }
\subfigure[rMAP: ``Golden section'' initial guess]{
\includegraphics[trim=1cm 6.0cm 2cm 7.1cm,clip=true, width=0.3\columnwidth]{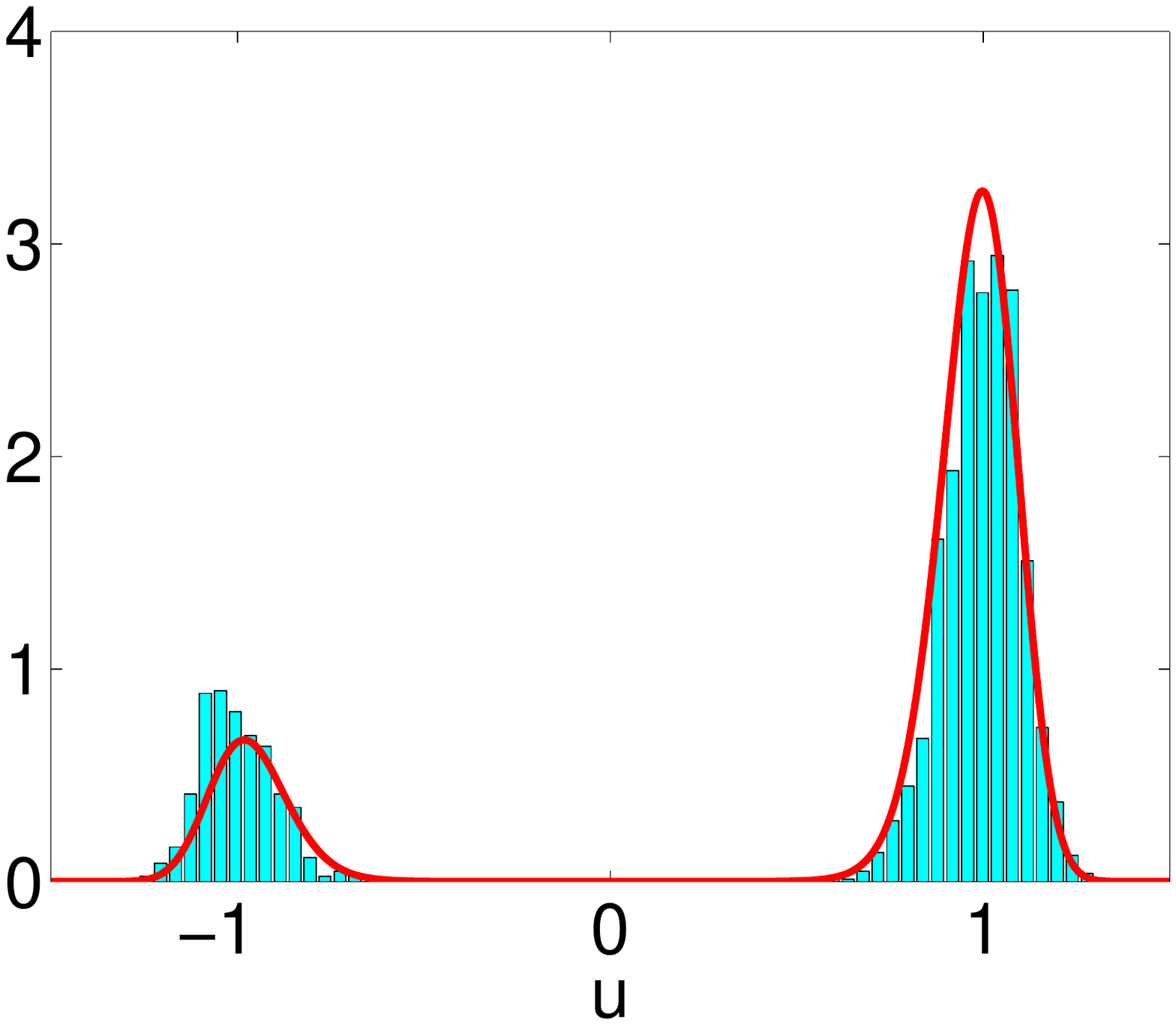}
\figlab{rMAP3}
 }
\subfigure[RTO: MAP initial guess]{
\includegraphics[trim=1cm 6.0cm 2cm 7.1cm,clip=true, width=0.3\columnwidth]{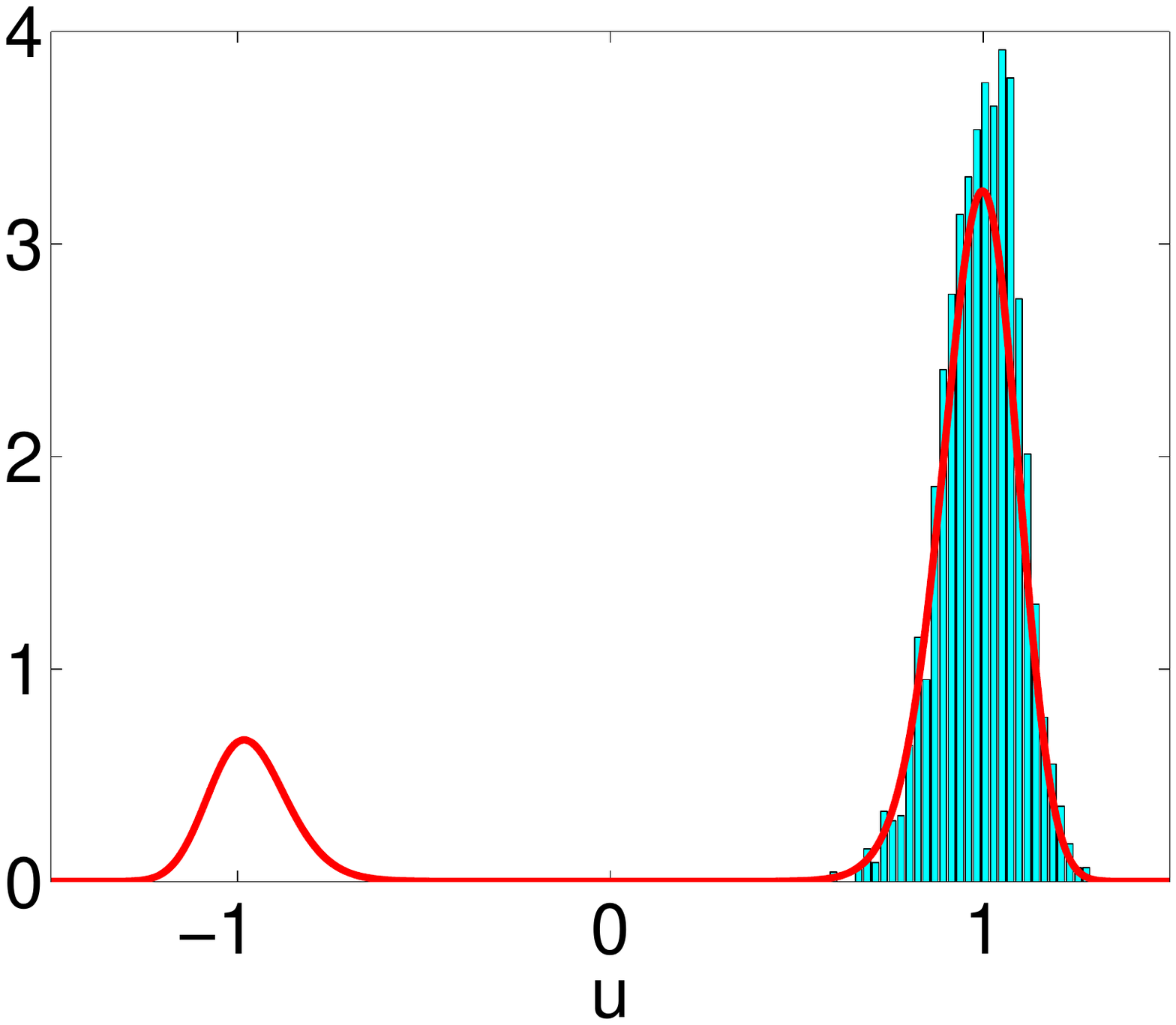}
\figlab{RTOmap1}
 }
\subfigure[RTO: random initial guess]{
\includegraphics[trim=1cm 6.0cm 2cm 7.1cm,clip=true, width=0.3\columnwidth]{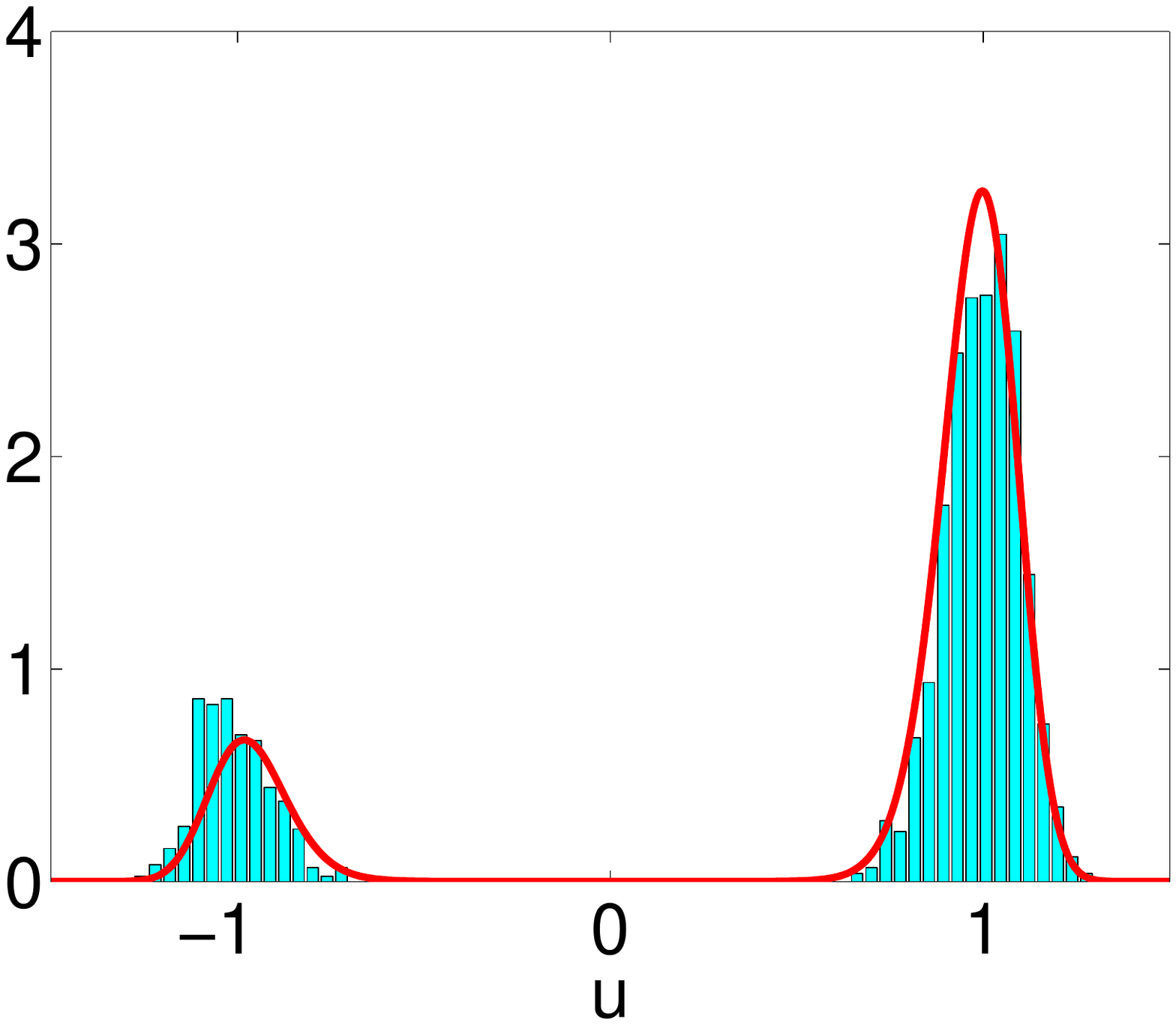}
\figlab{RTO1}
 }
\subfigure[RTO: ``Golden section'' initial guess]{
\includegraphics[trim=1cm 6.0cm 2cm 7.1cm,clip=true, width=0.3\columnwidth]{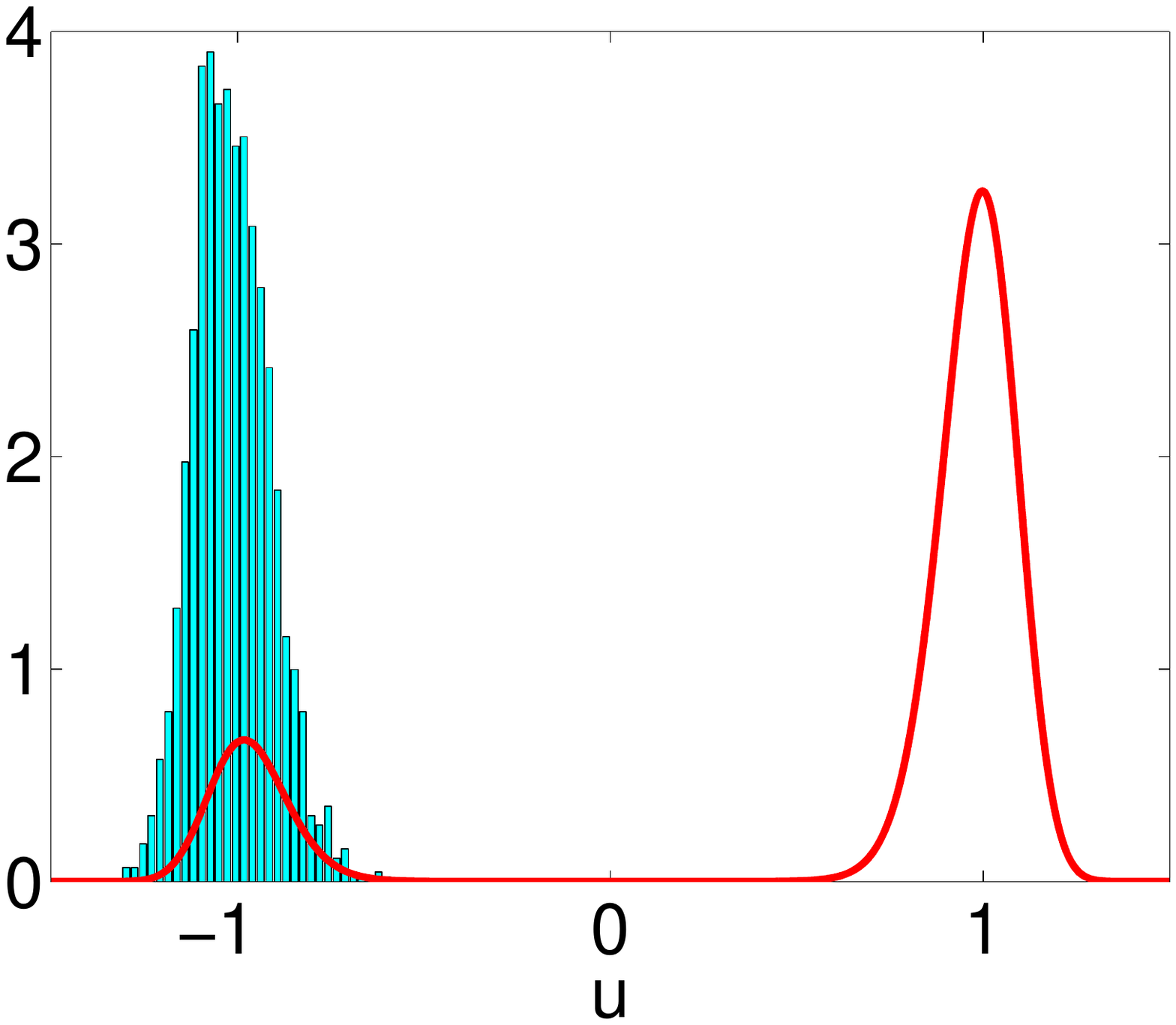}
\figlab{RTO3}
 }
\end{center}
 \caption{Sensitivity of rMAP and RTO with local optimization solvers
   and initial guesses. Figures \figref{rMAPmap1} and \figref{RTOmap1}
   are with {\em fminunc} and MAP initial guess. Figures
   \figref{rMAP1} and \figref{RTO1} are with {\em fminunc} and random
   prior means as initial guesses. Figures \figref{rMAP3} and
   \figref{RTO3} are with {\em fminbnd} and the default golden section
   rule initial guess. The cost functional that is used to conduct
   these experiments is $\mc{J}_1$.}  
 \figlab{rMAPandRTOhist1}
\end{figure}

\begin{figure}[h!t!b!]
\begin{center}
\subfigure[rMAP]{
\includegraphics[trim=1cm 6.0cm 2cm 7.1cm,clip=true, width=0.37\columnwidth]{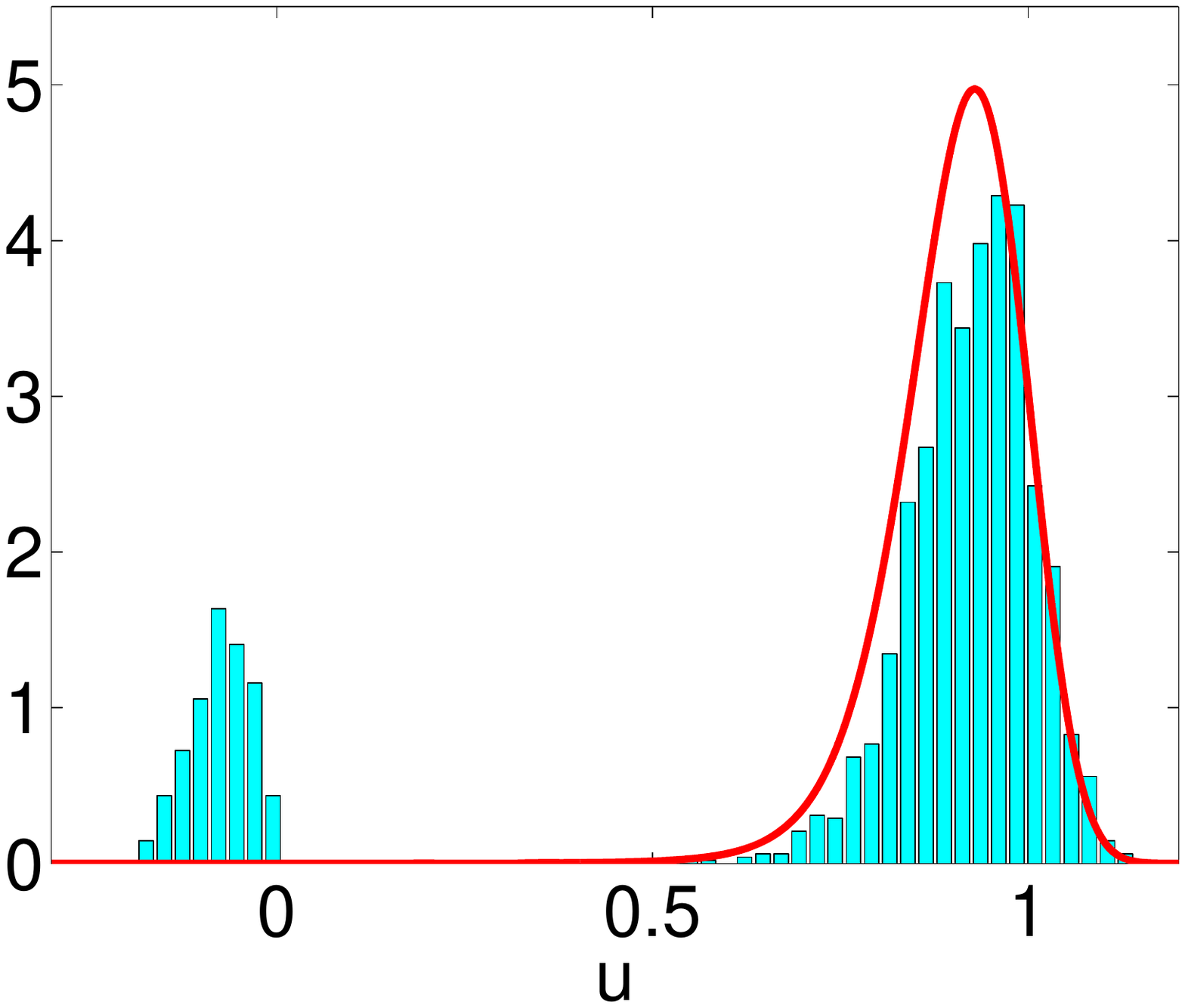}
\figlab{rMAP4}
 }
\subfigure[Corrected rMAP]{
\includegraphics[trim=1cm 6.0cm 2cm 7.1cm,clip=true, width=0.37\columnwidth]{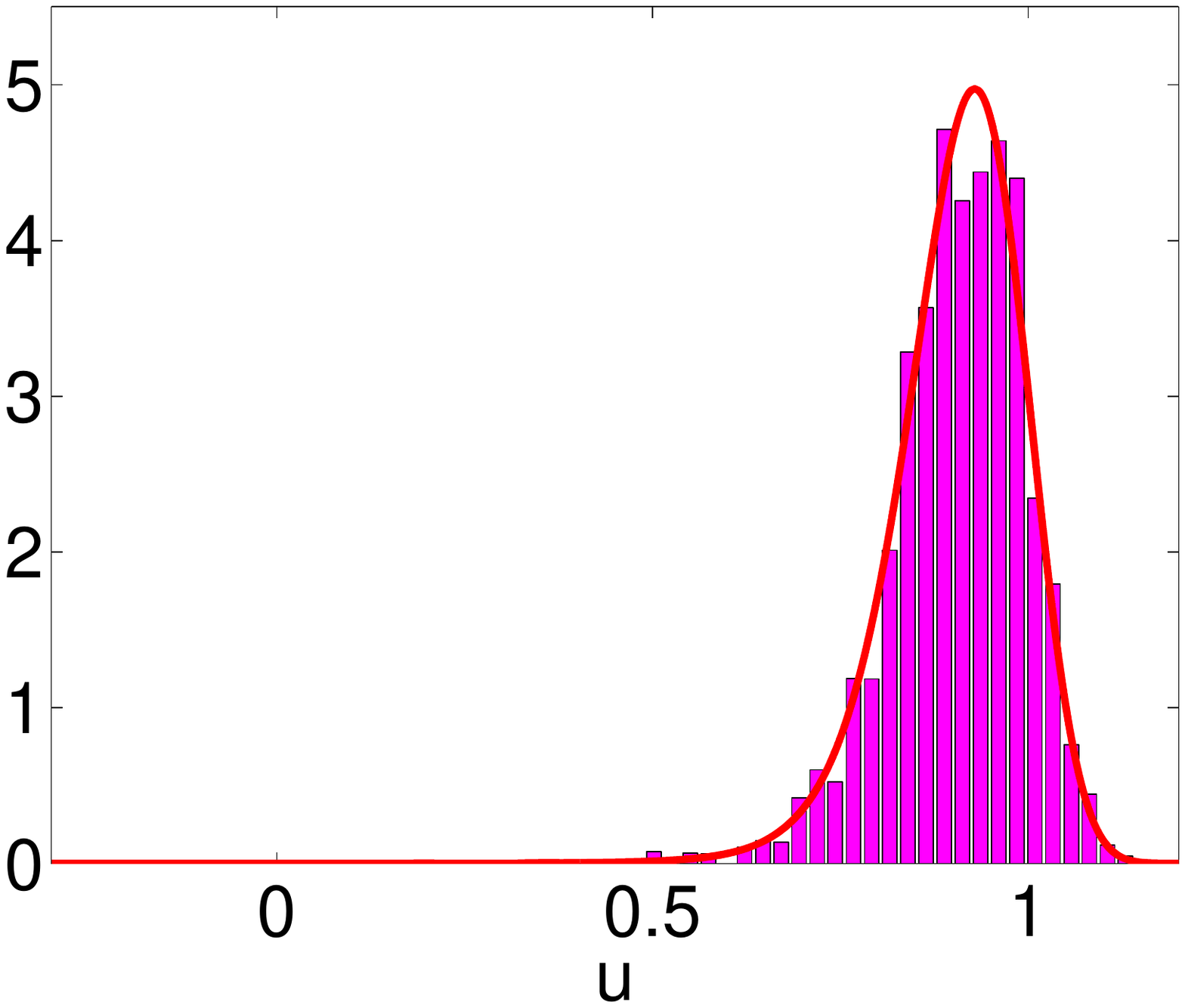}
\figlab{rMAP4corrected}
 }
\subfigure[RTO]{
\includegraphics[trim=1cm 6.0cm 2cm 7.1cm,clip=true, width=0.37\columnwidth]{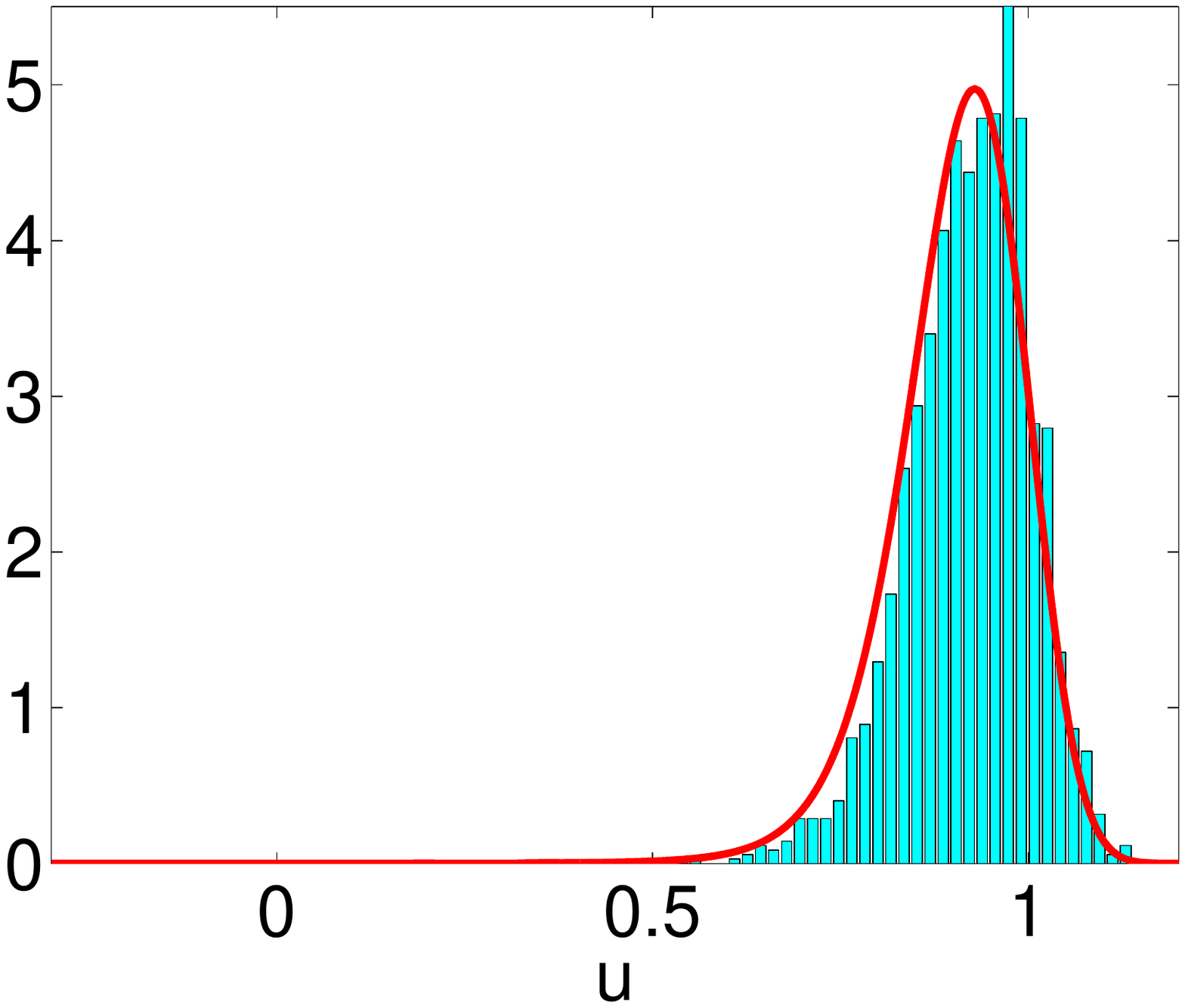}
\figlab{RTO4}
 }
\subfigure[Corrected RTO]{
\includegraphics[trim=1cm 6.0cm 2cm 7.1cm,clip=true, width=0.37\columnwidth]{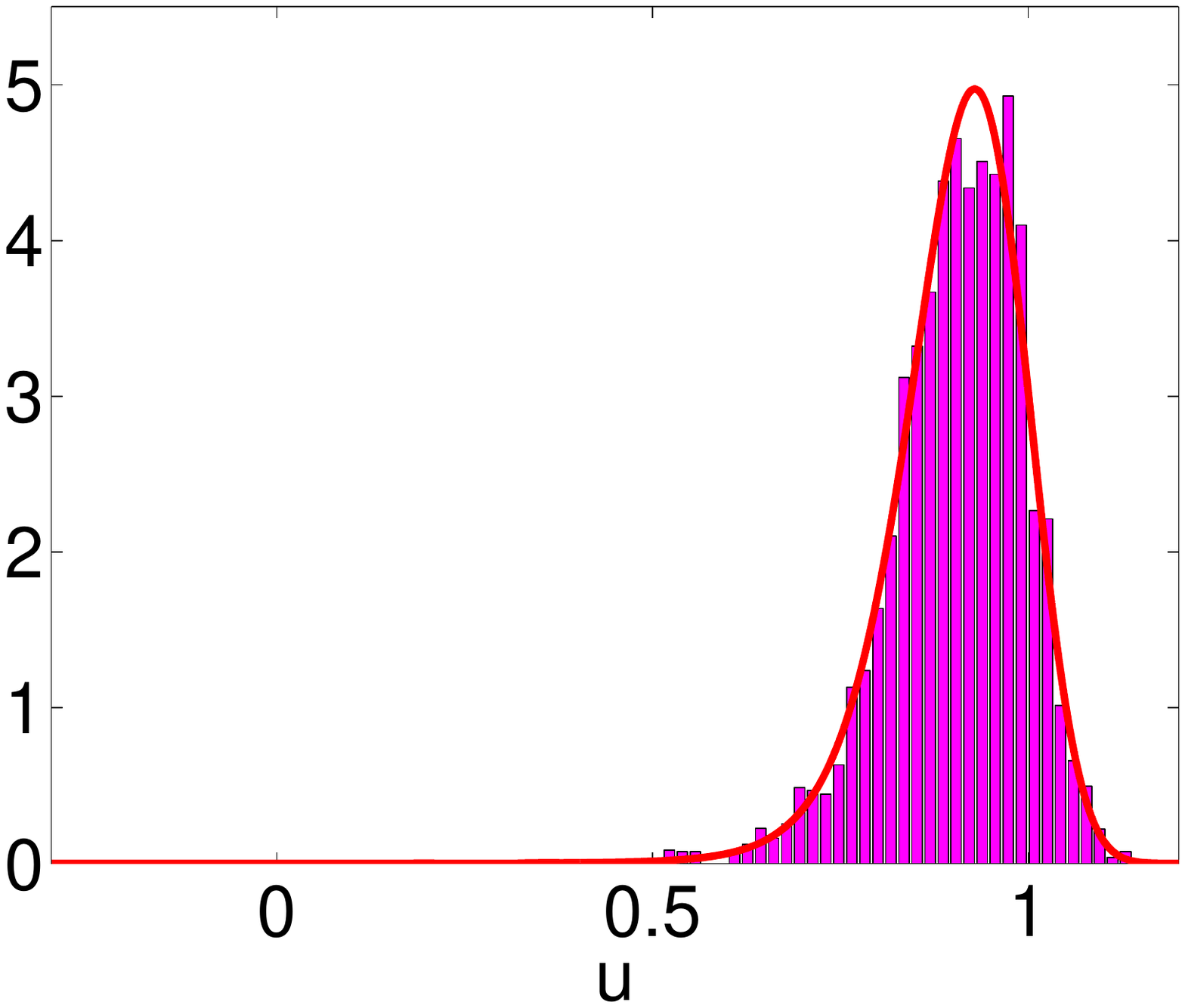}
\figlab{RTO4corrected}
 }
\end{center}
 \caption{Illustration of artificial mode created by rMAP and
   correction strategies for both rMAP and RTO. Correction for rMAP is
   by the square root of the Jacobian in Section \secref{metropolis},
   and the correction for RTO is via important sampling weights. The
   numerical experiments was done for the cost functional $\mc{J}_2$.}
 \figlab{rMAPandRTOhist2}
\end{figure}
\subsubsection{Comparing rMAP and Stochastic Newton methods}
In this section, we will numerically confirm our discussion in Section
\secref{rmapSNconnection} on the improvement of rMAP over the
stochastic Newton method.  For concreteness, we choose $\mc{J}_1$ in
\eqnref{costFunctional1}, a multi-modal function, for the
comparison. We have shown in Section \secref{rmapSNconnection} that
rMAP can be viewed as an iterative stochastic Newton method. It is
this deterministic iteration that can help rMAP explore the sample
space more rapidly.
In particular, rMAP can be
interpreted as a globalization strategy. It is in fact a move away
from the inefficiencies of random-walk/diffusion processes altogether,
toward powerful optimization methods that use derivative information
to traverse the posterior.

For numerical comparison, we compute $1,000$ samples from the
Metropolis-adjusted rMAP sampler and in this case the total number of Newton
iterations is approximately $20,000$. Since the parameter dimension is
one, the total number of (forward and adjoint) PDE solves is $40,000$.
For stochastic Newton method, we take $100,000$ samples. Three
independent chains with three different initial states, namely the origin, the left
and right modes of the posterior distribution, are computed for
both samplers. Figure \figref{rmapSNcomparison} shows the histogram of each chain together with the exact density. We observe that
rMAP chains are capable of sampling both modes and the sampling results are
independent of starting points. On the contrary, SN chains show dependency
on the starting points and they are stuck in local minima.

\begin{figure}[h!t!b!]
\begin{center}
\subfigure[rMAP]{
\includegraphics[width=0.4\columnwidth]{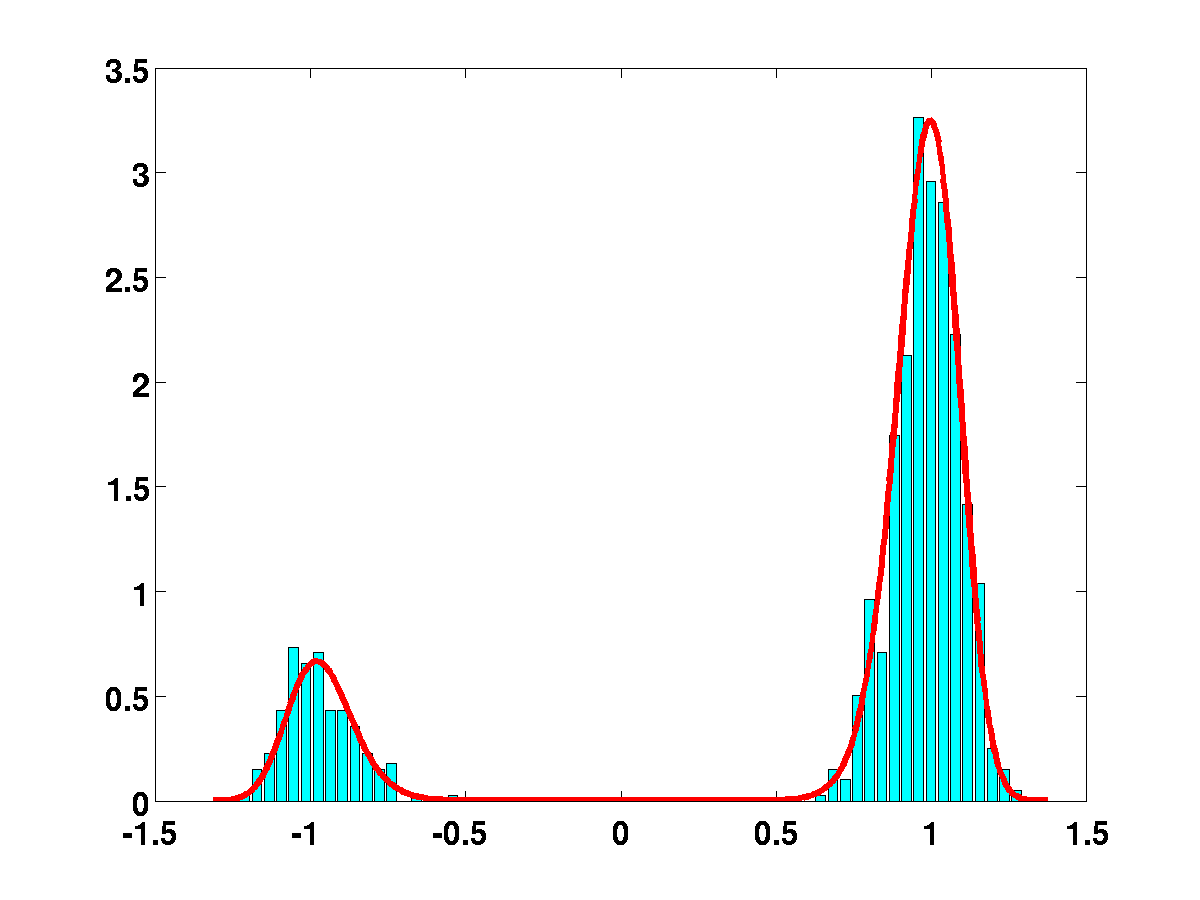}
}
\subfigure[SN (initial state at the left mode)]{
\includegraphics[width=0.4\columnwidth]{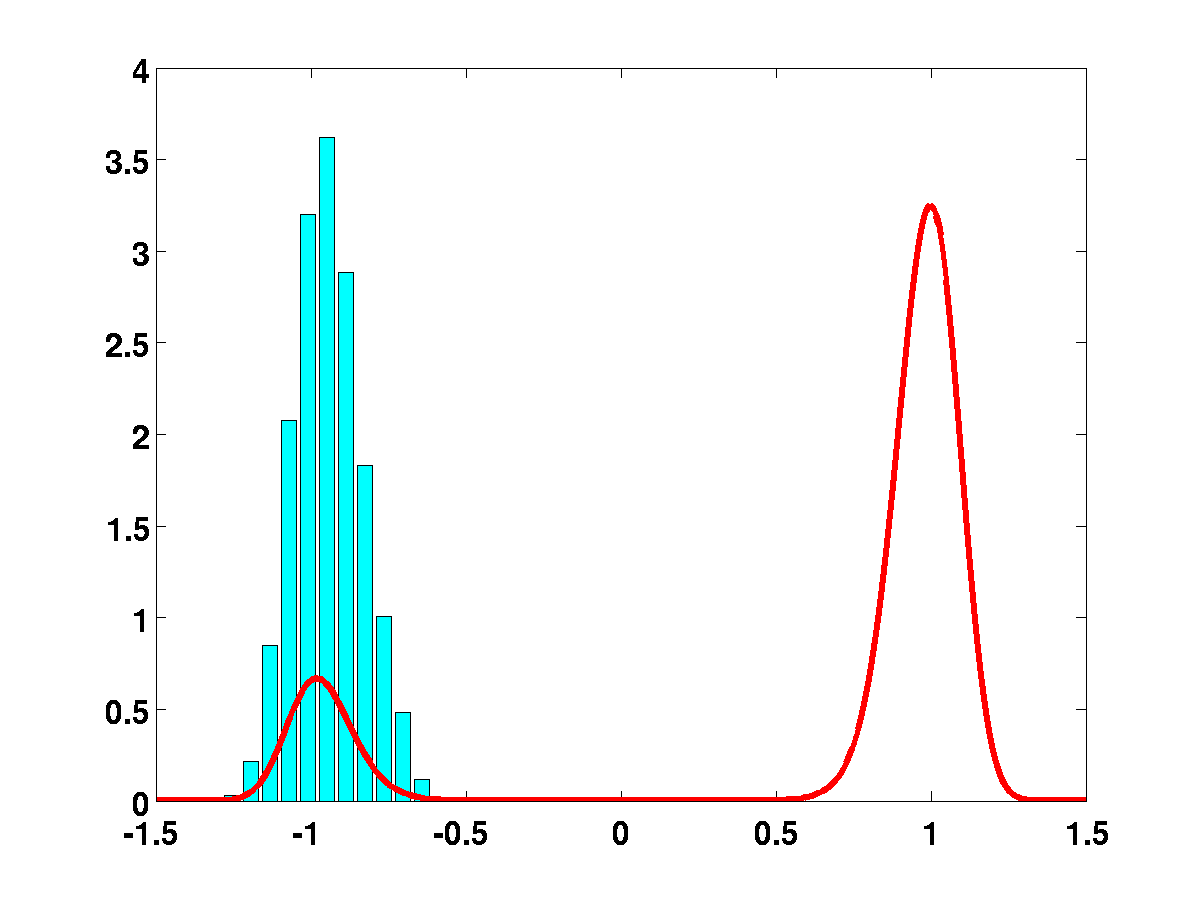}
}
\subfigure[SN (initial state at the origin)]{
\includegraphics[width=0.4\columnwidth]{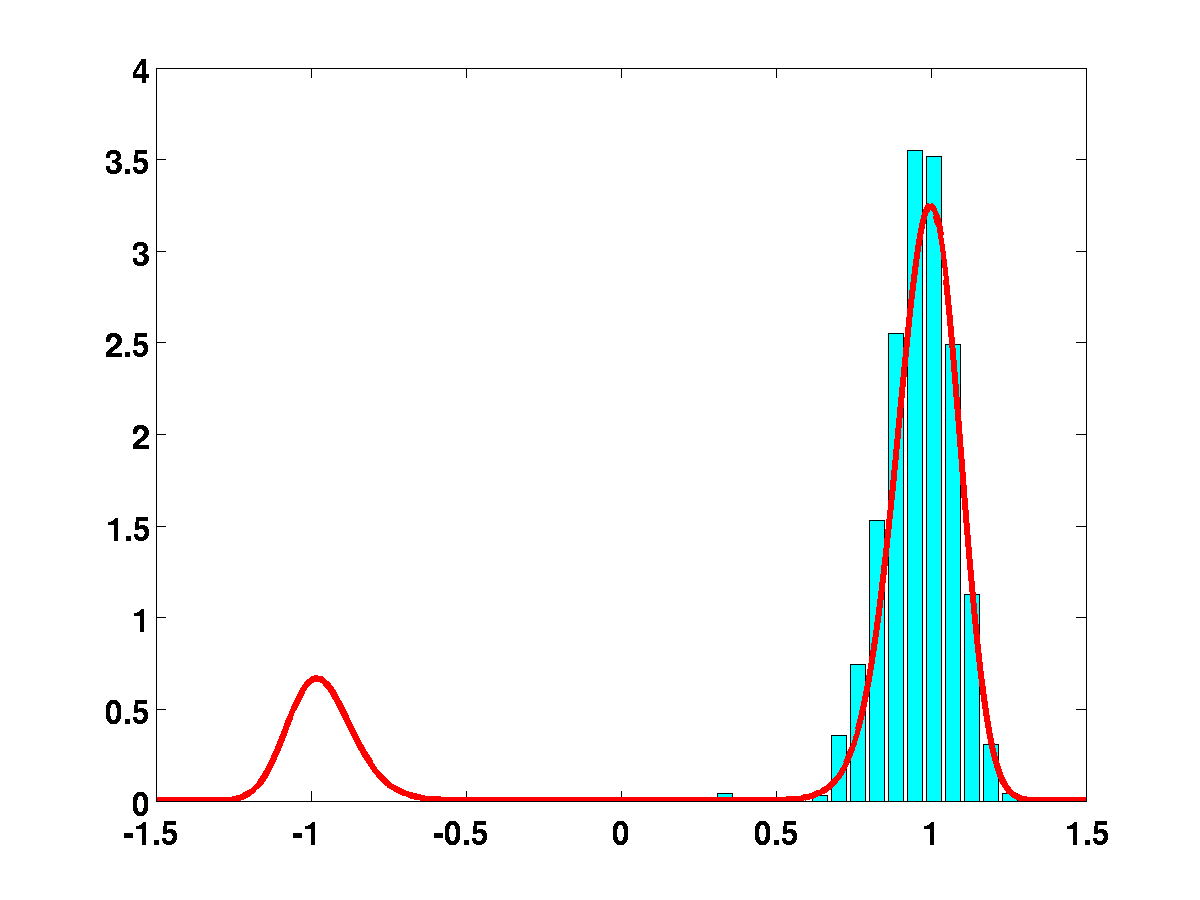}
}
\subfigure[SN (initial state at the right mode)]{
\includegraphics[width=0.4\columnwidth]{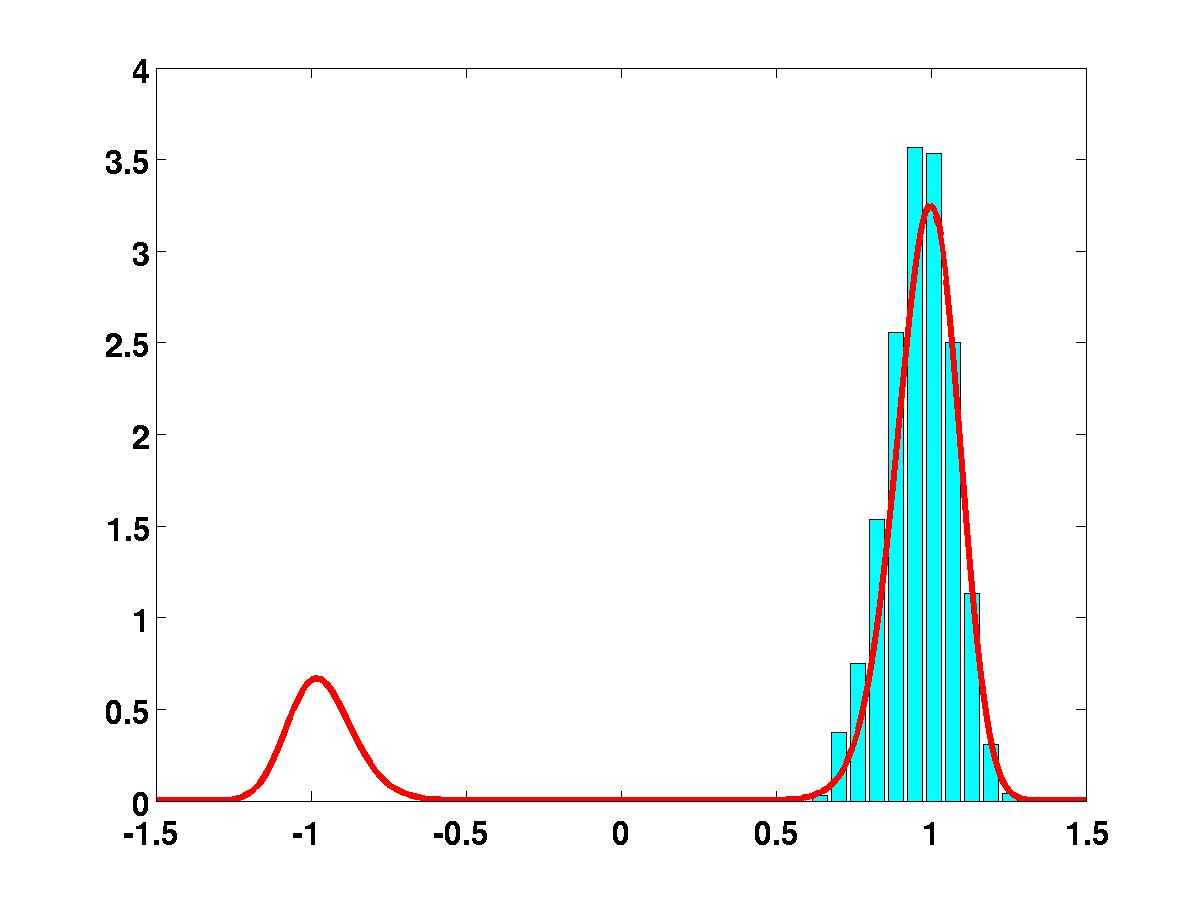}
}
\end{center}
\caption{Comparison of Metropolis-Adjusted rMAP and stochastic Newton
  (SN) MCMC methods for sampling multi-modal problem. Three starting
  points are chosen for these two samplers, namely the left mode,
  zero, and the right mode. The histograms are the same irrespective
  of the starting points for rMAP method, and hence only one plot is
  shown here. While SN chains are trapped in local minima, rMAP
  counterparts traverse the posterior very
  well.}\figlab{rmapSNcomparison}
\end{figure}

\subsubsection{Statistical Convergence of rMAP}
We also numerically examine Proposition \proporef{rMAPsamples} using
cost function $\mc{J}_1$. First, we compute the expectation
$\Ex_{{\sigb\times\epsb}}\LRs{S\LRp{ \ub_0, \db, \sigb, \epsb}}$ using
a tensor product Gauss-Hermite quadrature. Ten independent rMAP chains
are computed, each of which has one million samples. We compute the
averages $\{\frac{1}{n}\sum_{j=1}^n\ub_j\}_{n=1}^{N}, N=10^6$, over each chain and the resuts
are compared to the quadrature based expectation. In Figure
\figref{analyticalConvergence}, it is shown that the approximate mean
of rMAP samples aligns well with the limit
$\Ex_{{\sigb\times\epsb}}\LRs{S\LRp{ \ub_0, \db, \sigb, \epsb}}$, and
hence confirming our theoretical result in Proposition \proporef{rMAPsamples}.

\begin{figure}[h!t!b!]
\begin{center}
 \includegraphics[width=0.6\columnwidth, height=0.25\textheight]{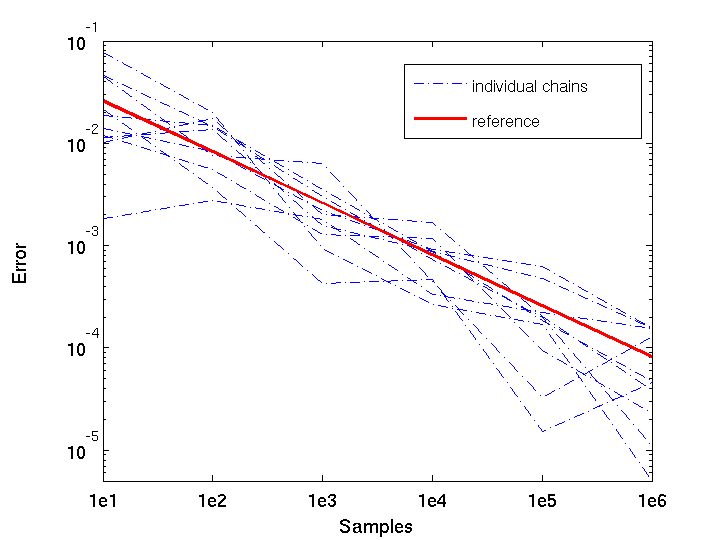}
 \caption{Convergence test of rMAP samples against a quadrature evaluated expectation value. Blue dashed lines show errors of each of ten rMAP chains, each containing one million samples. They align well with the the solid red line, which represents the theoretical, $n^{-\half}$ convergence rate from the central limit theorem.}
 \figlab{analyticalConvergence}
\end{center}
\end{figure}

\subsection{Helmholtz Problems}
\seclab{numericHelmholtz}
Although our proposed framework is valid for Bayesian inverse problems
governed by any system of forward PDEs, here we illustrate the use of the framework
on a frequency domain acoustic wave equation in the form of the
Helmholtz equation. Namely, the forward model $\mc{B}\LRp{\u, \w}$ is defined, in an open and bounded domain $\Omega$, as:
\begin{align*}
-\Grad^2\w-e^{2\u} \w &= 0, \text{ in } \Omega,\\
                \pp{\w}{\mb{n}} &= \g, \text{ on } \pOmega,
\end{align*}
where $\w$ is the acoustic field, $\u$ the logarithm of the
distributed wave number field on $\Omega$, $\mb{n}$ the unit outward
normal on $\pOmega$, and $g$ the prescribed Neumann source on the
boundary.

In the following subsection \secref{adjointHelmholtz}, we first
discuss the computation of the gradient and Hessian of the objective
function using the adjoint method. The adjoint method enables
tractable computation of the MAP estimator, which is crucial to the
rMAP algorithm. In subsection \secref{samplingHelmholtz}, we analyze
the sampling results using rMAP algorithm. Through a comparison
between different optimization settings described above, we
demonstrate the efficiency achieved by using the TRINCG solver and a
good initial guess. In addition, the rMAP samples are compared with
delayed rejection adaptive Metropolis (DRAM) samples, where we observe
that Metropolis-adjusted rMAP samples provide statistical estimates
with similar quality compared to those obtained from DRAM, while
requiring much less computation.

\subsubsection{Computation of the gradient and Hessian-vector product}
\seclab{adjointHelmholtz}
In this section, we briefly discuss about how to compute the gradient
and Hessian-vector product efficiently. Using the standard reduced
space approach, see e.g. \cite{Bui-ThanhGhattas12}, one can show that
the (reduced) gradient $\Grad\mc{J}\LRp{\u; \hat{\u}, \hat{\db}}$ acting in any
direction $\tilde{\u}$ is given by
\[
\LRa{\Grad\mc{J}\LRp{\u; \hat{\u}, \hat{\db}},\tilde{\u}} = 
-2\iOm{\tilde{\u}e^{2\u}\w\tau}
\]
where the adjoint state $\tau$ satisfies the adjoint equation
\begin{subequations}
\eqnlab{Adjointw}
\begin{align}
-\Grad^2\tau-e^{2\u}\tau &= -\frac{1}{\sigma^2}\sum_{j=1}^K\LRp{\w - 
\d_j}\delta\LRp{\mb{x}-\mb{x}_j}  \text{ in } \Omega, \eqnlab{adjoint}\\
                \pp{\tau}{\mb{n}} &= 0, \text{ on } \pOmega. \eqnlab{adjointbc}
\end{align}
\end{subequations}

On the other hand, the Hessian acting in directions $\tilde{\u}$ and
$\stackrel{\simeq}{\u}$ reads
\[
\LRa{\LRa{\Grad^2\mc{J}\LRp{\u; \hat{\u},
      \hat{\db}},\tilde{\u}},\stackrel{\simeq}{\u}} =
-4\iOm{\tilde{\u}\stackrel{\simeq}{\u}e^{2\u}\w\tau} 
-2\iOm{\tilde{\u}e^{2\u}\tilde{\w}\tau}-2\iOm{\tilde{\u}e^{2\u}\w\tilde{\tau}},
\]
where the incremental forward state $\tilde{\w}$ obeys the incremental
forward equation
\begin{subequations}
\eqnlab{IforwardT}
\begin{align}
-\Grad^2\tilde{\w}-e^{2\u}\tilde{\w} &= 2\stackrel{\simeq}{\u}e^{2\u}\w \text{ 
in } \Omega, \eqnlab{Iforward}\\
                \pp{\tilde{\w}}{\mb{n}} &= 0, \text{ on } 
\pOmega,\eqnlab{Iforwardbc}
\end{align}
\end{subequations}
and the incremental adjoint state $\tilde{\tau}$ obeys the incremental
adjoint equation
\begin{subequations}
\begin{align}
-\Grad^2\tilde{\tau}-e^{2\u}\tilde{\tau} &= 2\stackrel{\simeq}{
  \u}e^{2\u}\tau 
-\frac{1}{\sigma^2}\sum_{j=1}^K\tilde{\w}\delta\LRp{\mb{x}-\mb{x}_j} \text{ in 
} \Omega,
\eqnlab{Iadjoint}\\ \pp{\tilde{\tau}}{\mb{n}} &= 0, \text{ on }
\pOmega. \eqnlab{Iadjointbc}
\end{align}
\end{subequations}

We shall compare our TRNCG optimization solver with the popular
Levenberg-Marquardt approach (see, e.g.,
\cite{NocedalWright06, OliverReynoldsLiu08}). For that reason, we need
to compute the Gauss-Newton Hessian-vector product. It can be shown
that the Gauss-Newton Hessian acting in directions $\tilde{\u}$ and
$\stackrel{\simeq}{\u}$ reads
\[
\LRa{\LRa{\Grad^2\mc{J}_{GN}\LRp{\u; \hat{\u},
      \hat{\db}},\tilde{\u}},\stackrel{\simeq}{\u}} =
-2\iOm{\tilde{\u}e^{2\u}\w\tilde{\tau}},
\]
where the incremental forward state $\tilde{\w}$ still satisfies
\eqnref{IforwardT}, but the incremental adjoint state $\tilde{\tau}$
now obeys the following incremental adjoint equation
\begin{subequations}
\begin{align}
-\Grad^2\tilde{\tau}-e^{2\u}\tilde{\tau} &= 
-\frac{1}{\sigma^2}\sum_{j=1}^K\tilde{\w}\delta\LRp{\mb{x}-\mb{x}_j} \text{ in 
} \Omega,
\eqnlab{IadjointGN}\\ \pp{\tilde{\tau}}{\mb{n}} &= 0, \text{ on }
\pOmega. \eqnlab{IadjointbcGN}
\end{align}
\end{subequations}

\subsubsection{Sampling results}
\seclab{samplingHelmholtz}
Now we show the application of rMAP methods
to quantify the uncertainty for the inverse problem governed by the above Helmholtz
forward model. We create two experiments to  compare and test the
methods described above. Finite element discretization of the
prior results in a parameter field with $94$ parameters for both
experiments. Since the experiments aim at testing algorithms rather
than demonstrating Bayesian modeling, we conveniently fix the noise
level for both experiments to be $1\%$. On the other hand, we use different prior
coefficients $\alpha$ to control the `easiness' of sampling. We choose
$\alpha = 8.0$ for the first experiment and $\alpha = 3.0$ for the
second experiment---these numbers are chosen after trials to clearly
represent two situations: a prior dominant case and a likelihood
dominant case. When the model is prior dominant, the inverse problem
resembles a linear inverse problem for which, following Lemma
\lemref{linearRML}, rMAP should provide exact posterior samples. On
the other hand, for the likelihood dominant case, due to the
non-linearity of the forward model, the rMAP samples are no longer
exact posterior ones and Metropolization becomes necessary.

For each of these two experiments, we draw a sample from the prior distribution and solve the forward
equation \eqnref{Forwardw} to generate a set of synthetic data as shown in Figure \figref{refall}. Then, we sample the Bayesian model with four variants of the rMAP method: with trust-region inexact Newton-CG (TRINCG) or Levenberg-Marquardt (LM)
and with/without good initial guesses. As a comparison, we also sample the model with DRAM sampler of five million samples which we consider long enough to be convergent.

\begin{figure}[h!t!b!]
\begin{center}
\subfigure[Synthetic $\u$ $\LRp{\alpha=8.0}$]{
\includegraphics[trim=1cm 6.0cm 2cm 7.1cm,clip=true,width=0.35\columnwidth]{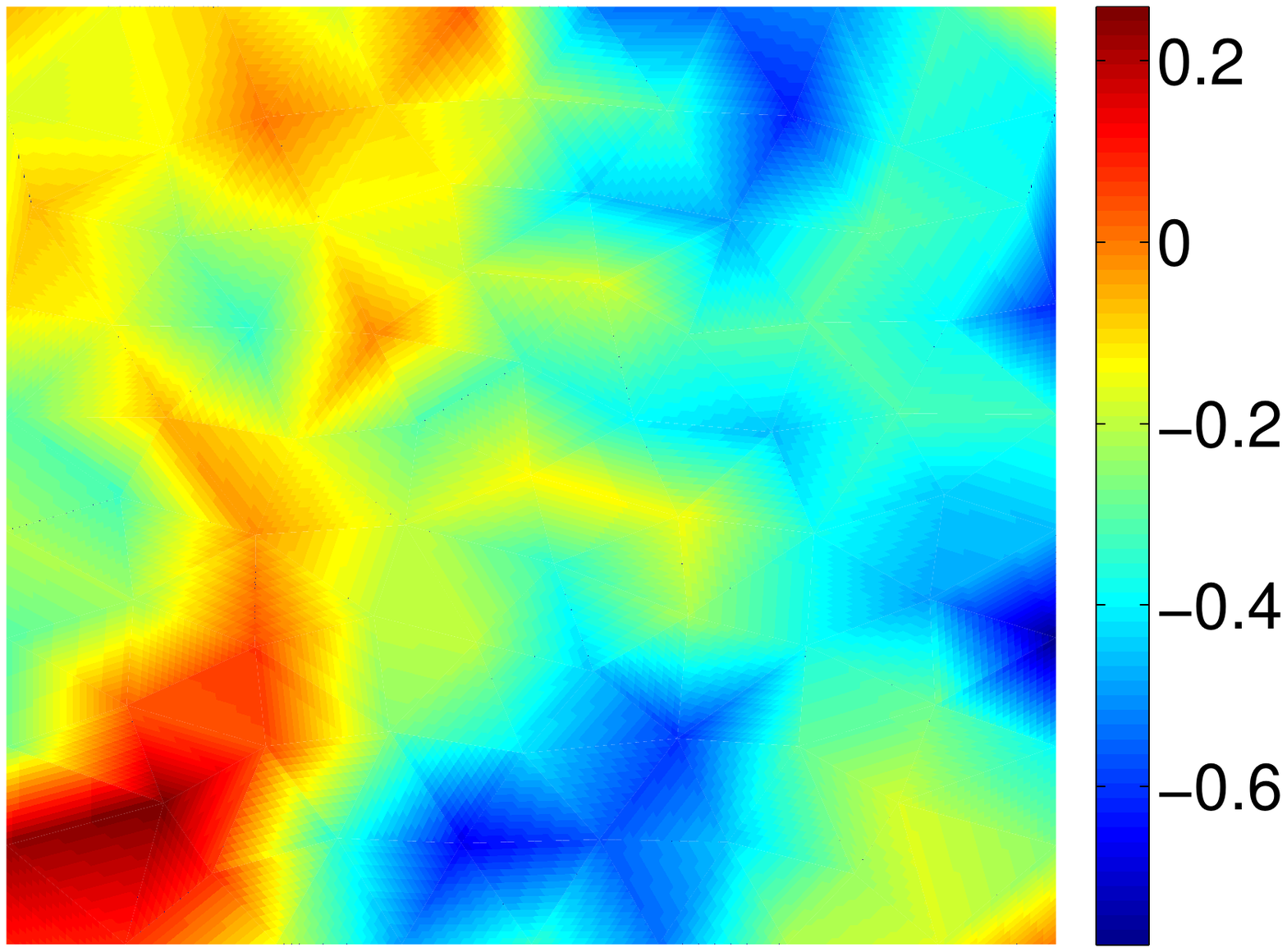}
 }\figlab{highref}
\subfigure[Synthetic $\u$ $\LRp{\alpha=3.0}$]{
\includegraphics[trim=1cm 6.0cm 2cm 7.1cm,clip=true, width=0.35\columnwidth]{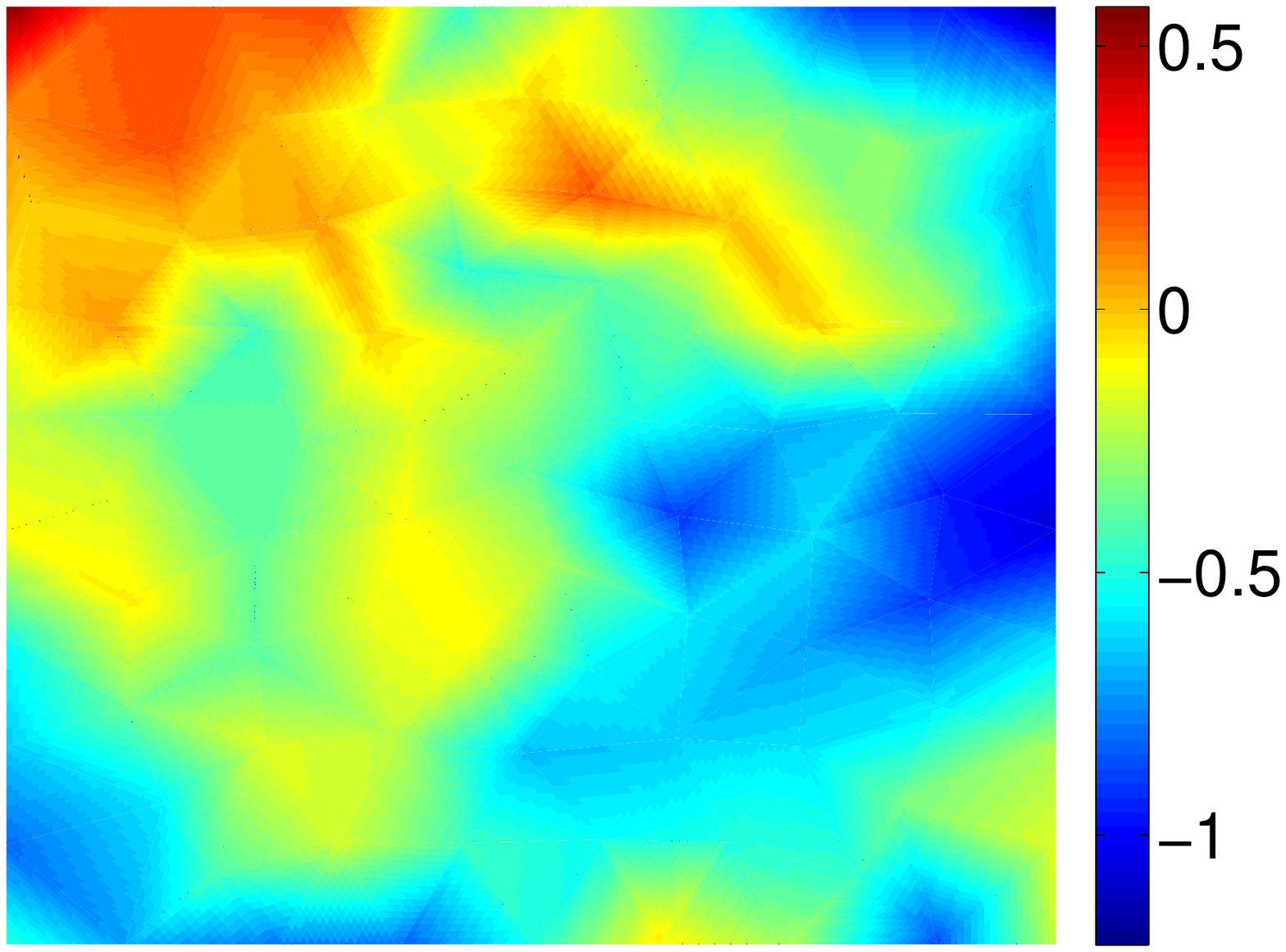}
 }\figlab{lowref}
 \caption{Synthetic parameter $\u$ for two numerical experiments.}
 \figlab{refall}
\end{center}
\end{figure}

Since rMAP samples are not exact posterior samples for nonlinear problems, it is not necessary to demand high accuracy (and hence high cost) in each optimization solution. Yet, we still hope that for these loosely approximate rMAP samples, the proposed Metropolization can effectively correct them towards the posterior distribution. To that end, we set large tolerances: $\varepsilon_F = \varepsilon_X =
\varepsilon_G = 10^{-4}$ for the first experiment and $\varepsilon_F = \varepsilon_X =
\varepsilon_G = 10^{-6}$ for the second one. For a similar reason, we limit the maximal allowed number of iterations to $150$ and $200$ respectively to further control the computational costs.

For each experiment and each variant of the method, we compute 1000 rMAP samples. Within each experiment, we use the same randomly perturbed sequences $\{\hat{\u}\}_{i=1}^{1000}$ and $\{\hat{\db}\}_{i=1}^{1000}$ for all four rMAP methods. Ideally, with this setting,
these methods should produce exactly same rMAP samples if each
optimizer had converged. In practice, the acquired samples are different among these methods due to the tolerance and iteration control. 

Figures \figref{highallcompare} and \figref{lowallcompare} show the estimated conditonal mean and variance for the high prior and the low prior cases respectively. In both cases, the plain rMAP samples have non-negligible approximation errors. These errors are successfully corrected with a Metropolization using weights described in Section \secref{metropolis}. We point out that among the four variants of rMAP methods, the one that uses TRINCG and good initial guesses has shown optimal performance. Its statistical estimates are close to that of the DRAM sampler in both experiments. This indicates the fast convergence of the TRINCG method that despite we have relaxed the convergence criteria and limited the number of iterations, TRINCG has always been able to get close to the real optimizer rapidly. Our proposed method of computing initial guesses has further ensured its efficiency. To give a closer look, we show a comparison between sampling estimates of the DRAM and the rMAP using TRINCG with good initial guesses in Figures \figref{highcomparedramrmap} and \figref{lowcomparedramrmap}.

\begin{figure}[h!t!b!]
\begin{center}
\subfigure[mean (rMAP)]{
\includegraphics[trim=1cm 6.0cm 2cm 7.1cm,clip=true,width=0.45\columnwidth, height=0.2\textheight]{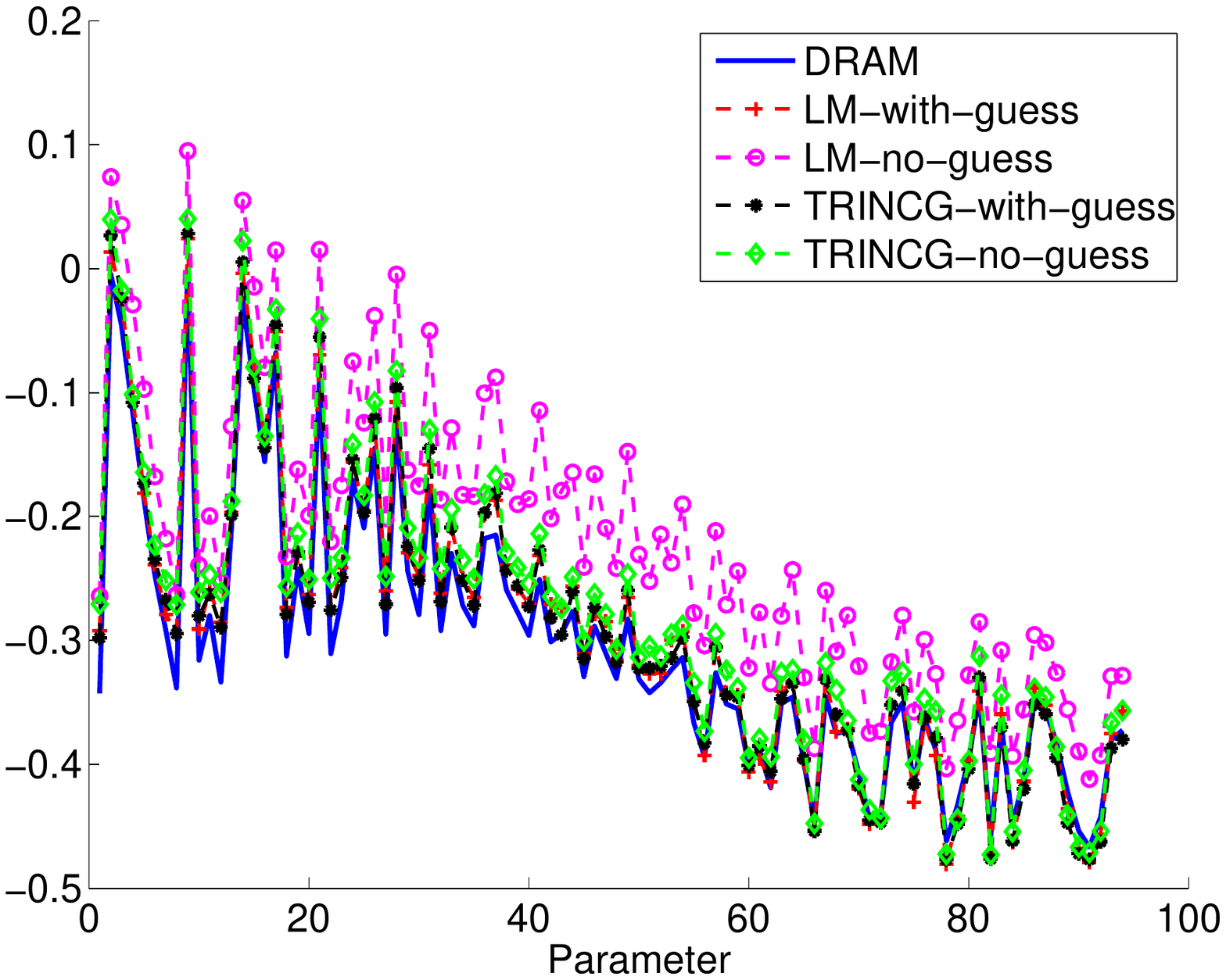}
}
\subfigure[mean (weighted-rMAP)]{
\includegraphics[trim=1cm 6.0cm 2cm 7.1cm,clip=true,width=0.45\columnwidth, height=0.2\textheight]{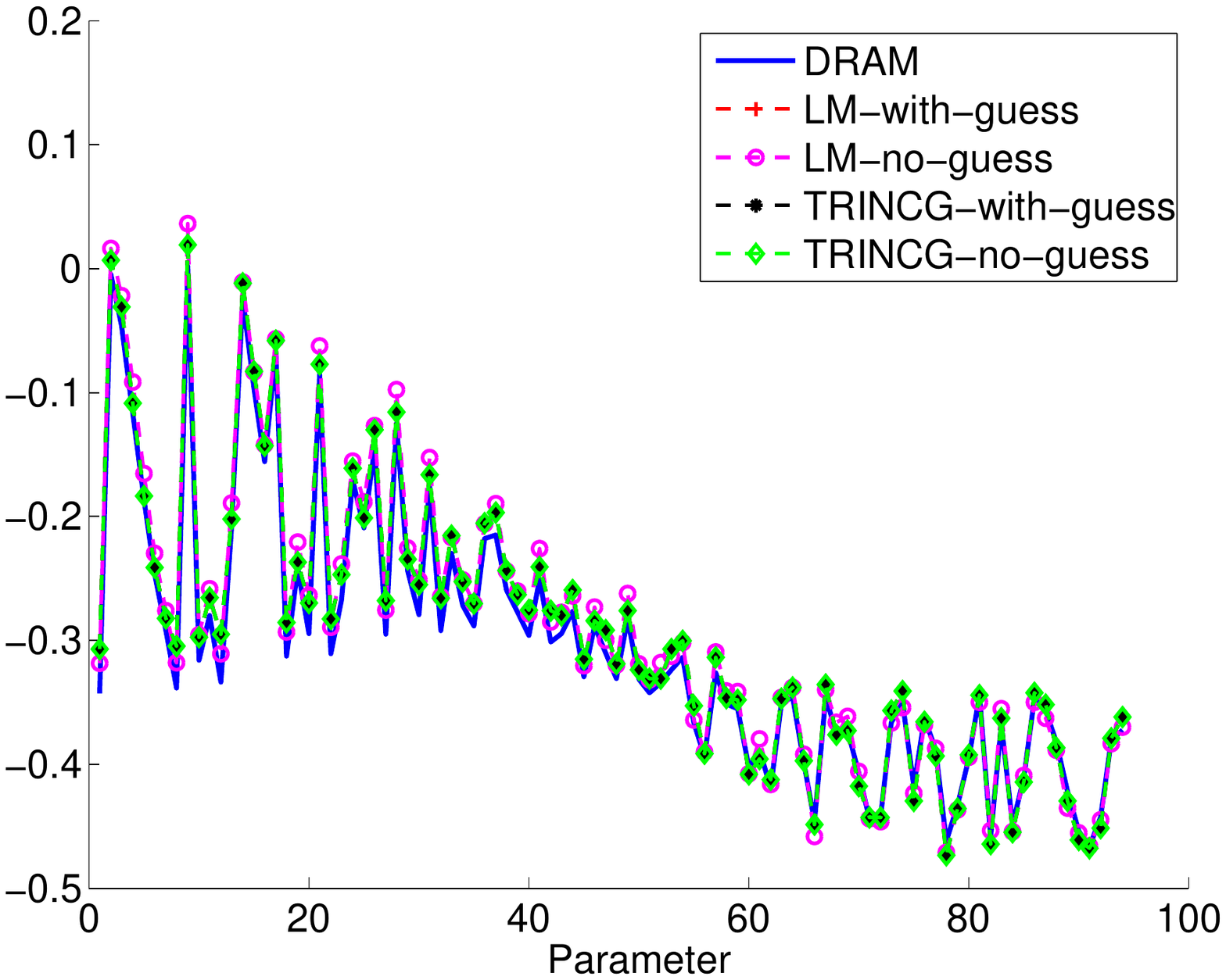}
}
\subfigure[Variance (rMAP)]{
\includegraphics[trim=1cm 6.0cm 2cm 7.1cm,clip=true,width=0.45\columnwidth,height=0.2\textheight]{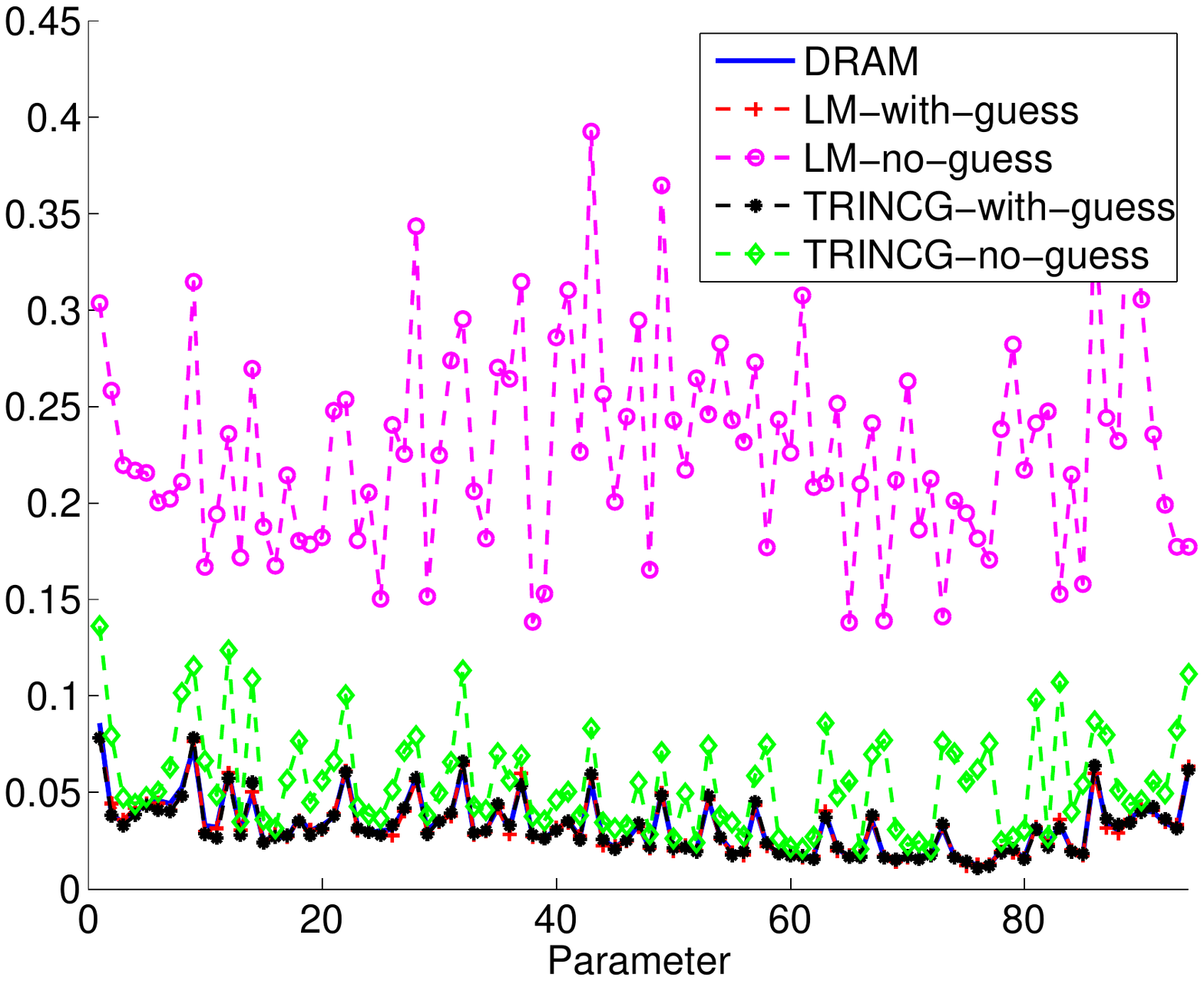}
}
\subfigure[Variance (weighted-rMAP)]{
\includegraphics[trim=1cm 6.0cm 2cm 7.1cm,clip=true,width=0.45\columnwidth,height=0.2\textheight]{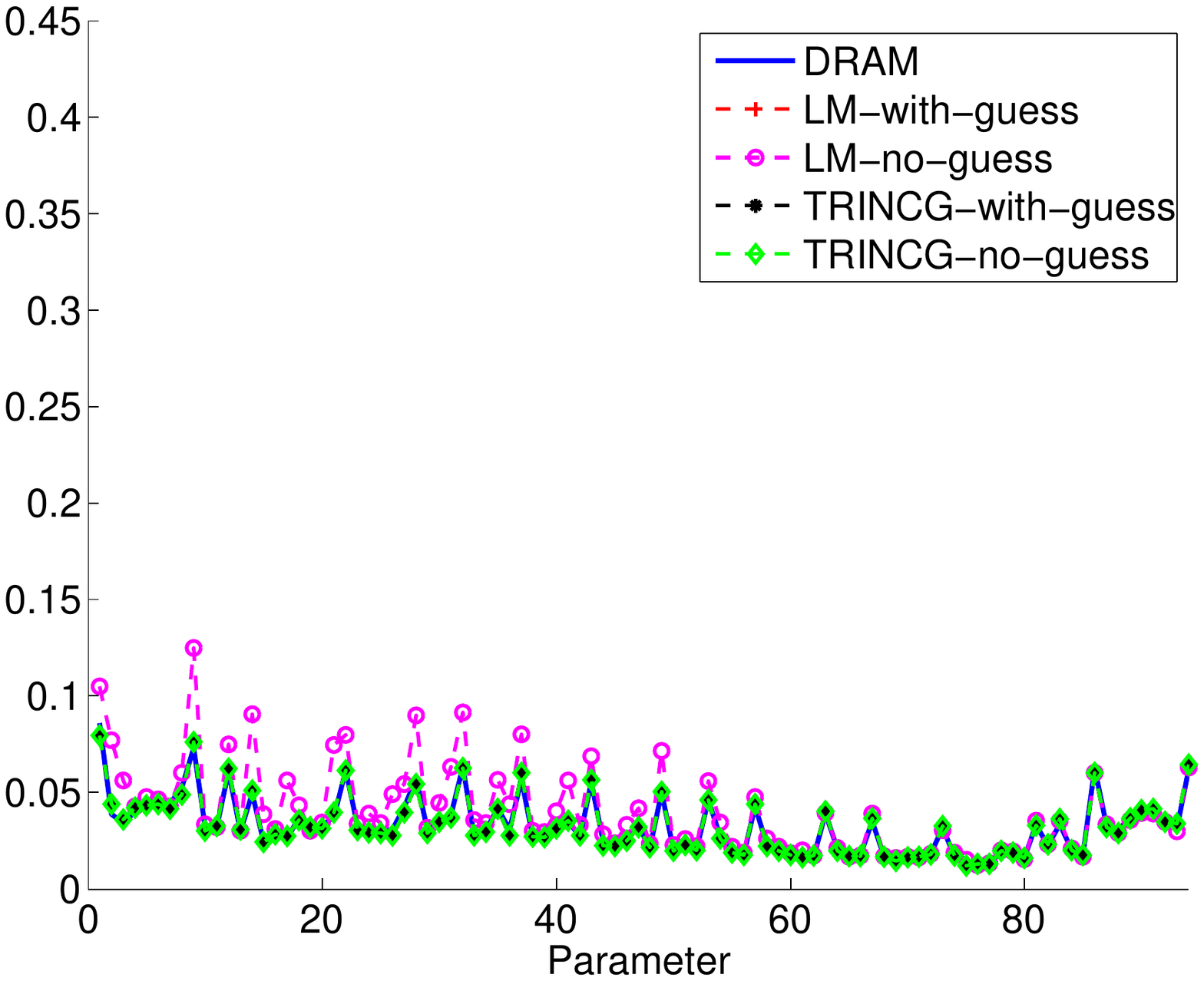}
}
\caption{ Case $\alpha = 8.0$: Comparison of estimated statistics from all samplers. Top row shows the conditonal mean estimate between (a) DRAM and rMAP samples; and (b) DRAM and Metropolized rMAP samples. Bottom row shows corresponding comparison of variance estimates.}
\figlab{highallcompare}
\end{center}
\end{figure}

\begin{figure}[h!t!b!]
\begin{center}
\subfigure[mean(rMAP)]{
\includegraphics[trim=1cm 6.0cm 2cm 7.1cm,clip=true,width=0.45\columnwidth,height=0.2\textheight]{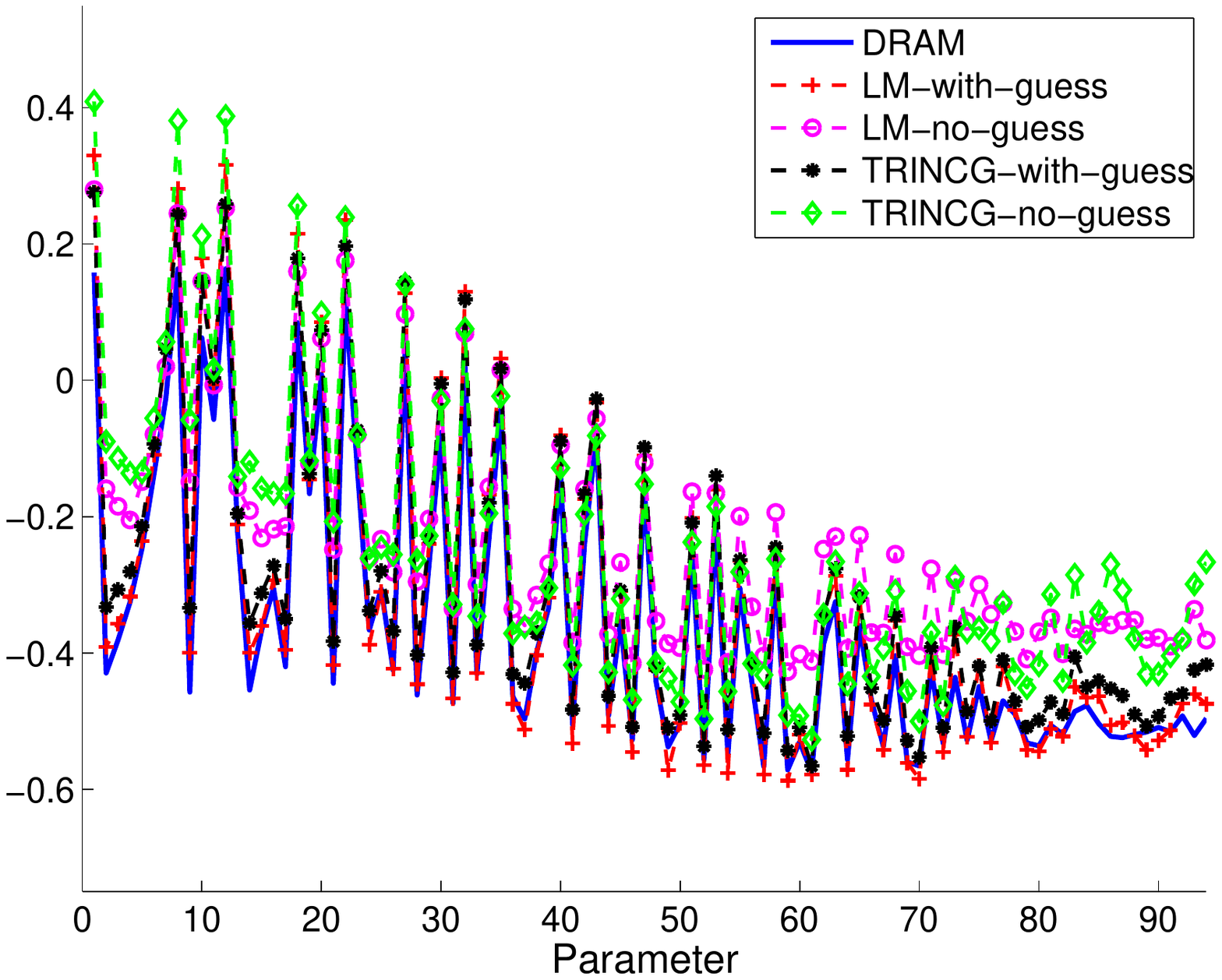}
}
\subfigure[mean(weighted-rMAP)]{
\includegraphics[trim=1cm 6.0cm 2cm 7.1cm,clip=true,width=0.45\columnwidth,height=0.2\textheight]{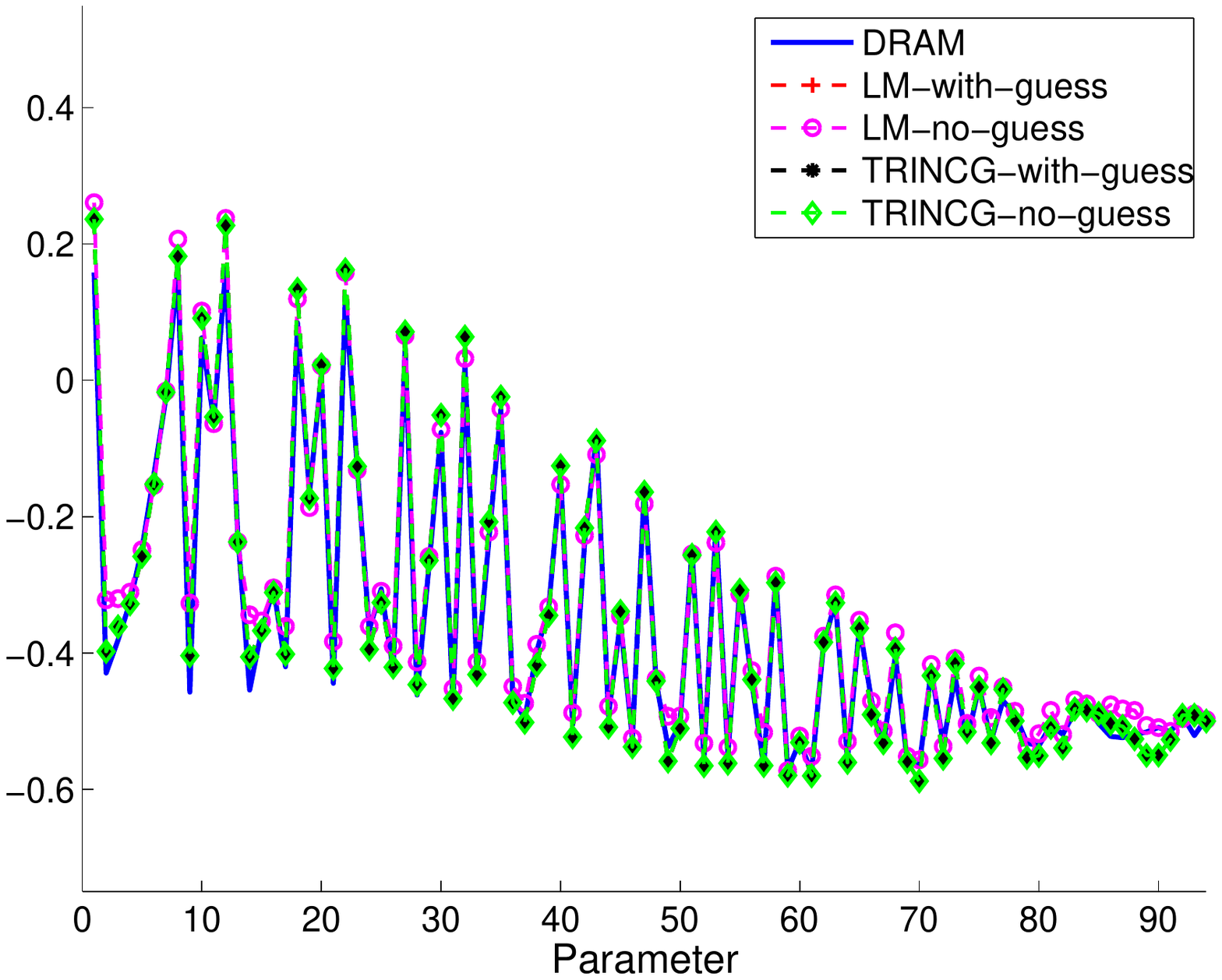}
}
\subfigure[Variance(rMAP)]{
\includegraphics[trim=1cm 6.0cm 2cm 7.1cm,clip=true,width=0.45\columnwidth,height=0.2\textheight]{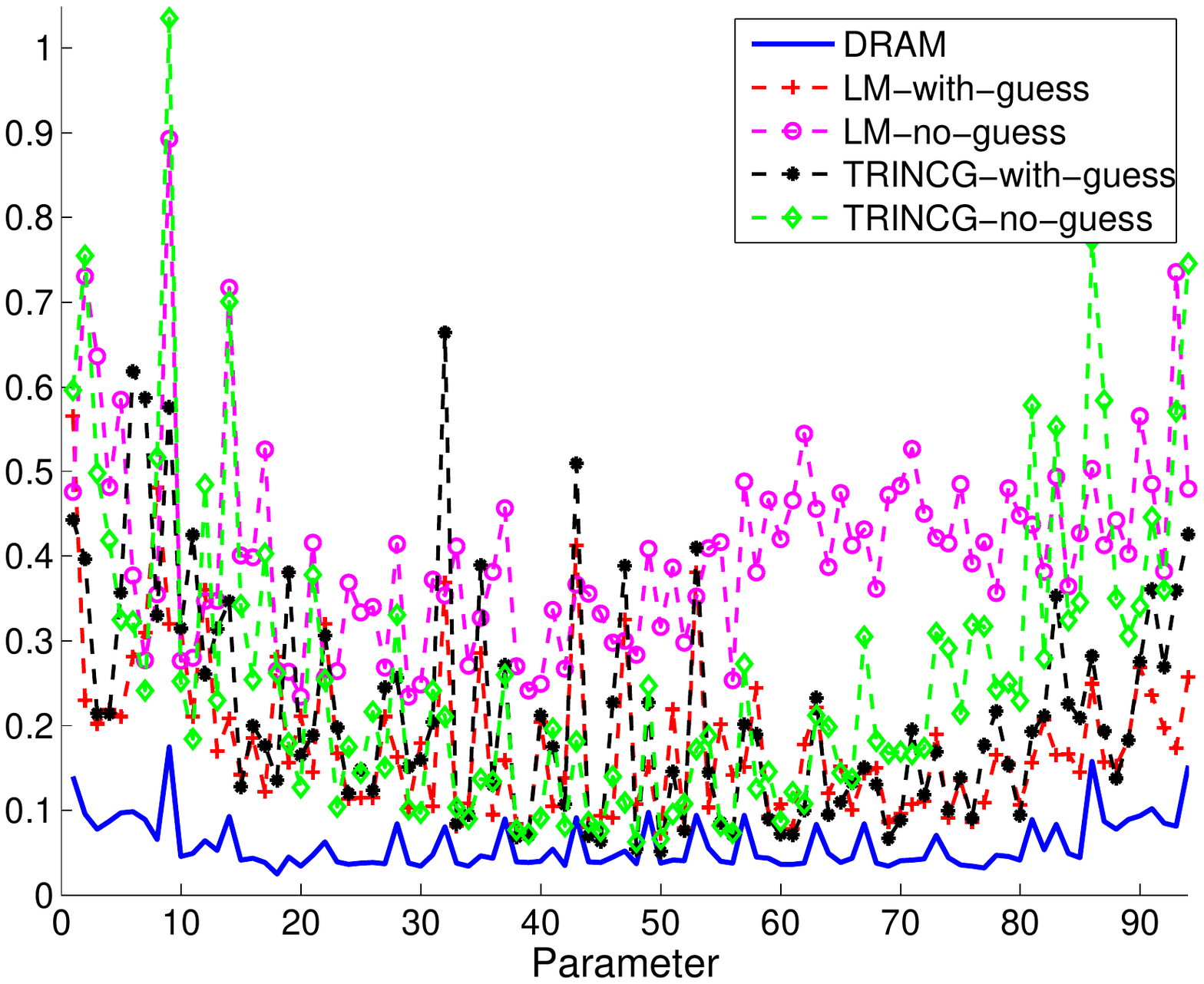}
}
\subfigure[Variance(weighted-rMAP)]{
\includegraphics[trim=1cm 6.0cm 2cm 7.1cm,clip=true,width=0.45\columnwidth,height=0.2\textheight]{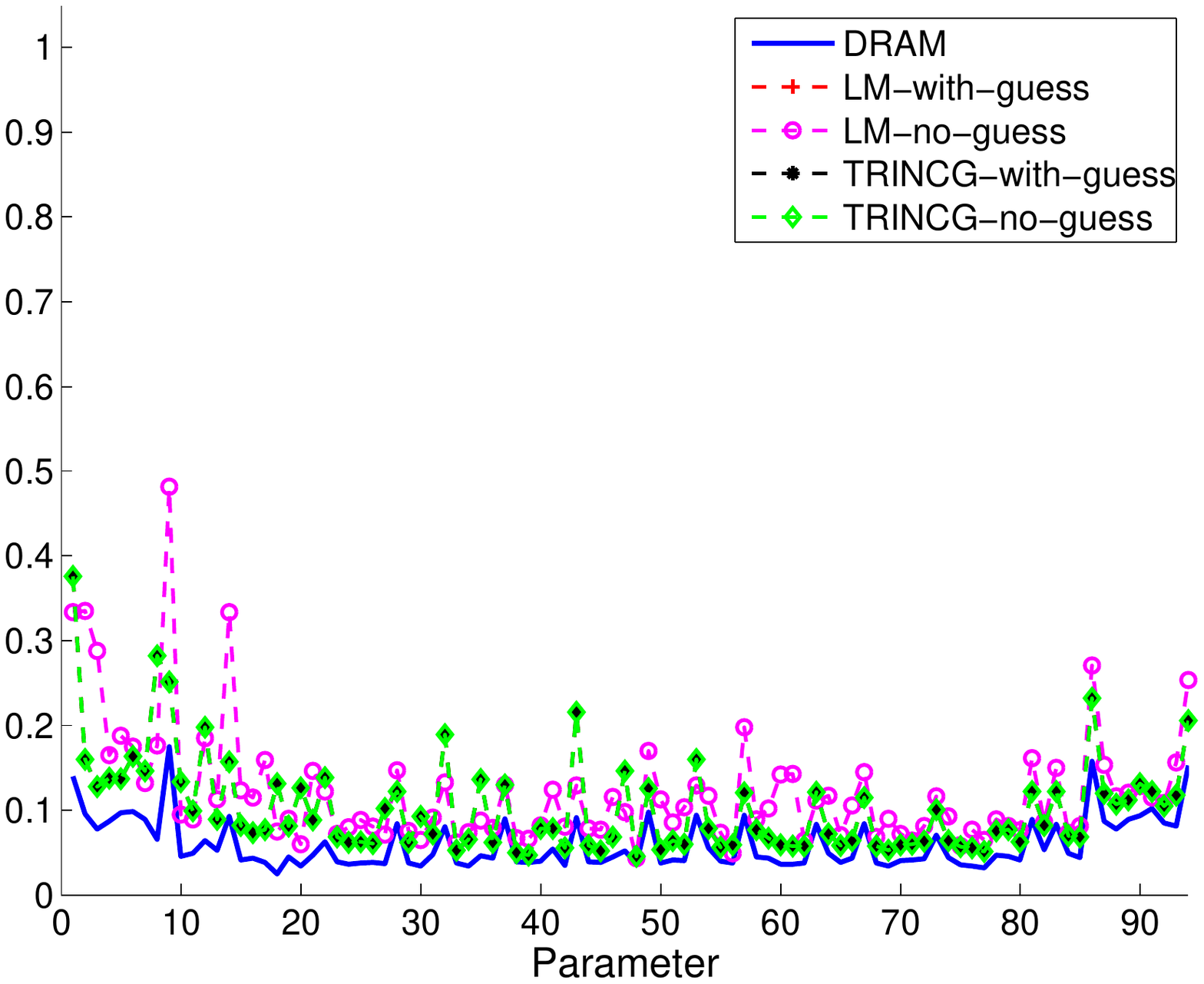}
}
\caption{($\alpha = 3.0$) Comparison of estimated statistics from all samplers. Top row shows the conditonal mean estimate between (a)DRAM and rMAP samples; and (b)DRAM and Metropolized rMAP samples. Bottom row shows corresponding comparison of variance estimates.}
\figlab{lowallcompare}
\end{center}
\end{figure}

\begin{figure}[h!t!b!]
\begin{center}
\subfigure[DRAM]{
\includegraphics[trim=1cm 6.0cm 2cm 7.1cm,clip=true,width=0.3\columnwidth]{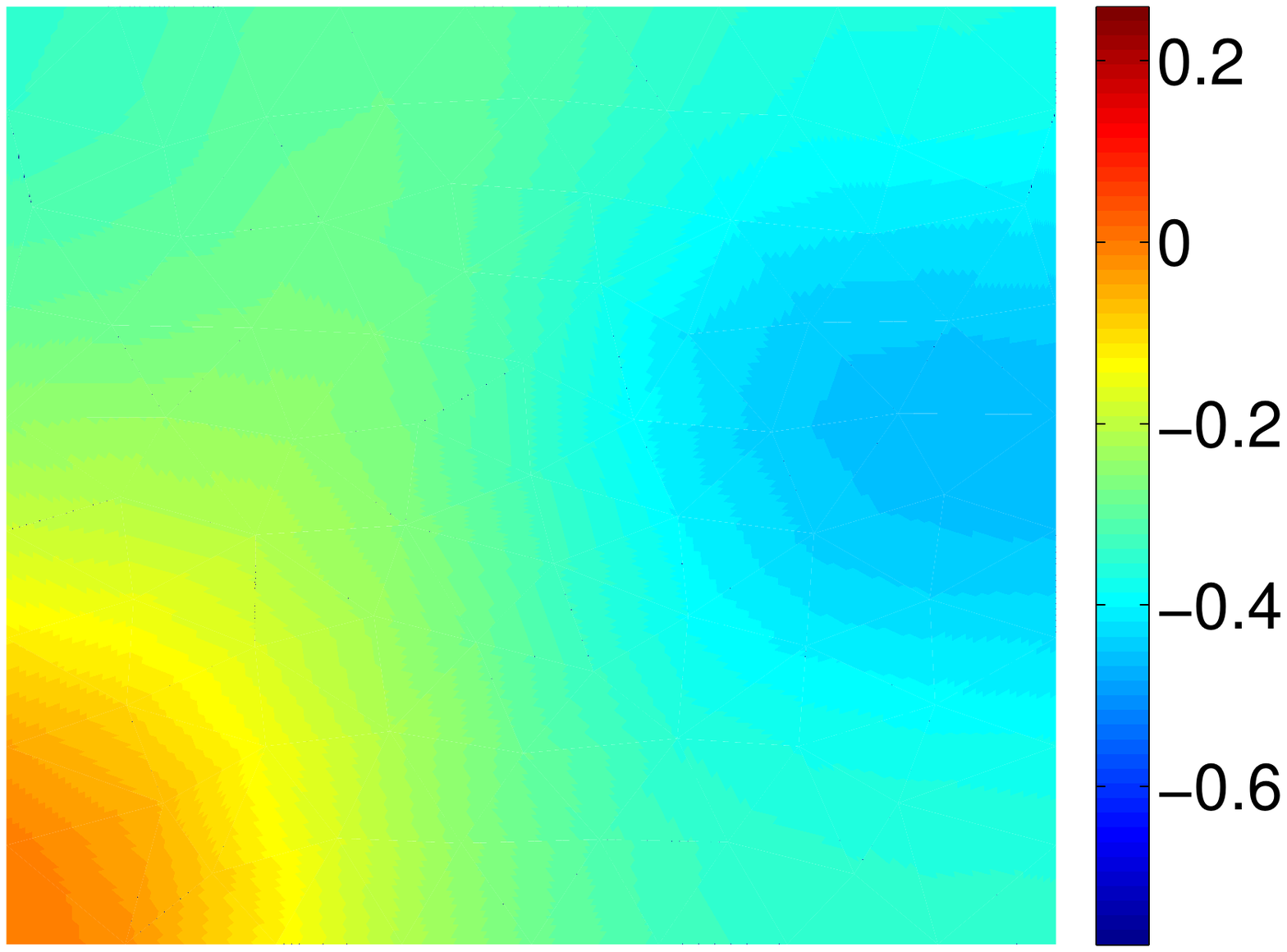}
 }\figlab{highdrammean}
 \subfigure[rMAP]{
\includegraphics[trim=1cm 6.0cm 2cm 7.1cm,clip=true,width=0.3\columnwidth]{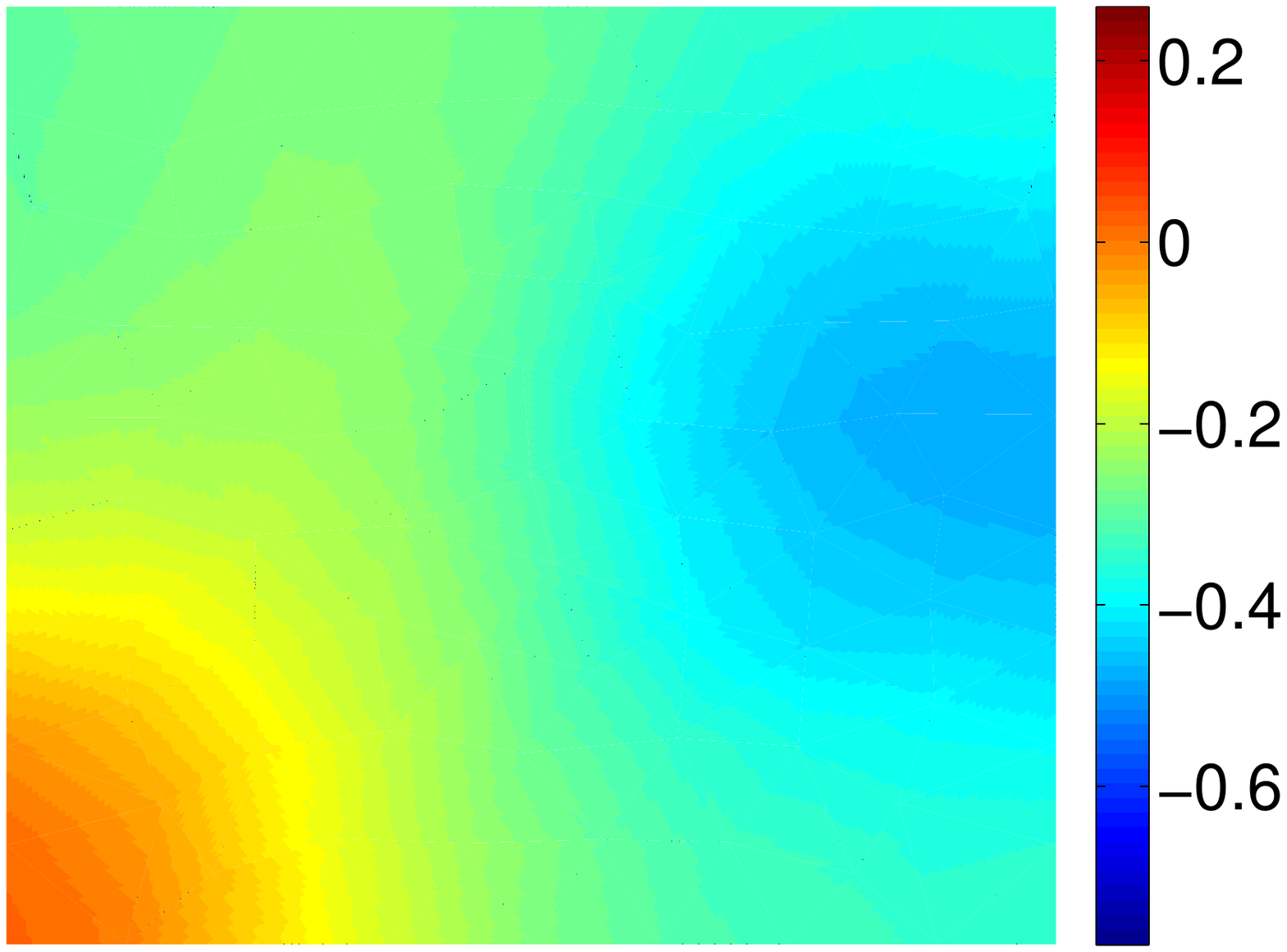}
 }\figlab{highrmlmean}
 \subfigure[Weighted-rMAP]{
\includegraphics[trim=1cm 6.0cm 2cm 7.1cm,clip=true,width=0.3\columnwidth]{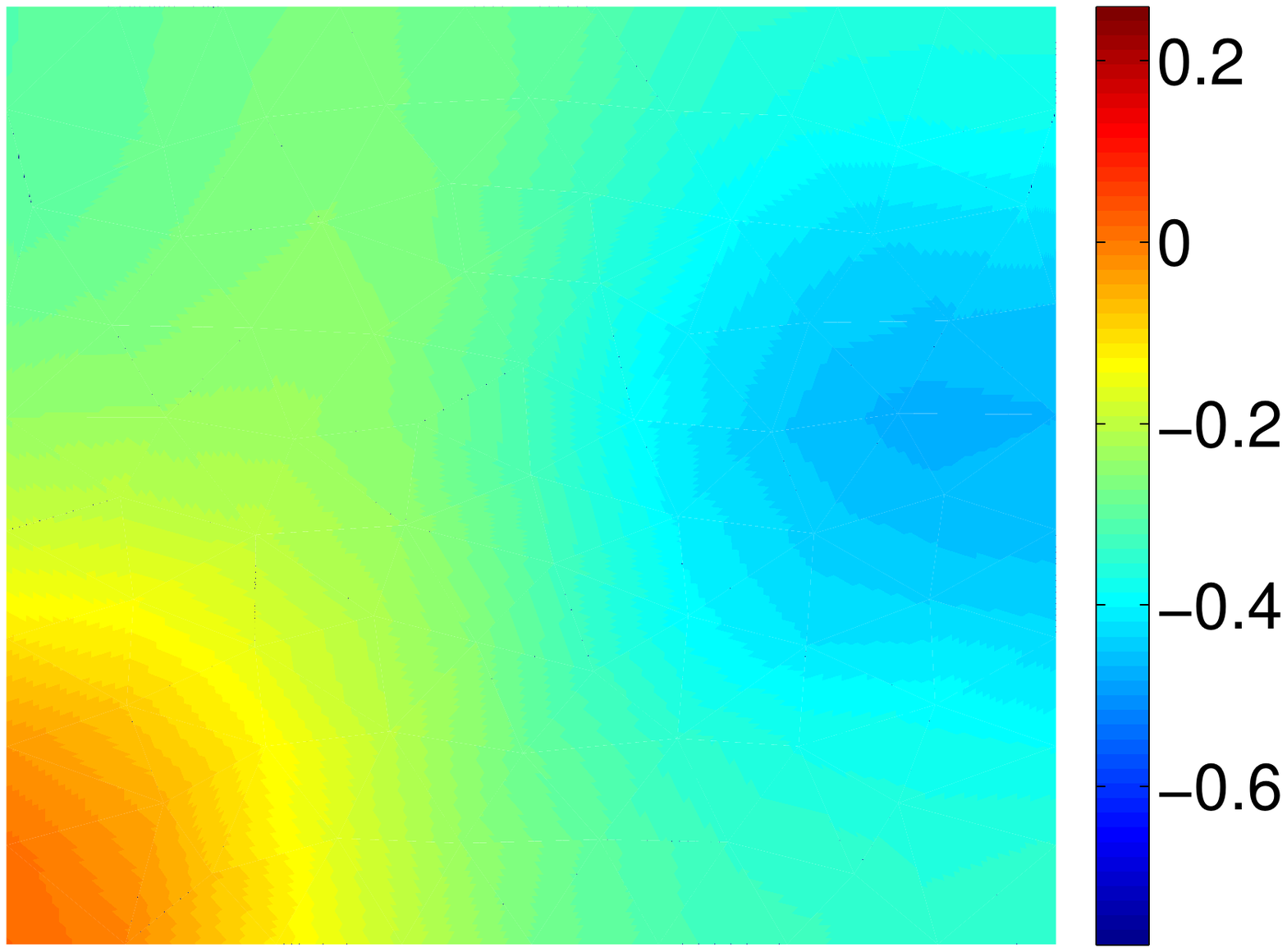}
 }\figlab{highrmlweightmean}
\subfigure[DRAM]{
\includegraphics[trim=1cm 6.0cm 2cm 7.1cm,clip=true,width=0.3\columnwidth]{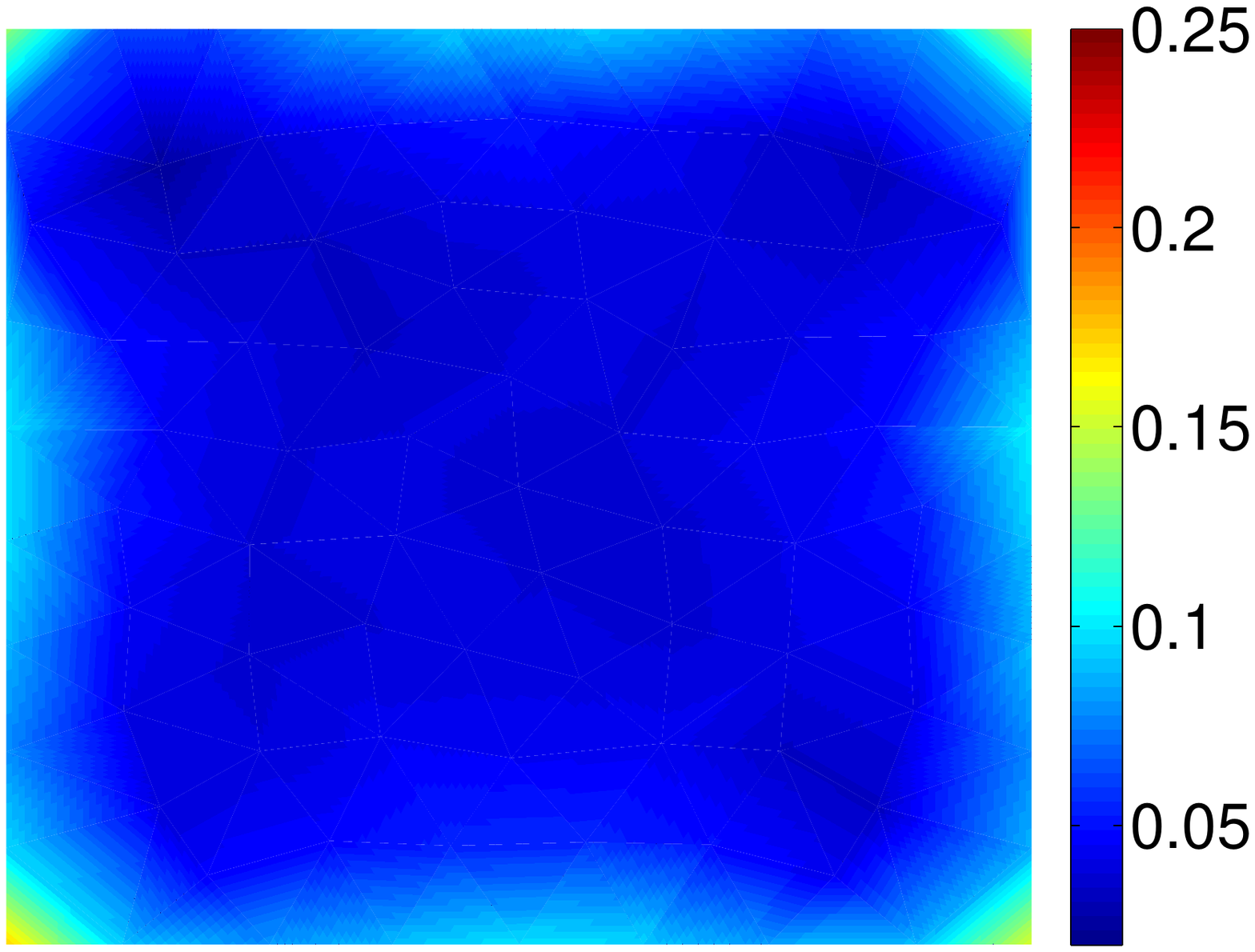}
 }\figlab{highdramvar}
 \subfigure[rMAP]{
\includegraphics[trim=1cm 6.0cm 2cm 7.1cm,clip=true,width=0.3\columnwidth]{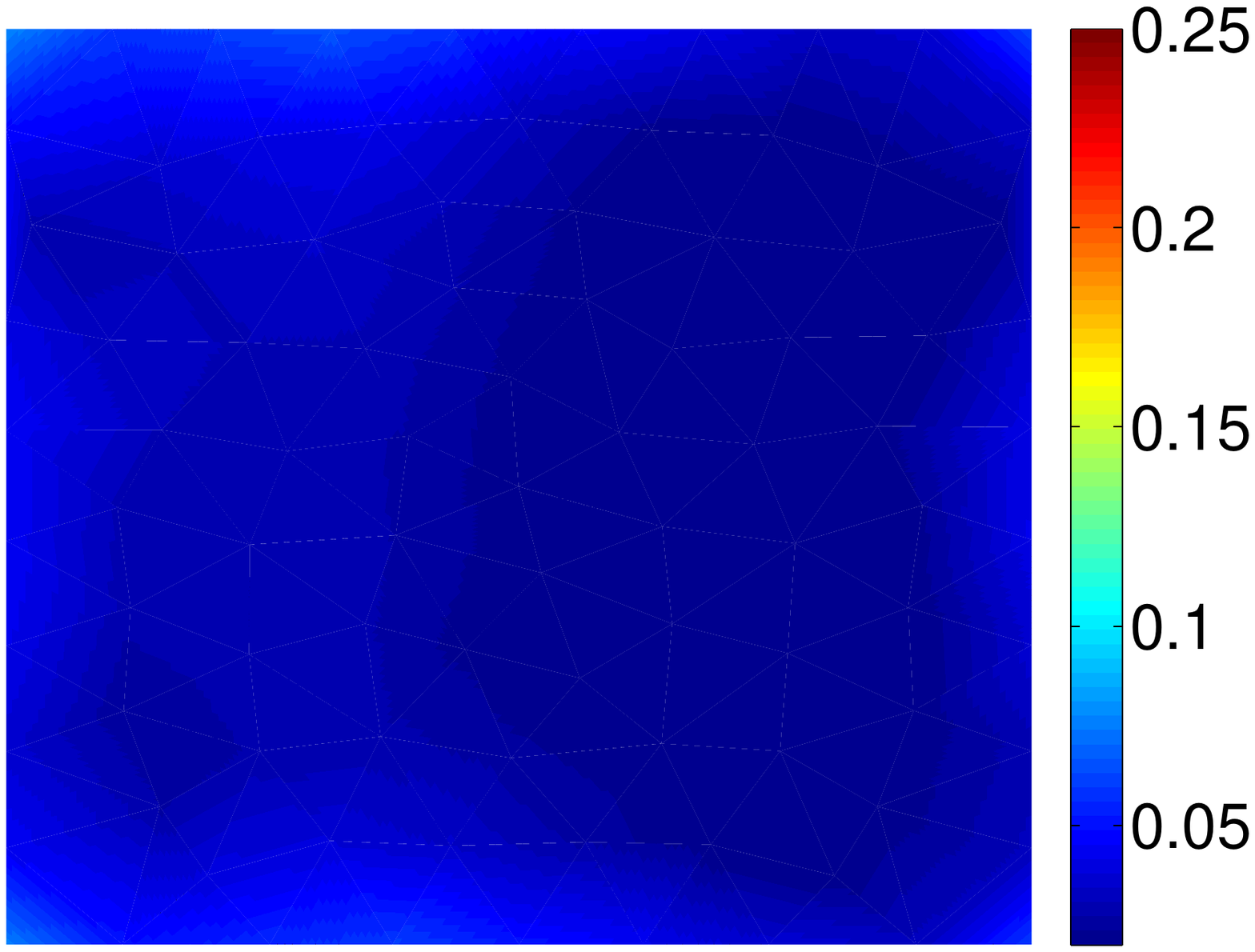}
 }\figlab{highrmlvar}
 \subfigure[Weighted-rMAP]{
\includegraphics[trim=1cm 6.0cm 2cm 7.1cm,clip=true,width=0.3\columnwidth]{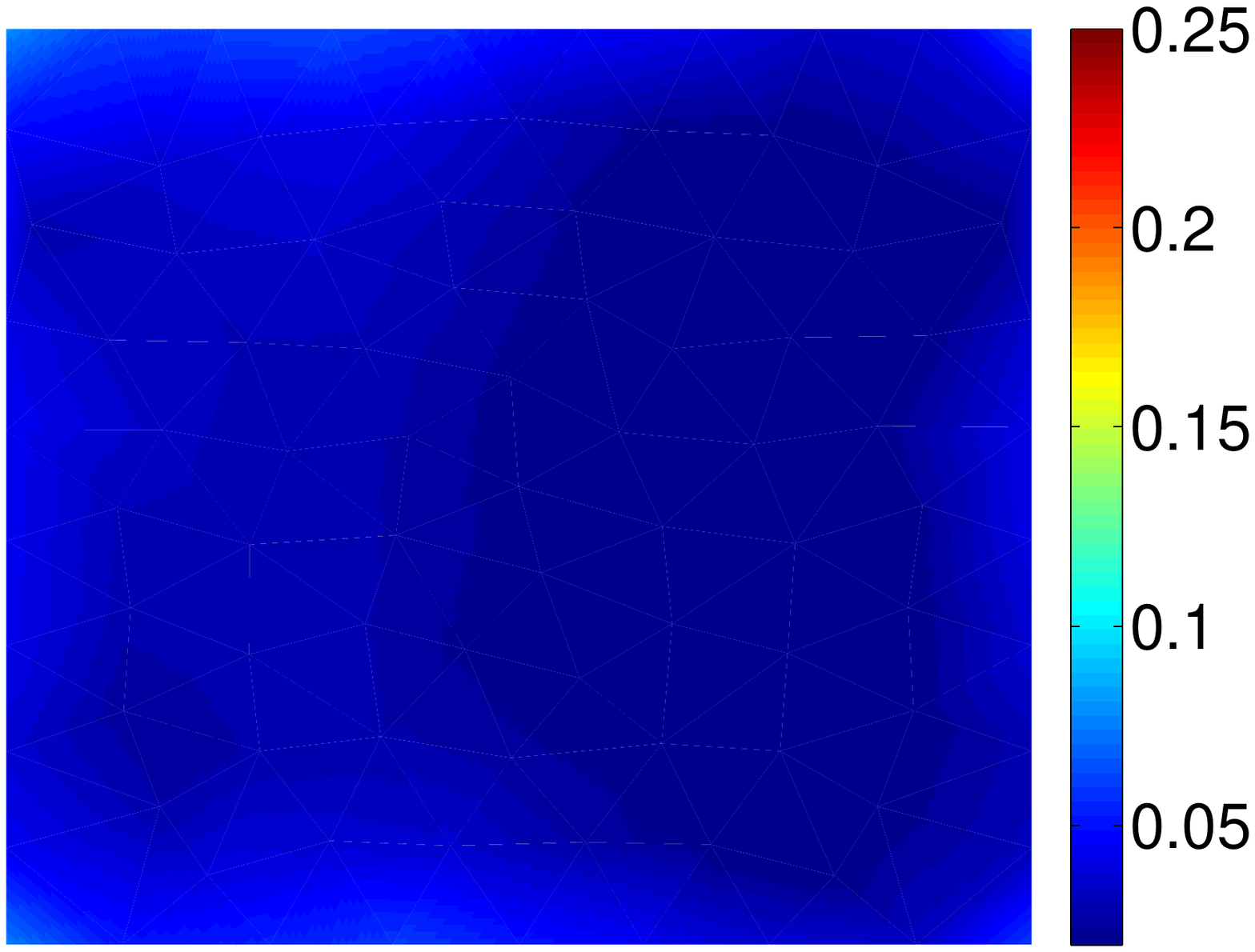}
 }\figlab{highrmlweightvar}
 \end{center}
 \caption{Comparison of estimates for the first experiment with $\alpha =
   8.0$: top row shows the conditional mean estimate. Bottom row shows comparison of variance estimation. All the rMAP samples are obtained from the TRINCG method with good initial guesses.}
 \figlab{highcomparedramrmap}
\end{figure}

\begin{figure}[h!t!b!]
\begin{center}
\subfigure[DRAM]{
\includegraphics[trim=1cm 6.0cm 2cm 7.1cm,clip=true,width=0.3\columnwidth]{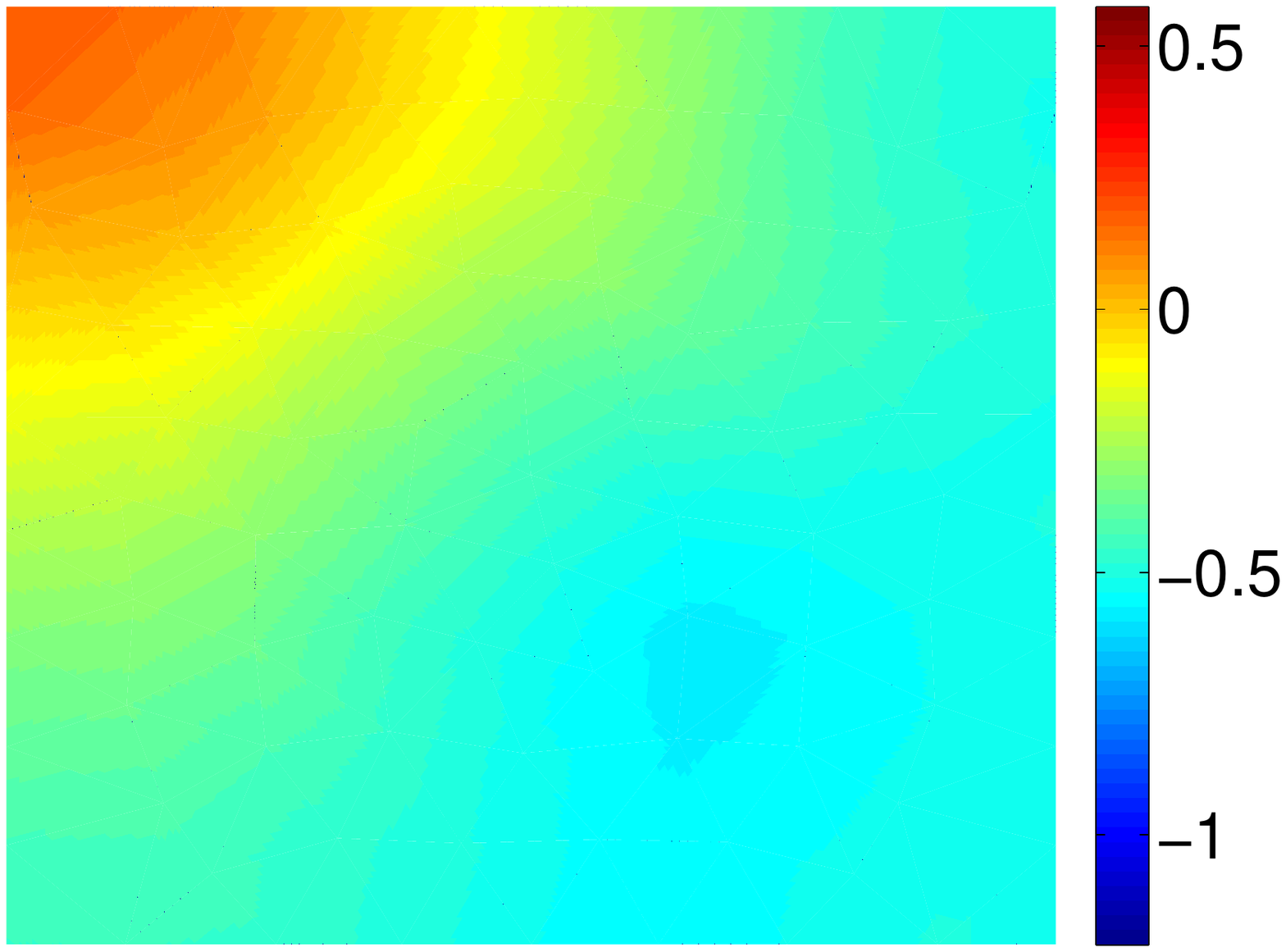}
 }\figlab{lowdrammean}
 \subfigure[rMAP]{
\includegraphics[trim=1cm 6.0cm 2cm 7.1cm,clip=true,width=0.3\columnwidth]{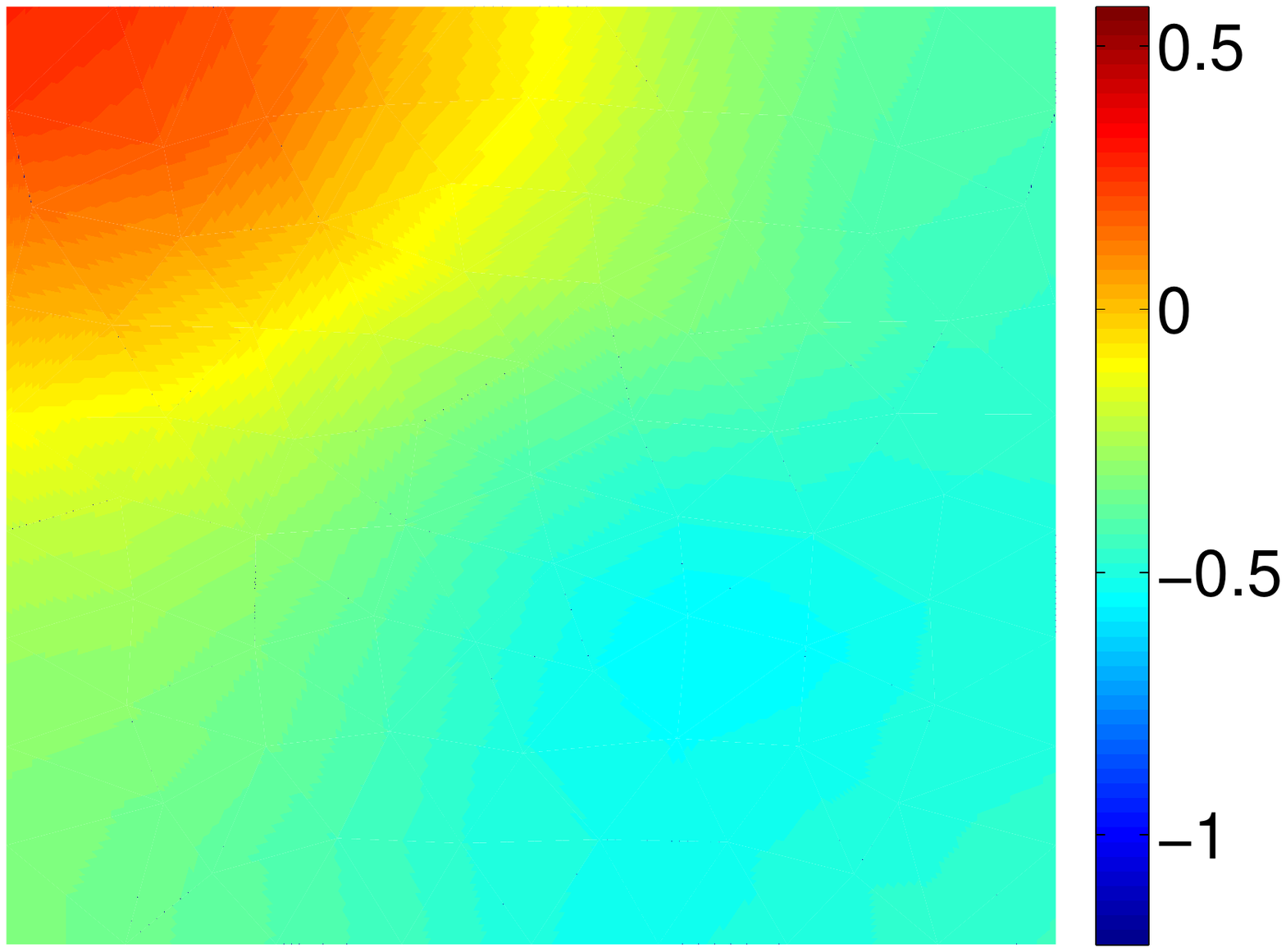}
 }\figlab{lowrmlmean}
 \subfigure[Weighted-rMAP]{
\includegraphics[trim=1cm 6.0cm 2cm 7.1cm,clip=true,width=0.3\columnwidth]{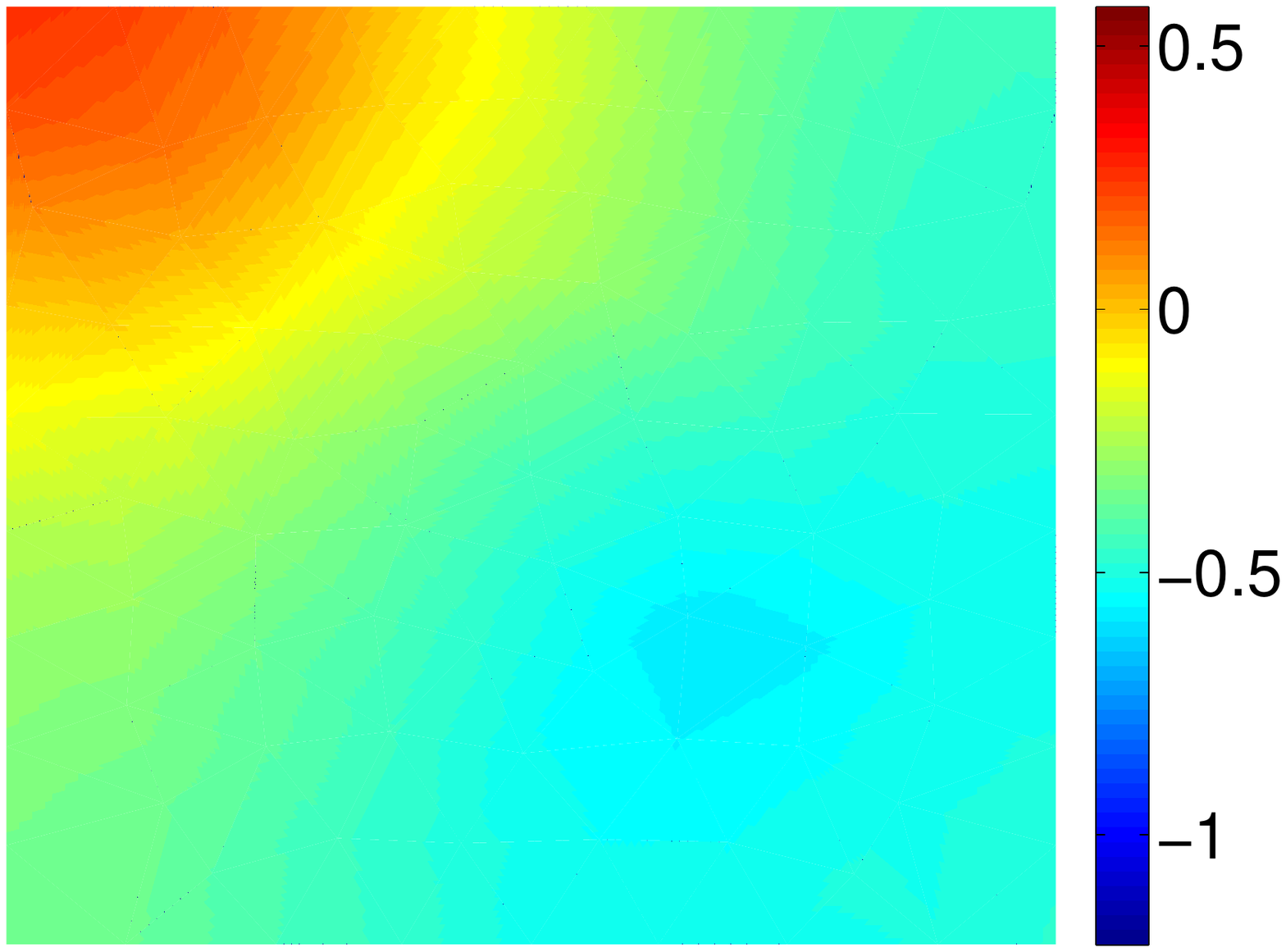}
 }\figlab{lowrmlweightmean}
\subfigure[DRAM]{
\includegraphics[trim=1cm 6.0cm 2cm 7.1cm,clip=true,width=0.3\columnwidth]{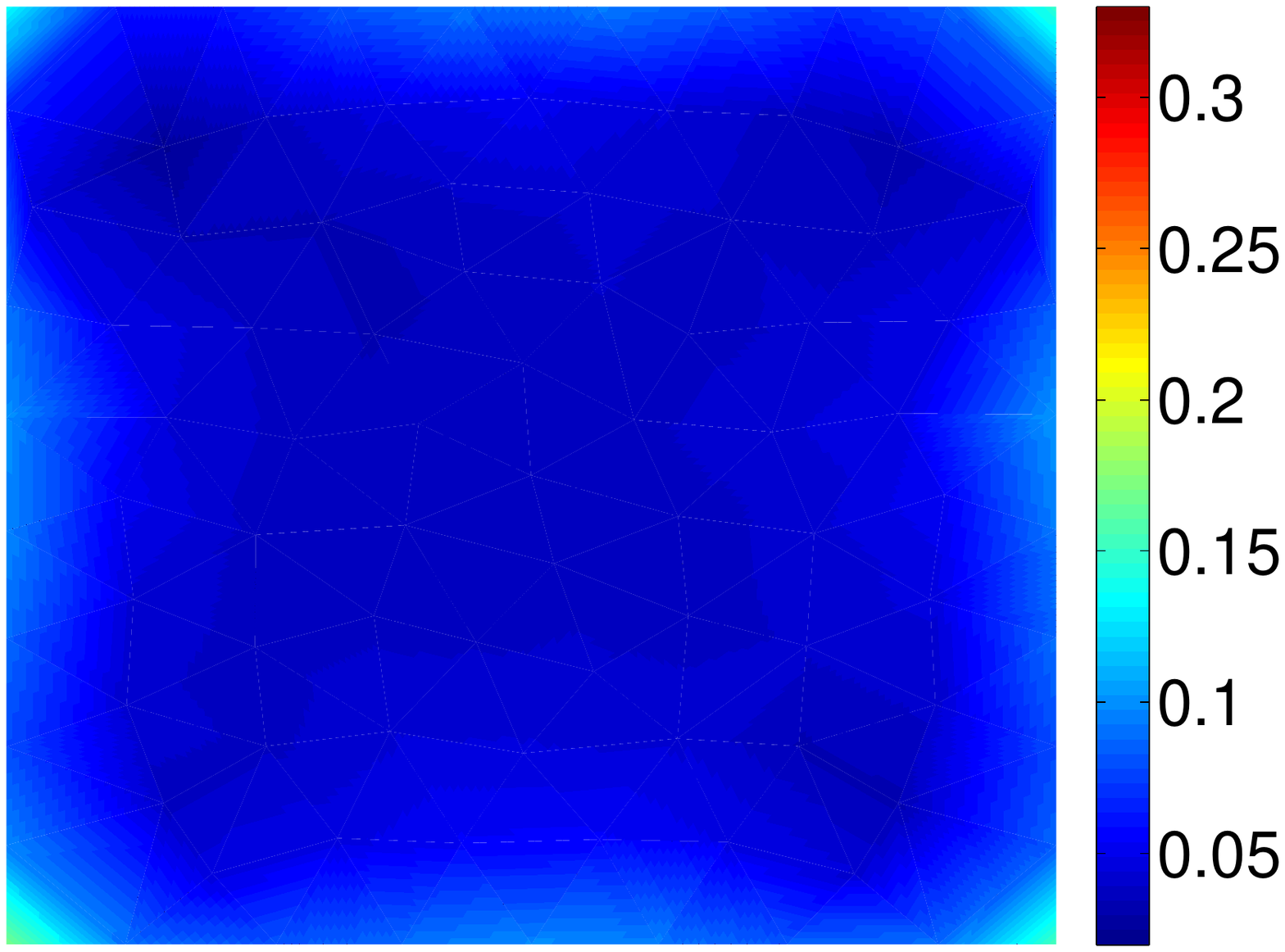}
 }\figlab{lowdramvar}
 \subfigure[rMAP]{
\includegraphics[trim=1cm 6.0cm 2cm 7.1cm,clip=true,width=0.3\columnwidth]{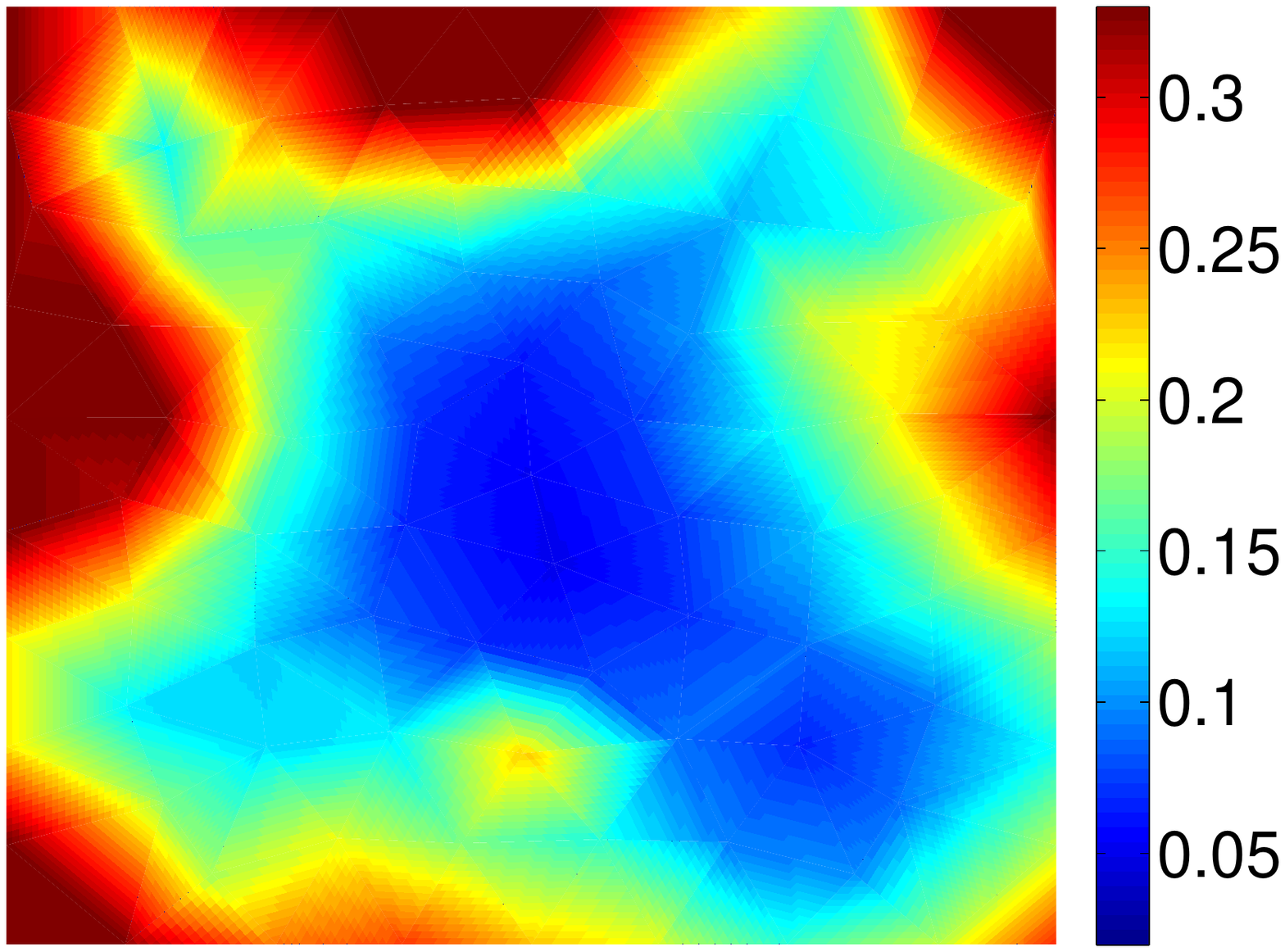}
 }\figlab{lowrmlvar}
 \subfigure[Weighted-rMAP]{
\includegraphics[trim=1cm 6.0cm 2cm 7.1cm,clip=true,width=0.3\columnwidth]{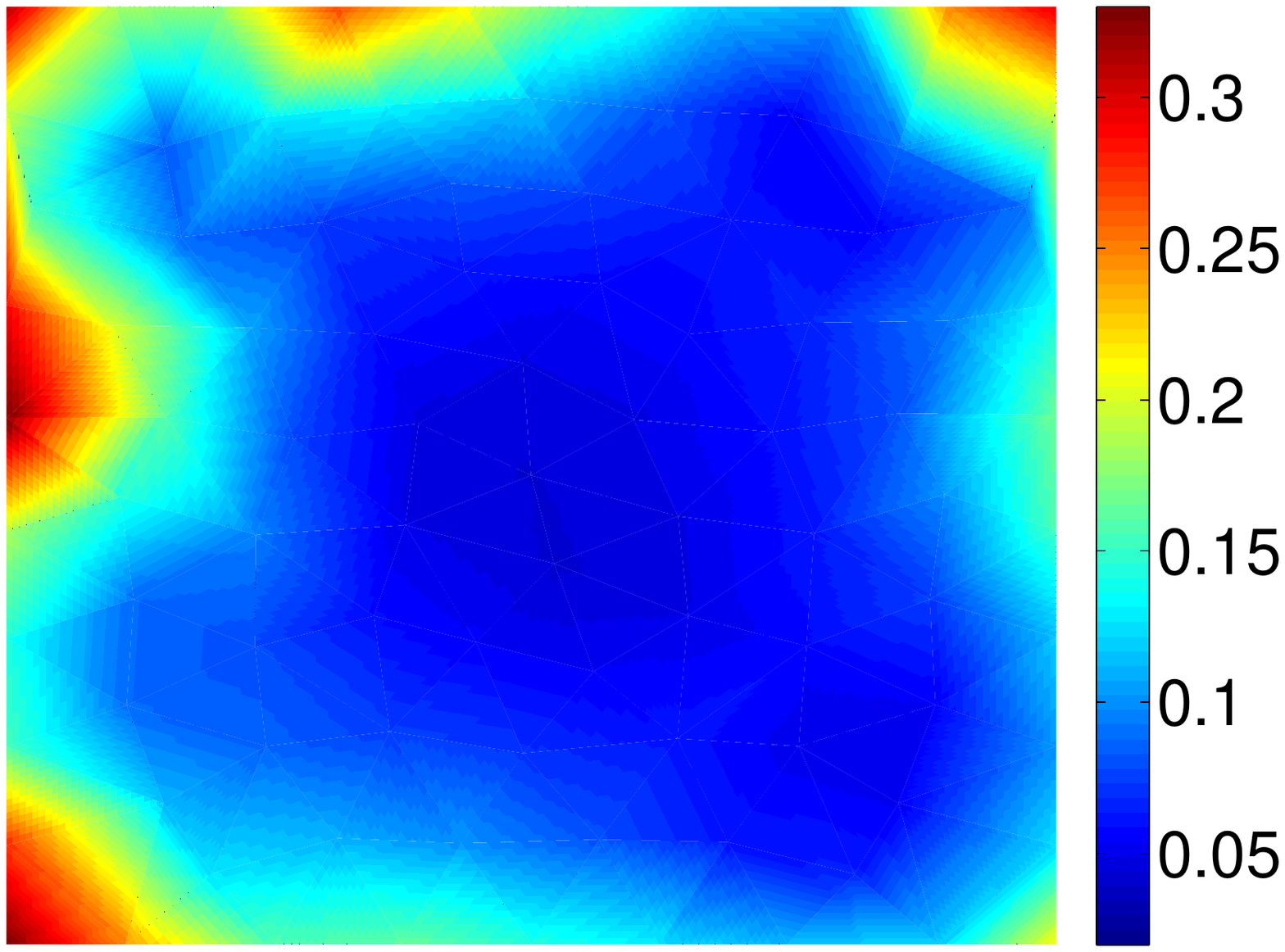}
 }\figlab{lowrmlweightvar}
 \end{center}
 \caption{Comparison of estimates for the second experiment with $\alpha =
   3.0$: top row shows the conditional mean estimate. Bottom row shows comparison of variance estimation. All the rMAP samples are obtained from the TRINCG method with good initial guesses.}
 \figlab{lowcomparedramrmap}
\end{figure}

Next, we analyze computational efficiency of the rMAP samplers. Note that the DRAM samples are highly correlated due to the large dimensionality of parameter space, meanwhile, as we will show below, rMAP samples are almost statistically independent, even for nonlinear problems. In order to compare computational performance between rMAP and DRAM as well, we utilize a concept of effective sample size (ESS) which is defined, for a sampler with a total of $n$ samples, as
\begin{equation}
 \text{ESS} = \frac{n}{\overline{\tau}},
 \eqnlab{ess}
\end{equation} 
and, for a model with $L$ parameters in total, the averaged integrated auto-correlation time (IACT) $\overline{\tau}$ is computed by
\begin{equation*}
 \overline{\tau} = \frac{1}{L}\sum_{l=1}^L\LRp{1 + 2\sum_{k=1}^{\infty}\tau(k)},
\end{equation*}
in which the auto-correlation function (ACF) $\tau(k)$ for a time series $X_t$ with mean value $\mu$ and variance $\sigma^2$ is defined as
\begin{equation*}
 \tau(k) = \frac{E\LRs{\LRp{X_t-\mu}\LRp{X_{t+k}-\mu}}}{\sigma^2}.
\end{equation*}
Since PDE solve is the most time-consuming part, we take
the total number of PDE solves (assuming the cost of solving forward,
adjoint, incremental forward, and incremental adjoint equations is
the same) as the measure of the computational cost. 

Figure \figref{iactcompare} shows the comparison of IACT for all parameters. For simplicity, we only show the IACT for rMAP samples obtained through the TRINCG together with good initial guesses. We then obtain the mean IACT's to be: $\overline{\tau}_{\text{DRAM}} = 461.90$, $\overline{\tau}_{\text{rMAP}} = 1.00$ and $\overline{\tau}_{\text{weighted-rMAP}} = 1.11$ for the first experiment, and $\overline{\tau}_{\text{DRAM}} = 564.32$, $\overline{\tau}_{\text{rMAP}} = 1.10$ and $\overline{\tau}_{\text{weighted-rMAP}} = 1.2743$ for the second experiment. Therefore, 1,000 rMAP samples are correspond to about 415,000 DRAM samples when $\alpha = 8.0$ and correspond to about 443,000 DRAM samples when $\alpha = 3.0$. As a result, for comparing computational costs in both experiments, we take into consideration 1,000 rMAP samples and 400,000 DRAM samples. 


\begin{figure}[h!t!b!]
\begin{center}
\subfigure[$\alpha = 8.0$]{
  \includegraphics[trim=1cm 6.0cm 2cm 7.1cm,clip=true,width = 0.45\columnwidth]{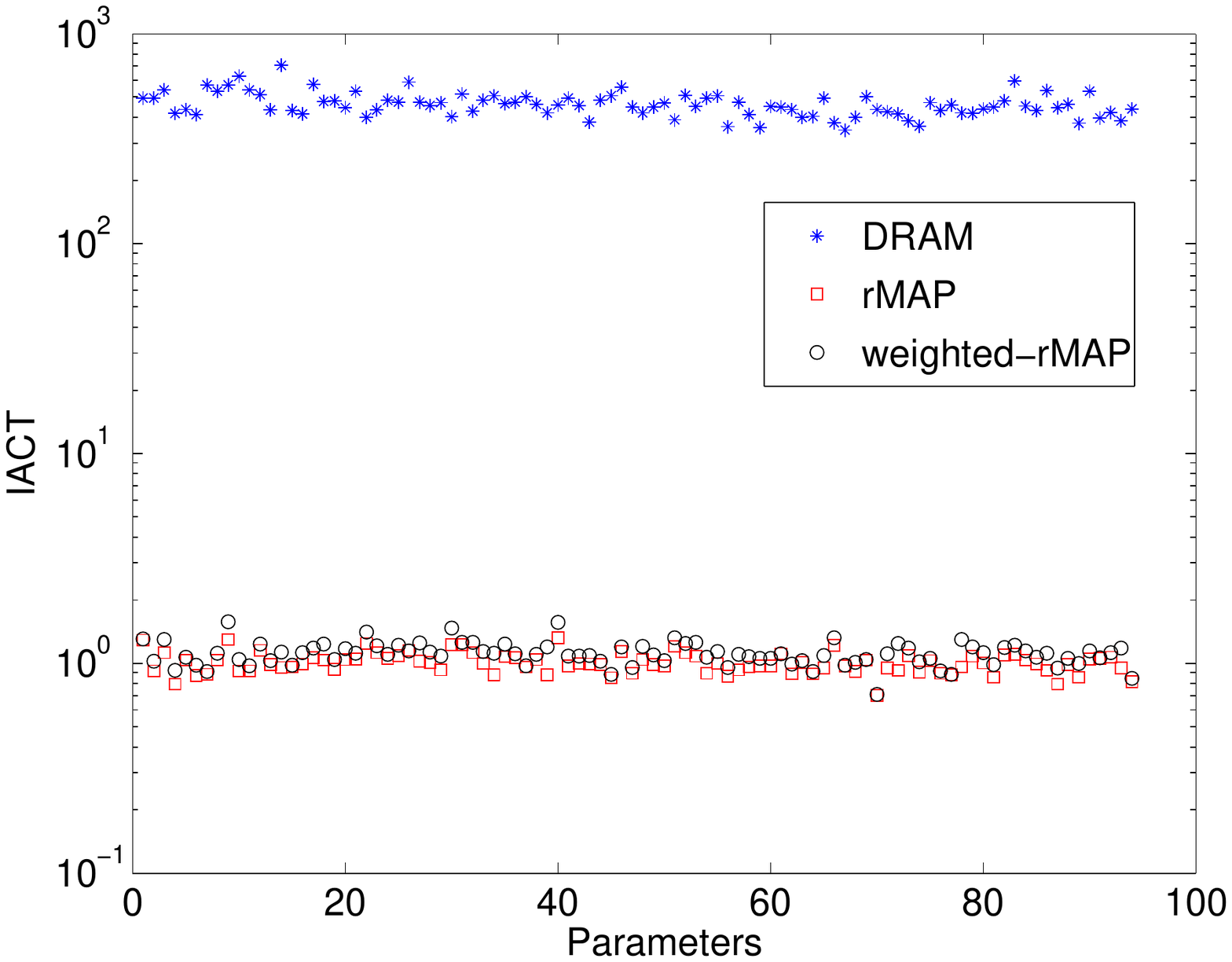}
  }
 \subfigure[$\alpha = 3.0$]{
  \includegraphics[trim=1cm 6.0cm 2cm 7.1cm,clip=true,width = 0.45\columnwidth]{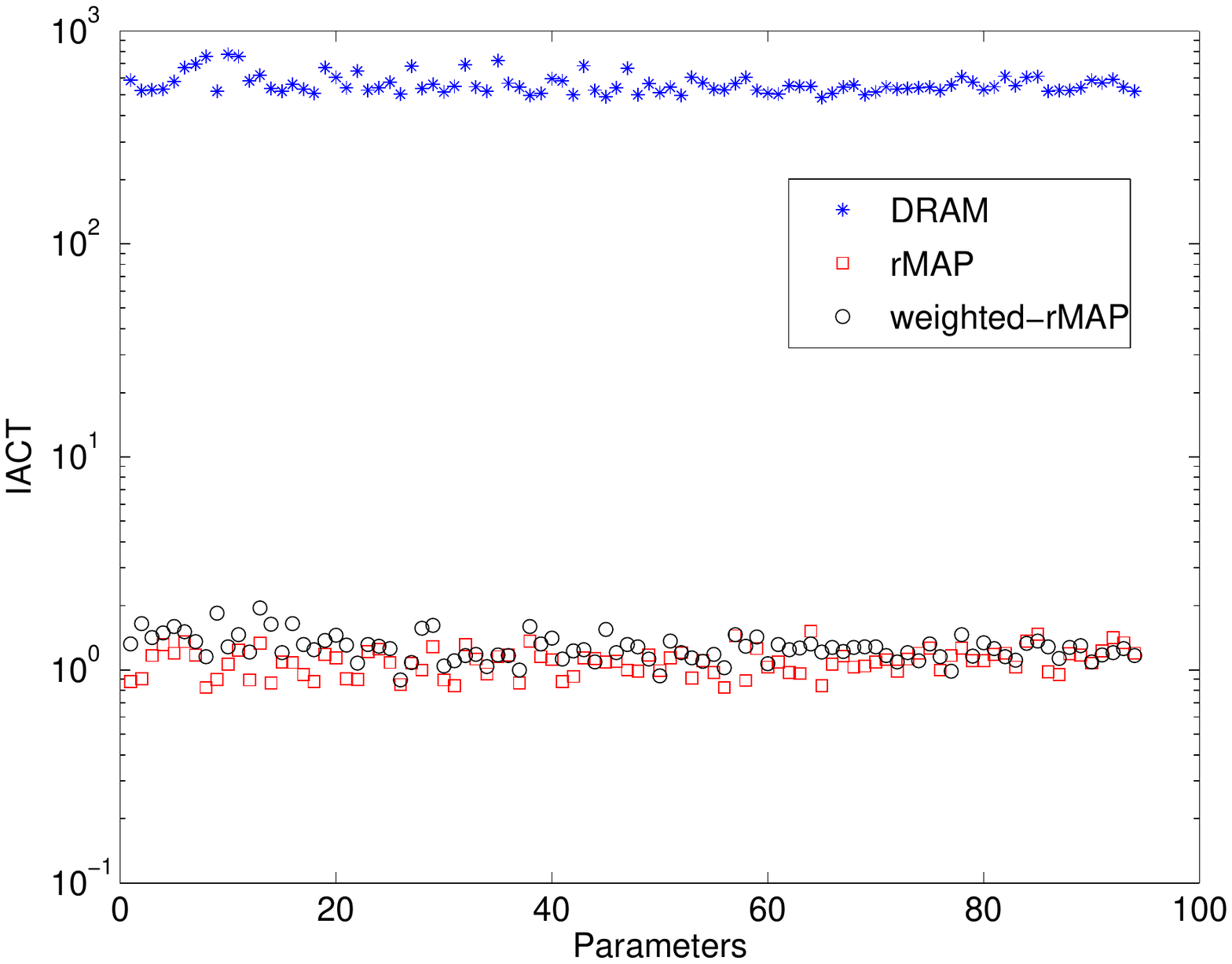}
  }

\caption{IACT for DRAM, rMAP and weighted-rMAP for all
  parameters. Samples from TRINCG with good initial guesses are used to
  compute these IACT's.}  \figlab{iactcompare}
\end{center}
\end{figure}

We compare costs of different sampling/optimization strategies in
Tables \tabref{rMAP1} and \tabref{rMAP2}. It is obvious that, compared
with the LM method, TRINCG improves efficiency both with and without a
warm-start---for example, when good initial guesses are adopted, LM is
about $60\%$ and about $320\%$ more expensive than TRINCG,
respectively. The importance of the warm-start strategy is also
salient in these tables. In particular, it speeds up the LM algorithm
significantly (at least five times) in the prior-dominated case such that
the rMAP sampler with LM performs better than DRAM with
statistically comparable number of samples. Nontheless, computational
costs of the LM method in Table \tabref{rMAP2} are more than the
corresponding DRAM sampler even with good initial
guesses, leaving TRINCG as the only tractable choice for rMAP sampling
this (``difficult'') likelihood-dominated problem.

\begin{table}[h!t!b!]
  \caption{Cost for the case $\alpha = 8.0$: the cost measured in the number
    of PDE solves in generating 1000 rMAP samples using four
    combinations: with either TRINCG or LM and with either warm-start
    strategy or not. As a comparison, the cost of DRAM sampler of
    getting 400,000 samples is shown in the last row.}
  \tablab{rMAP1}
  \begin{center}
    \begin{tabular}{|l||c|c|}
    \hline
      rMAP & good initial guess & random initial guess\\
      \hline
      \hline
      TRINCG & 208367 & 288254 \\
      \hline
      LM &  334618  & 1626279\\
      \hline
      \hline
      DRAM & 732917 & \\
      \hline
    \end{tabular}
  \end{center}
\end{table}

\begin{table}[h!t!b!]
  \caption{Cost for the case $\alpha = 3.0$: the cost measured in the number
    of PDE solves in generating 1000 rMAP samples using four
    combinations: with either TRINCG or LM and with either warm-start
    strategy or not. As a comparison, the cost of DRAM sampler of
    getting 400,000 samples is shown in the last row.}
  \tablab{rMAP2}
  \begin{center}
    \begin{tabular}{|l||c|c|}
    \hline
      rMAP & good initial guess & random initial guess\\
      \hline
      \hline
      TRINCG & 511956 & 568671 \\
      \hline
      LM &  1639601  & 2705973\\
      \hline
      \hline
      DRAM & 706017 &\\
      \hline
    \end{tabular}
  \end{center}
\end{table}

\section{Conclusions}
\seclab{conclusions} In this paper we present a randomized {\em
  maximum a posteriori} (rMAP) approach to approximately sample
posteriors of nonlinear Bayesian inverse problems in high dimensional
parameter spaces. The idea is to cast the standard MAP computation as
a stochastic optimization problem and use the sample average approach
to approximate the expectation.  We have shown that the randomized
maximum likelihood method is a special case of the proposed rMAP
method. The stochastic programming view point allows us to provide
additional theoretical results, in both finite and infinite dimensions
and for both linear and nonlinear inverse problems, leading to a
better understanding of rMAP. The appeal of the proposed approach is
that each rMAP sample requires solution of a PDE-constrained
optimization problem which can be carried out efficiently using a
trust region inexact Newton conjugate gradient method. To further
reduce the cost of each rMAP sample, we develop a warm start strategy
using sensitivity analysis via an efficient adjoint technique.
Viewing rMAP as an iterative stochastic Newton method reveals that
rMAP is in fact a move away from the inefficiencies of
random-walk/diffusion processes altogether, toward powerful
optimization methods that use derivative information to traverse the
posterior.

We have made a connection between the rMAP approach and a closely
related randomize-then-optimize method. We show that they are
identical for linear inverse problems but different for nonlinear
ones. Since rMAP samples are approximate samples of the posterior, we
present an approximate Metropolization to reduce the bias. FEM discretization of the infinite dimensional Bayesian inverse problem, solving optimization problems at each sampling step with the trust region inexact Newton
conjugate gradient method, as well as a sensitivity analysis based warm start strategy are also discussed. Analytical and numerical
experiments are presented to confirm various theoretical results and
demonstrate the potential of the rMAP approach for difficult nonlinear
Bayesian inverse problems.

\bibliography{ccgo,omar}

\end{document}